\documentclass[preprint]{article}
\usepackage{neurips_2026}

\usepackage{amsmath,amsfonts,bm}

\def\eqref#1{equation~\ref{#1}}

\def\floor#1{\lfloor #1 \rfloor}
\def\1{\bm{1}}

\DeclareMathAlphabet{\mathsfit}{\encodingdefault}{\sfdefault}{m}{sl}
\SetMathAlphabet{\mathsfit}{bold}{\encodingdefault}{\sfdefault}{bx}{n}

\newcommand{\E}{\mathbb{E}}

\newcommand{\Var}{\mathrm{Var}}

\DeclareMathOperator*{\argmin}{arg\,min}

\usepackage{hyperref}
\hypersetup{
  colorlinks=true,
  citecolor=green!80!black,
  linkcolor=green!80!black,
  urlcolor=black
}

\usepackage[most]{tcolorbox}
\usepackage{inconsolata} 
\usepackage{url}
\usepackage{color}
\usepackage{xcolor}
\usepackage{graphicx}
\usepackage{wrapfig}
\usepackage{algorithm}
\usepackage{algorithmic}
\usepackage{amssymb}
\usepackage{comment}
\usepackage{tikz}
\usetikzlibrary{patterns,decorations.pathreplacing}
\usepackage{subcaption} 
\renewcommand{\algorithmicrequire}{\textbf{Input:}}
\renewcommand{\algorithmicensure}{\textbf{Output:}}

\newcommand{\SK}{\mathsf{sk}}
\newcommand{\BER}{\text{BER}}
\newtheorem{theorem}{Theorem}[section]
\newtheorem{proof}{proof}[section]

\newtheorem{lemma}[theorem]{Lemma}

\usepackage{mathtools}
\usepackage{amsmath}
\usepackage{booktabs,makecell,multirow,array,siunitx}
\newcolumntype{L}{@{\hspace{1.5pt}}l@{\hspace{1.5pt}}}
\newcolumntype{C}{@{\hspace{1.5pt}}c@{\hspace{1.5pt}}}
\newcommand{\smallci}[1]{\mbox{\scriptsize$[$#1$]$}}

\newcommand{\meanppl}[1]{\mbox{\footnotesize \num{#1}}}

\title{MirrorMark: Generalizable Mirrored Sampling for Multi-bit LLM Watermarking}

    \author{
  Ya Jiang$^{1,2}$ \quad
  Massieh Kordi Boroujeny$^{2,3}$ \quad
  Surender Suresh Kumar$^{2,3}$ \quad
  Kai Zeng$^{1,2,3}$ \\
  \\
  George Mason University, Fairfax, VA, USA \quad $^1$Department of Computer Science \\
  $^2$Wireless Cyber Center \quad
  $^3$Department of Electrical and Computer Engineering \\
  \texttt{\{yjiang25, mkordibo, skumar43, kzeng2\}@gmu.edu}
}

\begin{document}
\maketitle

\begin{abstract}
As large language models (LLMs) become integral to applications such as question answering and content creation, reliable content attribution has become increasingly important. Watermarking is a promising approach, but most existing methods either provide only binary signals or achieve multi-bit embedding by distorting the generation distribution. We propose MirrorMark, a generalizable mapping-centric approach for multi-bit LLM watermarking. MirrorMark separates the symbol mapping rule from the base watermarking sampler and maps each symbol to a mod-1 mirroring transformation of a detector-reproducible pseudorandom object, such as sampling values or permutation ranks. A binary-tokenizer analysis shows that complementary mappings yield larger matched--mismatched score gaps than independent-key or shift-based mappings. When composed with a distortion-free base sampler, MirrorMark preserves the token probability distribution by design and maintains text quality in practice. To support practical payload embedding, we introduce a Context-Anchored Balanced Scheduler (CABS), which balances token assignments across message positions while localizing edit effects. We further provide theoretical EER analyses for two representative sampler instantiations. Experiments show that MirrorMark achieves strong detectability and bit accuracy while maintaining text quality comparable to non-watermarked generation.
\end{abstract}

\section{Introduction}

The rapid adoption of large language models (LLMs) such as ChatGPT~\citep{chatgpt}, LLaMA~\citep{touvron2023llama}, and Gemini~\citep{team2023gemini} has enabled high-quality text generation for question answering, content creation, and programming assistance~\citep{jo2023promise, austin2021program, perkins2023academic}. At the same time, increasingly human-like synthetic text raises concerns about authenticity, ownership, and responsible use. Reliable content attribution has therefore become important for mitigating misinformation, protecting intellectual property, and supporting accountability in AI deployment~\citep{Reducing_Risks, sok_zhao}.

Watermarking verifies the provenance of LLM-generated content by embedding imperceptible signals during generation that can later be detected. Most existing methods are zero-bit schemes that only answer whether a text is watermarked~\citep{kirchenbauer2023watermark,Aaronson2023,christ2023undetectable,kuditipudi2023robust,Dathathri2024,hu2023unbiased,wu2023dipmark,liu2024an,zhao2025can,he2025theoretically}. They include distortion-based reweighting methods~\citep{kirchenbauer2023watermark,zhao2023provable,liu2024an,zhao2025can}, unbiased or stealthy reweighting methods~\citep{hu2023unbiased,wu2023dipmark}, and distortion-free sampling methods~\citep{Aaronson2023,fugumbelsoft,kuditipudi2023robust,Dathathri2024,he2025theoretically}. A detailed discussion is provided in Appendix~\ref{related-zero-bit}.

While zero-bit watermarking is effective for provenance verification, its binary nature cannot encode metadata such as model identity, generation time, or usage context. This motivates multi-bit watermarking, which embeds payload information for richer attribution and auditing. Existing multi-bit schemes follow different design paths. Distortion-based methods modify the generation distribution to encode information~\citep{wang2023towards,yoo2023advancing,qu2024provably}. StealthInk~\citep{jiang2025stealthink} extends DiPmark, an unbiased reweighting-based zero-bit watermark, to multi-bit watermarking by assigning each message to a contiguous interval in a context-seeded vocabulary permutation and reweighting tokens according to that interval during generation. Distortion-free methods~\citep{zamir2024excuse,boroujeny2024multi} preserve the original distribution by building on the binary-tokenizer setting of~\cite{christ2023undetectable}. These works demonstrate the feasibility of payload embedding, but the role of the symbol mapping rule itself remains underexplored.

A central challenge in multi-bit watermarking is that decoding is no longer a binary hypothesis test. The detector must identify the correct message among many alternatives while controlling false positives. A naive extension assigns an independent key to each message, but incorrect hypotheses then behave like independent noise. Existing structured mappings are often \textbf{location-based}. DISC~\citep{boroujeny2024multi} and ThreeBricks~\citep{fernandez2023three} use cyclic shifts, RSBH~\citep{qu2024provably} uses symbol-dependent shifts based on KGW, and StealthInk uses contiguous intervals in a context-seeded permutation. These shift-based and interval-based mappings place messages at different locations in a shared pseudorandom space, but they do not explicitly create complementary matched--mismatched hypotheses. As a result, strong bias may be needed to achieve high bit accuracy, which can degrade text quality.

In this work, we take a mapping-centric view of multi-bit watermarking. We observe that many in-generation watermarks are driven by a pseudorandom object generated from the secret key and context. This object is reproducible by the detector and follows a known distribution under non-watermarked text, enabling statistical testing. Depending on the base watermark, it may be token-level sampling randomness or a context-seeded vocabulary permutation used for reweighting. We separate the \emph{symbol mapping rule}, which transforms this pseudorandom object according to the embedded symbol, from the \emph{base watermarking sampler}, which uses the transformed object to sample or reweight tokens. This separation highlights the role of symbol mapping in multi-bit reliability, as a good mapping should shape the matched and mismatched score distributions so that the true symbol is easy to distinguish from incorrect alternatives.

We first analyze this effect in a binary-tokenizer setting, where the sampler is fixed and different mappings can be compared directly. The analysis shows that a swapping-style mapping creates stronger matched--mismatched separation than independent-key or shift-based mappings. Motivated by this insight, we propose MirrorMark, a generalizable mapping-centric approach for multi-bit LLM watermarking. MirrorMark maps each symbol to a mod-1 mirroring transformation of the detector-reproducible pseudorandom object. The transformation is measure-preserving, so it preserves the distribution of the pseudorandom object. Thus, MirrorMark preserves the token distribution for distortion-free samplers, and can also be applied to permutation-based reweighting by mirroring normalized token positions while preserving permutation uniformity.

To support practical payload embedding, MirrorMark further uses the Context-Anchored Balanced Scheduler (CABS), which balances token assignments across message positions while localizing the impact of edits. We conduct controlled comparisons under both Gumbel-max sampling and permutation-based reweighting, comparing mirroring with shift- or interval-based mappings such as ThreeBricks and StealthInk. We also instantiate MirrorMark with two representative zero-bit samplers, AA~\citep{Aaronson2023} and SynthID~\citep{Dathathri2024}, and derive theoretical EER estimates for both instantiations. Experiments show that MirrorMark maintains text quality comparable to non-watermarked generation while achieving strong detectability, high bit accuracy, and improved robustness under editing attacks.

\section{A Binary-Tokenizer View of Multi-Bit Symbol Mapping}
\label{sec:binary_mapping_view}

\subsection{Decomposing Multi-Bit Watermarking: Mapping Rule vs. Watermarking Sampler}
\label{sec:decomposition}

We present a mapping-centric view of multi-bit watermarking by separating two design choices that are often coupled: the \emph{symbol mapping rule} and the \emph{base watermarking sampler}. This separation allows us to compare different symbol mappings under the same watermarking mechanism.

Let $p_{\mathrm{LM}}(\cdot \mid x_{<t})$ denote the next-token distribution at generation step $t$, and let $\mathcal{V}=\{x_1,\ldots,x_V\}$. We denote $\mathbf{p}_t(i):=p_{\mathrm{LM}}(x_i\mid x_{<t})$. A watermarking sampler is driven by a detector-accessible random object $\mathbf{z}_t$ generated from the secret key and the context. Depending on the underlying watermarking mechanism, $\mathbf{z}_t$ can take different forms. For randomness-based samplers such as Gumbel-max~\citep{Aaronson2023} and tournament sampling~\citep{Dathathri2024}, $\mathbf{z}_t$ consists of pseudorandom values used during token selection. For permutation-based reweighting schemes, such as DiPmark~\citep{wu2023dipmark}, $\mathbf{z}_t$ can be a context-seeded vocabulary permutation. 

We define a base watermarking sampler as
\begin{equation}
\begin{aligned}
x_t \sim \mathsf{Samp}_{\mathcal{A}}(\mathbf{p}_t,\mathbf{z}_t),
\end{aligned}
\label{eq:abstract_sampler}
\end{equation}
where $\mathcal{A}$ specifies how $\mathbf{z}_t$ is used to generate the next token. For randomness-based samplers, $\mathbf{z}_t$ directly controls token selection. For permutation-based reweighting schemes, $\mathbf{z}_t$ first reshapes or reweights the next-token distribution, and the token is then sampled from the resulting distribution.

A multi-bit watermark embeds a symbol $M\in\mathcal{M}=\{0,1,\ldots,2^m-1\}$, where $m$ is the number of bits carried by the symbol. The message specifies how the watermark random object is transformed before it is used by the base sampler. We define a symbol mapping rule as
\begin{equation}
\begin{aligned}
\mathcal{R}=\{\mathcal{R}_M:\mathcal{Z}\rightarrow\mathcal{Z}\}_{M\in\mathcal{M}},
\qquad
\mathbf{z}_{t,M}=\mathcal{R}_M(\mathbf{z}_t),
\end{aligned}
\label{eq:mapping_family}
\end{equation}
where $\mathcal{Z}$ is the space of detector-accessible watermark randomness. To embed $M^\star$, the encoder samples
\begin{equation}
\begin{aligned}
x_t \sim \mathsf{Samp}_{\mathcal{A}}(\mathbf{p}_t,\mathbf{z}_{t,M^\star}).
\end{aligned}
\label{eq:watermarked_sampler}
\end{equation}

Given a generated token $x_t$, the detector reconstructs $\mathbf{z}_t$ from a pseudorandom function (PRF) by seeding the secret key $\mathrm{sk}$ and the observed context. For each candidate message $M$, it applies the corresponding mapping $\mathcal{R}_M$ and computes a sampler-specific score
\begin{equation}
\begin{aligned}
S_M(x_t,\mathrm{sk})
=
s_{\mathcal{A}}\big(x_t,\mathbf{p}_t,\mathcal{R}_M(\mathbf{z}_t)\big),
\end{aligned}
\label{eq:token_score_general}
\end{equation}
where $s_{\mathcal{A}}(\cdot)$ is the score function induced by the base watermarking sampler and detector. For a sequence of $T$ eligible tokens, the sequence-level score and decoded message are
\begin{equation}
\begin{aligned}
C_M(x_{1:T},\mathrm{sk})
=
\frac{1}{T}\sum_{t=1}^{T}S_M(x_t,\mathrm{sk}),\qquad
\widehat{M}
=
\arg\max_{M\in\mathcal{M}}C_M(x_{1:T},\mathrm{sk}).
\end{aligned}
\label{eq:sequence_score_and_decoder}
\end{equation}

This formulation covers both sampling-randomness and permutation-rank watermarks. In Gumbel-max or tournament sampling, $\mathbf{z}_t$ consists of token-level or layer-wise PRF values. In permutation-based reweighting, $\mathbf{z}_t$ can be the vocabulary permutation $\theta_t$, or normalized token positions in that permutation. A bijective mapping on this space preserves the reference distribution of $\mathbf{z}_t$, so the resulting multi-bit construction can inherit the base sampler's distributional property. The mapping rule then determines the matched--mismatched score separation. For an embedded message $M^\star$, we define
\begin{equation}
\begin{aligned}
\Delta(M^\star)
=
\mathbb{E}\left[
C_{M^\star}(x_{1:T},\mathrm{sk})
-
\max_{M\neq M^\star}C_M(x_{1:T},\mathrm{sk})
\;\middle|\;
M^\star
\right].
\end{aligned}
\label{eq:matched_mismatched_gap}
\end{equation}
Different symbol mapping rules can induce different matched and mismatched score distributions under the same base watermarking sampler. We next use a binary-tokenizer setting, where $\mathbf{z}_t$ is instantiated as a scalar uniform random variable, to isolate this effect and motivate the mirroring mapping construction.
\subsection{Binary-Tokenizer Analysis of Symbol Mapping Rules}
\label{sec:binary_tokenizer_mapping}

We adopt a binary-tokenizer setting, following the zero-bit watermark of~\cite{christ2023undetectable}, to isolate the role of symbol mapping rules. This setting provides a minimal model where the sampler is fixed, allowing different symbol mappings to be compared under the same token-generation rule.

At generation step $t$, suppose the next token belongs to a binary vocabulary $\mathcal{V}_{\mathrm{b}}=\{0,1\}$. Let $\mathbf{p}_t^b:=p^b_{\mathrm{LM}}(1\mid x_{<t})$ and $p^b_{\mathrm{LM}}(0\mid x_{<t})=1-\mathbf{p}_t^b$. The binary sampler draws $\mathbf{u}_t^b\sim\mathrm{Uniform}(0,1)$ and outputs
\begin{equation}
\begin{aligned}
x_t^b
=
\begin{cases}
1, & 0\leq \mathbf{u}_t^b < \mathbf{p}_t^b,\\
0, & \mathbf{p}_t^b\leq \mathbf{u}_t^b < 1.
\end{cases}
\end{aligned}
\label{eq:binary_sampler}
\end{equation}
Since the interval mapped to token `$1$' has length $\mathbf{p}_t^b$, this sampler preserves the original binary-token distribution. We use the same score function as in~\cite{christ2023undetectable}:
\begin{equation}
\begin{aligned}
s(x_t^b,\mathbf{u}_t^b)
=
\begin{cases}
\ln \frac{1}{\mathbf{u}_t^b}, & x_t^b=1,\\
\ln \frac{1}{1-\mathbf{u}_t^b}, & x_t^b=0.
\end{cases}
\end{aligned}
\label{eq:binary_score}
\end{equation}

We compare three representative mappings for embedding one binary symbol $M\in\{0,1\}$, as illustrated in Fig.~\ref{fig:binary_mapping_rules}. 
\textbf{(1) Naive Multi-Key Christ Mapping:} this baseline directly extends the zero-bit Christ-style rule by assigning an independent key to each message to seed PRF. To embed $M^\star$, the encoder generates the pseudorandom value using the key associated with $M^\star$ and applies equation~\ref{eq:binary_sampler}; the decoder tests all symbol hypotheses by regenerating the pseudorandom value under each message key. Hence, for any mismatched message $M\neq M^\star$, the reconstructed randomness is independent of the generated token.
\textbf{(2) DISC Shift Mapping:} DISC uses a shared random value and shifts the token `$1$' interval according to the message. For binary messages, $\delta_0=0$ and $\delta_1=\frac{1}{2}$, so message $M$ maps token `$1$' to the cyclic interval $[\delta_M,\delta_M+\mathbf{p}_t^b)\bmod 1$.
\textbf{(3) Swapping Mapping:} we introduce a swapping-based rule as a bridge to the mod-1 mirroring rule proposed in Section~\ref{sec:mod1}. For $M=0$, token `$1$' is assigned to the left interval $[0,\mathbf{p}_t^b)$; for $M=1$, token `$1$' is assigned to the right interval $(1-\mathbf{p}_t^b,1]$. Thus, the two symbol hypotheses are complementary, i.e., the same token that gives high score to the matched message gives low score to the mismatched message.

\begin{figure}[htbp]
\centering

\begin{minipage}[t]{0.32\textwidth}
\centering
\resizebox{\linewidth}{!}{%
\begin{tikzpicture}
    \draw[thick] (0,0) -- (5,0);
    \filldraw[black] (0,0) circle (2pt) node[anchor=north]{$0$};
    \filldraw[black] (5,0) circle (2pt) node[anchor=north]{$1$};
    \filldraw[black] (3,0) circle (2pt) node[anchor=north]{$\mathbf{p}_t^b$};
    \node [below] at (1.5,-0.5) {$0 \leq \mathbf{u}_t^b < \mathbf{p}_t^b$};
    \node [below] at (4,-0.5) {$\mathbf{p}_t^b \leq \mathbf{u}_t^b < 1$};
    \draw [->, thick, red] (1.5, 0.3) [out=30,in=150] to (7, 0);
    \draw [->, thick, blue] (4, 0.3) [out=30,in=150] to (6, 0);
    \filldraw[black] (6.1,0) circle (2pt) node[anchor=north]{$0$};
    \filldraw[black] (7.1,0) circle (2pt) node[anchor=north]{$1$};
    \draw (6.5, 0) circle (1cm);
    \node [above] at (6.5, 1) {$\mathcal{V}^b$};
    \draw [decorate,decoration = {brace}, thick, red] (0.2,0.1) --  (2.8,0.1);
    \draw [decorate,decoration = {brace}, thick, blue] (3.2,0.1) --  (4.8,0.1);
\end{tikzpicture}}%
\caption*{\footnotesize (a) Naive Multi-Key Christ mapping for embedding bit 0.}
\end{minipage}\hfill
%
\begin{minipage}[t]{0.32\textwidth}
\centering
\resizebox{\linewidth}{!}{%
\begin{tikzpicture}
    \draw[thick] (0,0) -- (5,0);
    \filldraw[black] (0,0) circle (2pt) node[anchor=north]{0};
    \filldraw[black] (5,0) circle (2pt) node[anchor=north]{1};
    \filldraw[black] (1,0) circle (2pt) node[anchor=north]{$\delta_M$};
    \filldraw[black] (4,0) circle (2pt);
    \node [below] at (3.8,0) {\small{$\mathbf{p}_t^b + \delta_M$}};
    \draw [->, thick, blue] (0.5, 0.3) [out=30,in=150] to (6, 0);
    \draw [->, thick, blue] (4.5, 0.3) [out=30,in=150] to (6, 0);
    \draw [->, thick, red] (2.5, 0.3) [out=30,in=150] to (7, 0);
    \filldraw[black] (6.1,0) circle (2pt) node[anchor=north]{0};
    \filldraw[black] (7.1,0) circle (2pt) node[anchor=north]{1};
    \draw (6.5, 0) circle (1cm);
    \node [above] at (6.5, 1) {$\mathcal{V}^b$};
    \draw [decorate,decoration = {brace}, thick, blue] (0.2,0.1) --  (0.8,0.1);
    \draw [decorate,decoration = {brace}, thick, blue] (4.2,0.1) --  (4.8,0.1);
    \draw [decorate,decoration = {brace}, thick, red] (1.2,0.1) --  (3.8,0.1);
\end{tikzpicture}}%
\caption*{\footnotesize (b) DISC shift mapping for embedding bit 1.}
\end{minipage}\hfill
%
\begin{minipage}[t]{0.32\textwidth}
\centering
\resizebox{\linewidth}{!}{%
\begin{tikzpicture}
    \draw[thick] (0,0) -- (5,0);
    \filldraw[black] (0,0) circle (2pt) node[anchor=north]{$0$};
    \filldraw[black] (5,0) circle (2pt) node[anchor=north]{$1$};
    \filldraw[black] (2,0) circle (2pt) node[anchor=north]{$1-\mathbf{p}_t^b$};
    \draw [->, thick, blue] (1.5, 0.3) [out=30,in=150] to (6, 0);
    \draw [->, thick, red] (4, 0.3) [out=30,in=150] to (7, 0);
    \filldraw[black] (6.1,0) circle (2pt) node[anchor=north]{$0$};
    \filldraw[black] (7.1,0) circle (2pt) node[anchor=north]{$1$};
    \draw (6.5, 0) circle (1cm);
    \node [above] at (6.5, 1) {$\mathcal{V}^b$};
    \draw [decorate,decoration = {brace}, thick, blue] (0.2,0.1) --  (1.8,0.1);
    \draw [decorate,decoration = {brace}, thick, red] (2.2,0.1) --  (4.8,0.1);
\end{tikzpicture}}%
\caption*{\footnotesize (c) Swapping mapping for embedding bit 1.}
\end{minipage}

\caption{Comparison of three binary-token symbol mappings. The horizontal axis denotes the uniform randomness variable $\mathbf{u}_t^b\in[0,1)$. Each mapping partitions $[0,1)$ into blue and red regions that are mapped to token `$0$' or token `$1$' in the binary vocabulary $\mathcal{V}^b$. All three mappings preserve the interval lengths and hence preserve the binary-token distribution, but they induce different relationships between matched and mismatched symbol hypotheses.}
\label{fig:binary_mapping_rules}
\vspace{-1em}
\end{figure}

We use the per-token gap $\Delta_t^b(M^\star)$ to measure the expected score advantage of the matched message over the mismatched message at step $t$. Since all three mappings preserve the same token-generation rule under the matched message, they have the same expected matched score $\mathbb{E}\!\left[s(x_t^b,u_{t,M^\star}^b)\mid M^\star\right]
=
1+\ln(2)H_{\mathrm{b}}(p_t^b)$, where $H_{\mathrm{b}}(p)=-p\log p-(1-p)\log(1-p)$. However, their mismatched scores differ as shown in Table~\ref{tab:binary_mapping_gap}. Derivations are provided in Appendix~\ref{app:binary_mapping_gap}. Therefore, naive Multi-Key Christ leaves mismatched hypotheses as independent noise; DISC shares randomness but retains shifted and non-negligible mismatched scores; swapping turns matched evidence into mismatched anti-evidence and yields the largest gap. This binary analysis motivates the continuous mod-1 mirroring mapping rule introduced in the next section. 

\begin{table}[htbp]
\centering
\vspace{-0.5em}
\caption{Expected per-token matched--mismatched score gaps. $H_{\mathrm b}$ uses base-2 logarithm and 
$\widetilde{H}_{\mathrm{DISC}}(p)=\ln(2)(H_{\mathrm b}(|p-\frac12|)-1)$.}
\label{tab:binary_mapping_gap}
\footnotesize
\setlength{\tabcolsep}{3pt}
\begin{tabular}{lcc}
\toprule
Mapping & Mismatched score & $\Delta^b_t(M^\star)$ \\
\midrule
Multi-Key Christ 
& $1$ 
& $\ln(2)H_{\mathrm b}(p_t^b)$ \\
DISC Shift 
& $1+\widetilde{H}_{\mathrm{DISC}}(p_t^b)$ 
& $\ln(2)H_{\mathrm b}(p_t^b)-\widetilde{H}_{\mathrm{DISC}}(p_t^b)$ \\
Swapping 
& $1-\ln(2)H_{\mathrm b}(p_t^b)$ 
& $2\ln(2)H_{\mathrm b}(p_t^b)$ \\
\bottomrule
\end{tabular}
\vspace{-0.4em}
\vspace{-1em}
\end{table}

\section{MirrorMark}
\label{sec:mirrormark}

Section~\ref{sec:binary_tokenizer_mapping} shows that stronger matched--mismatched separation improves multi-bit decoding. MirrorMark extends this insight to real-tokenizer watermarking through three components: mod-1 mirroring creates separated symbol hypotheses, CABS assigns tokens to message positions, and the decoder replays CABS to recover the payload and aggregate detection evidence. Since mod-1 mirroring is measure-preserving, MirrorMark remains distortion-free when composed with a distortion-free base sampler.

\vspace{-0.5em}
\subsection{Mod-1 Mirroring}
\label{sec:mod1}

The binary-tokenizer analysis in Section~\ref{sec:binary_tokenizer_mapping} serves only as a motivating example, which shows that complementary mappings can create a larger matched--mismatched score gap. MirrorMark applies this principle to real-tokenizer watermarking by transforming the sampler randomness.

For a one-bit message $M\in\{0,1\}$, inspired by the swapping rule, we use the complementary mirroring rule as follows, where $\Psi(u;M)$ is the transformed random value of $u$ with respect to $M$,
\begin{equation}
\begin{aligned}
\Psi(u;M)
=
\begin{cases}
1-u, & M=0,\\
u, & M=1.
\end{cases}
\end{aligned}
\label{eq:mod1_m1}
\end{equation}
This mapping preserves uniformity and satisfies $\Psi(u;0)+\Psi(u;1)=1$.

Thus, a random value that gives strong evidence for one message gives weak evidence for the other.

For an $m$-bit ($m>1$) symbol $M\in\mathcal{M}=\{0,1,\ldots,2^m-1\}$, there are more than two symbol hypotheses. In this case, exact complementarity with every incorrect message is impossible. Because if all incorrect messages were complements of the matched message, they would collapse to the same hypothesis. Therefore, MirrorMark spreads the symbol hypotheses evenly over the unit interval. We assign each message a pivot
\begin{equation}
\begin{aligned}
\psi_M=\frac{M}{2^{m+1}},
\end{aligned}
\label{eq:mirror_pivot}
\end{equation}
and mirror the sampler randomness around this pivot:
\begin{equation}
\begin{aligned}
\Psi(u;\psi_M)=(2\psi_M-u)\bmod 1.
\end{aligned}
\label{eq:mod1_reflect_compact}
\end{equation}
Equivalently, the effective message centers $2\psi_M=\frac{M}{2^m}$ are uniformly spaced on $[0,1)$. This evenly spaces the symbol hypotheses and maximizes the worst-case circular separation within the reflection-based family. The motivation and proof are provided in Appendix~\ref{motivation_mod1}.

Specifically, for every fixed $\psi_M$, the map $u\mapsto(2\psi_M-u)\bmod 1$ is bijective and measure-preserving on $[0,1)$. Hence, replacing $u$ with $\Psi(u;\psi_M)$ preserves the uniform randomness required by the base sampler. Therefore, when the base sampler is distortion-free under uniform randomness, MirrorMark preserves the output token distribution. The proof is given in Appendix~\ref{Distortion-freeness}.

\paragraph{Remark: one-bit specialization.}
For $m=1$, we use equation~\ref{eq:mod1_m1} in the main construction, which follows the same mirroring principle and gives the convenient identity as $\Psi(u;0)+\Psi(u;1)=1$, which simplifies the theoretical analysis of tournament-based MirrorMark in Appendix~\ref{tour_eer}. Appendix~\ref{1_bit_justification} compares this specialization with the general formula in equation~\ref{eq:mod1_reflect_compact}.

\subsection{Context-Anchored Balanced Scheduler (CABS)}
\label{sec:cabs}

To embed a payload of $b=m\cdot H$ bits, MirrorMark represents it as $\texttt{MsgSeq}\in\{0,\ldots,2^m-1\}^H$, where each of the $H$ positions carries one $m$-bit symbol. Each eligible generation step is assigned to a position and embeds the corresponding symbol via mod-1 mirroring.

Existing pseudorandom position schedulers in MPAC and StealthInk may allocate too few tokens to some positions under limited token budgets, making those symbols hard to decode. An intuitive alternative is to preferentially assign tokens to positions that are underrepresented. Yet, this strategy is fragile: even a few token insertions or deletions can desynchronize the assigned positions from those used at generation, thereby destroying the watermark. CABS addresses both issues by combining balanced position allocation with context-anchored framing.

\begin{figure*}[htbp]
    \centering
    \vspace{-0.5em}
    \includegraphics[width=0.78\linewidth]{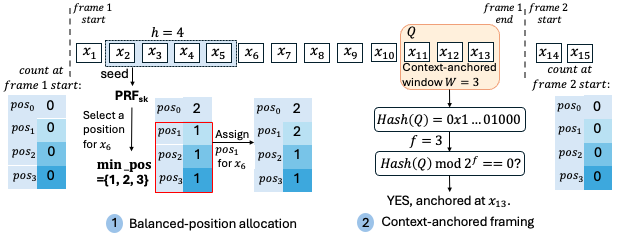}
    \vspace{-0.5em}
    \caption{Overview of CABS with $H=4$ message positions.}
    \label{cabs_figure}
    \vspace{-1em}
\end{figure*}

As shown in~Fig.~\ref{cabs_figure}, within each frame, CABS assigns an eligible token to one of the currently least-populated positions, with tie-breaking determined by the secret key and the context. This encourages every position to receive sufficient evidence for decoding while avoiding a deterministic sequential schedule. Frame boundaries are determined by a context-anchored window $Q$ of $W$ tokens, i.e., a new frame is anchored when the $f$ least significant bits of $\mathrm{Hash}(Q)$ are all zero, or when the frame reaches a maximum length. At the start of a new frame, the position counts are reset.

This framing localizes the effect of edits. A token insertion or deletion may affect the current frame and nearby boundary decisions, but it does not desynchronize the entire sequence. With the maximum frame length $\texttt{max\_len}$, the affected region is bounded by $\texttt{max\_len}+W$. Therefore, edits mainly reduce the amount of usable evidence in local regions, leading to gradual degradation rather than catastrophic failure. Algorithm~\ref{alg:CABS} in Appendix~\ref{cabs_scheduling_algo} gives the full procedure. The encoder is summarized in Algorithm~\ref{alg:encoder} in Appendix~\ref{cabs_encoder}.

\vspace{-0.5em}

\subsection{Decoding and Detection}
\label{sec:decoding_detection}

Given a generated text, the detector first replays CABS using the secret key and observed context tokens to assign each eligible token to a position in $\texttt{MsgSeq}$. It then reconstructs the pseudorandom values used by the base sampler. For each position $pos\in\{1,\ldots,H\}$ and candidate symbol $M\in\mathcal{M}$, the detector mirrors the random values assigned to $pos$ and computes a sampler-specific score. The decoded symbol is selected as the candidate with the largest score:
\begin{equation}
\begin{aligned}
C_{pos,M}
&=
\mathsf{Score}_{\mathrm{base}}
\left(
\{\Psi(u;\psi_M):u\in\mathcal{U}_{pos}\}
\right),
\qquad
\widehat{M}_{pos}
=
\arg\max_{M\in\mathcal{M}} C_{pos,M}.
\end{aligned}
\label{eq:position_score_and_decoder_general}
\end{equation}
Here $\mathcal{U}_{pos}$ denotes the reconstructed random values assigned to position $pos$, and $\mathsf{Score}_{\mathrm{base}}$ is the score function induced by the chosen base sampler.

After decoding all positions, the detector mirrors each token using the decoded symbol of its assigned position and aggregates the resulting evidence into a global watermark score:
\begin{equation}
\begin{aligned}
C_{\mathrm{global}}
=
\mathsf{Score}_{\mathrm{base}}
\left(
\{\Psi(u;\psi_{\widehat{M}_{pos}}): u\in\mathcal{U}_{pos},\ pos=1,\ldots,H\}
\right).
\end{aligned}
\label{eq:global_score_general}
\end{equation}
The text is declared watermarked if $C_{\mathrm{global}}$ exceeds a predefined threshold.

In our evaluations, we instantiate MirrorMark with two representative distortion-free zero-bit samplers: Gumbel-max sampling used by AA~\citep{Aaronson2023} and tournament sampling used by SynthID~\citep{Dathathri2024}. These samplers are chosen because they are distortion-free and therefore do not degrade text quality, while also exposing reproducible random values that can be mirrored during generation and detection. Appendix~\ref{app:warmup} reviews the two base samplers, and gives the concrete decoding and detection formulas for the Gumbel-max and tournament instantiations. Algorithm~\ref{alg:decoder} in Appendix~\ref{cabs_decoder} summarizes the full procedure of decoding and detection.

\vspace{-0.5em}
\section{Theoretical EER of MirrorMark Applied on AA and SynthID}\label{sec:theoretical_eer}

In this section, we analyze the theoretical equal error rate (EER) when MirrorMark is instantiated with two representative samplers, Gumbel-max sampling and tournament sampling, in the single-position setting ($H=1$), and compare the asymptotic estimates with empirical EERs. See proof details in Appendix~\ref{app:theoretical_eer}.

As shown in~\eqref{eq:EER-gumbel-N-asym}, the EER of Gumbel-max-based MirrorMark is governed by the next-token entropy. Since we use top-$k$ sampling with $k=100$, we denote the entropy of the truncated distribution as $\mathcal{H}_{top100}$. For tournament-based MirrorMark,~\eqref{eq:eer-tail} shows that the EER depends on the layer-wise SynthID collision probability $C_{wm}^{\ell}$, i.e., the probability that two independently sampled candidates collide at layer $\ell$.

To validate the analysis, we randomly select 500 prompts from C4~\citep{raffel2020exploring} and generate $T=200$ tokens using LLaMA-2-7B~\citep{touvron2023llama} with temperature $\tau=1.0$. We embed an $m=1$-bit message using both Gumbel-max-based and tournament-based MirrorMark. For Gumbel-max-based MirrorMark, we collect $\mathcal{H}_{top100}$ at each generation step and compute the per-sequence alternative-hypothesis parameters $(\mu_{\mathcal{H}_1}^j,\sigma_{\mathcal{H}_1}^j)$ using~\eqref{H1_distribution}. For tournament-based MirrorMark, we compute the averaged layer-wise collision probabilities $\bar{C}_{wm,j}^{\ell}=\frac{1}{T}\sum_{t=1}^{T}C_{wm,j,t}^{\ell}$ for each sequence $j$, plug them into~\eqref{mu_S_and_v_S}, and obtain $(\mu_{\mathcal{H}_1}^{j},\sigma_{\mathcal{H}_1}^{j})$ using~\eqref{eq:tour-H1}.

For both variants, we approximate the resulting mixture of per-sequence Gaussians with a single Gaussian $\mathcal{N}(\bar{\mu}_{\mathcal{H}_1},\bar{\sigma}_{\mathcal{H}_1}^{2})$ by moment matching:
\begin{equation}
\begin{aligned}
\bar{\mu}_{\mathcal{H}_1}
=
\frac{1}{n}\sum_j \mu_{\mathcal{H}_1}^{j},
\qquad
\bar{\sigma}_{\mathcal{H}_1}^{2}
=
\frac{1}{n}\sum_j(\sigma_{\mathcal{H}_1}^{j})^2
+
\frac{1}{n}\sum_j(\mu_{\mathcal{H}_1}^{j}-\bar{\mu}_{\mathcal{H}_1})^2.
\end{aligned}
\end{equation}
We then compute the asymptotic EER using~\eqref{eq:EER-gumbel-N-asym} for Gumbel-max and~\eqref{eq:eer-compact} for tournament sampling. Table~\ref{theoretical_eer} compares the empirical and asymptotic EERs. The estimates are close to the empirical values for both sampler instantiations, showing that the theoretical analysis captures the practical detection behavior.

\begin{table}[htbp]
\centering
\vspace{-0.5em}
\caption{Empirical and asymptotic EERs of MirrorMark.}
\label{theoretical_eer}
\footnotesize
\setlength{\tabcolsep}{5pt}
\begin{tabular}{lcc}
\toprule
 & Gumbel-max & Tournament \\
\midrule
Empirical & 0.0010 & 0.0020 \\
Asymptotic & 0.0008 & 0.0028 \\
\bottomrule
\end{tabular}
\vspace{-1em}
\end{table}

\vspace{-0.5em}
\section{Evaluations}
\label{sec:evaluation}

We evaluate MirrorMark from three perspectives. First, we conduct controlled comparisons under the Gumbel-max sampler and DiPmark sampler to isolate the effect of the mapping rule. Second, we compare MirrorMark with state-of-the-art multi-bit watermarking schemes in terms of detectability and text quality. Third, we evaluate robustness under editing attacks and ablate the key design choices of CABS. We use LLaMA2-7B~\citep{touvron2023llama} with temperature $1.0$ and top-$100$ sampling on 500 prompts sampled from the RealNewsLike subset of C4~\citep{raffel2020exploring}. The experimental setup and additional details are provided in the Appendix~\ref{setup}. %

\subsection{Controlled Comparison: Mirroring vs. Location-Based Mapping}
\label{sec:controlled_mapping_comparison}

To isolate the effect of the message mapping rule, we conduct controlled comparisons under two base generation mechanisms with $H=1$. First, under Gumbel-max sampling, we compare MirrorMark with the Gumbel-max-based multi-bit extension in ThreeBricks~\citep{fernandez2023three}. Both methods use the same base sampler and differ only in how messages are mapped to sampling randomness. Second, under permutation-based reweighting, we compare a mirroring-based DiPmark variant with StealthInk~\citep{jiang2025stealthink}. Both methods use context-seeded permutations and reweighting, but StealthInk maps each message to a contiguous rank interval, whereas the mirroring-based variant transforms the permutation ranks through mod-1 mirroring. Thus, the two comparisons test whether mirroring improves over location-based mappings in both sampling-randomness and permutation-rank settings.
\begin{figure*}[htbp]
  \centering
  \begin{minipage}[t]{0.32\textwidth}
    \centering
    \includegraphics[width=\linewidth]{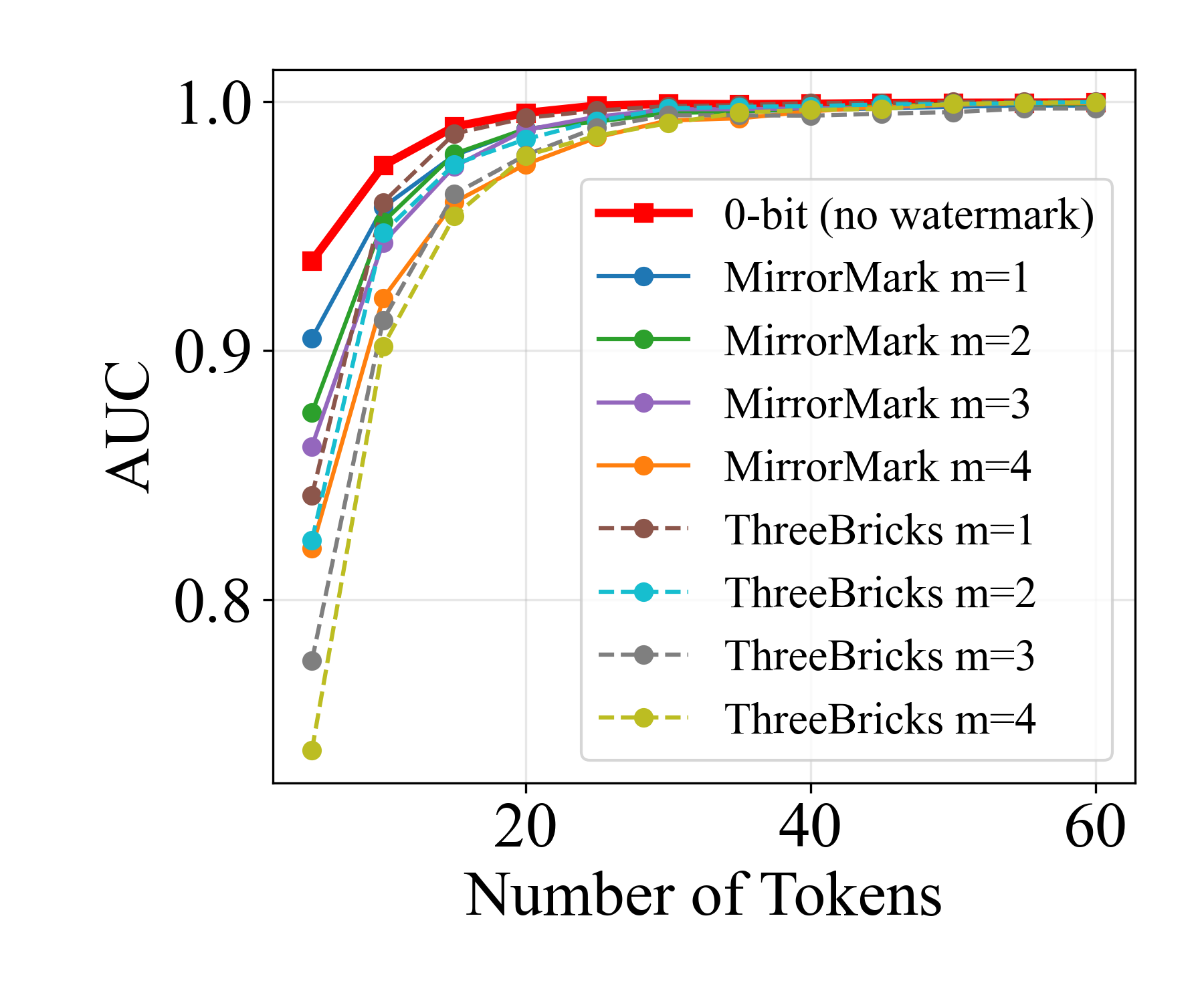}
  \end{minipage}\hspace{-0.2em}
  \begin{minipage}[t]{0.32\textwidth}
    \centering
    \includegraphics[width=\linewidth]{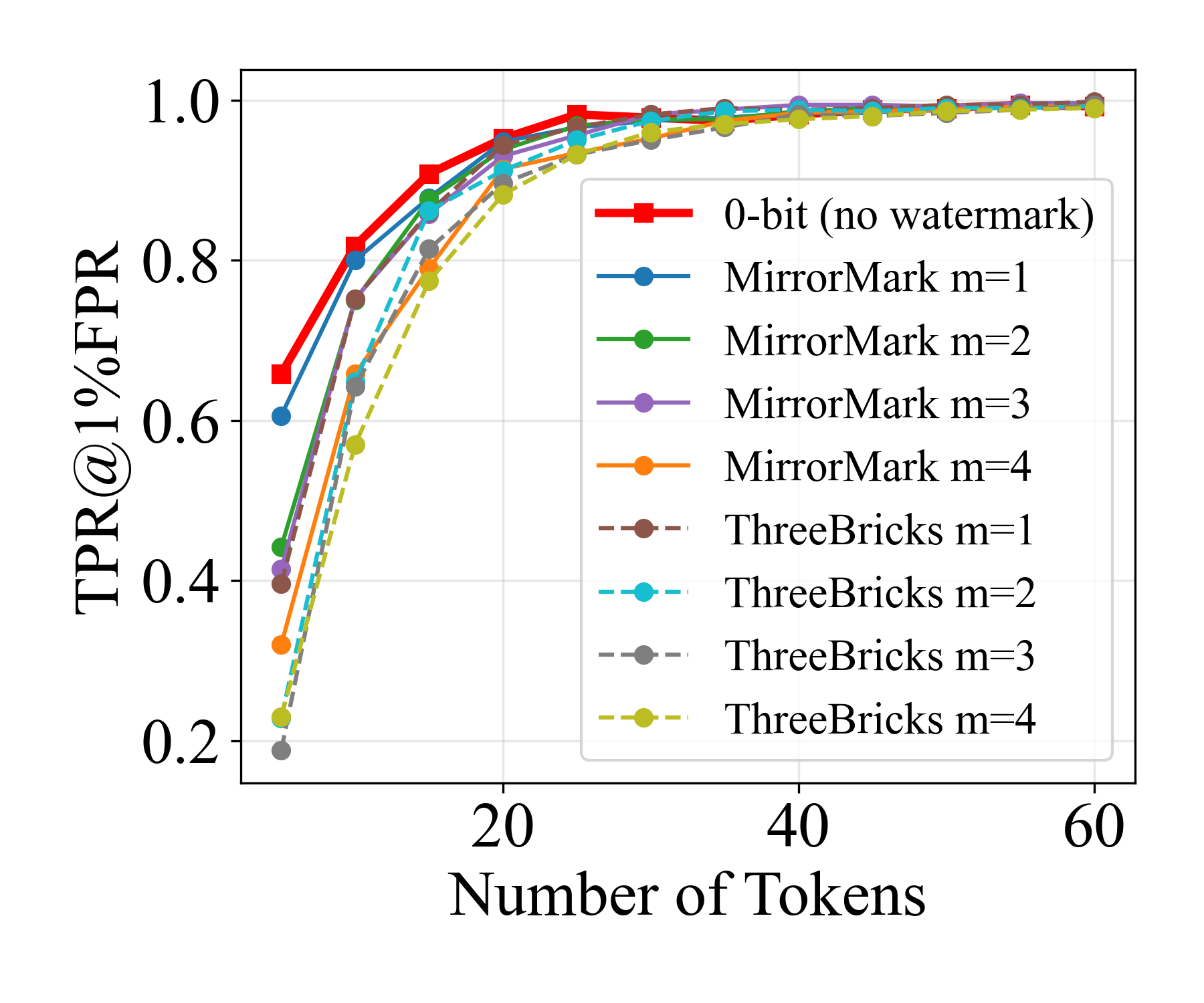}
  \end{minipage}\hspace{-0.2em}
  \begin{minipage}[t]{0.32\textwidth}
    \centering
    \includegraphics[width=\linewidth]{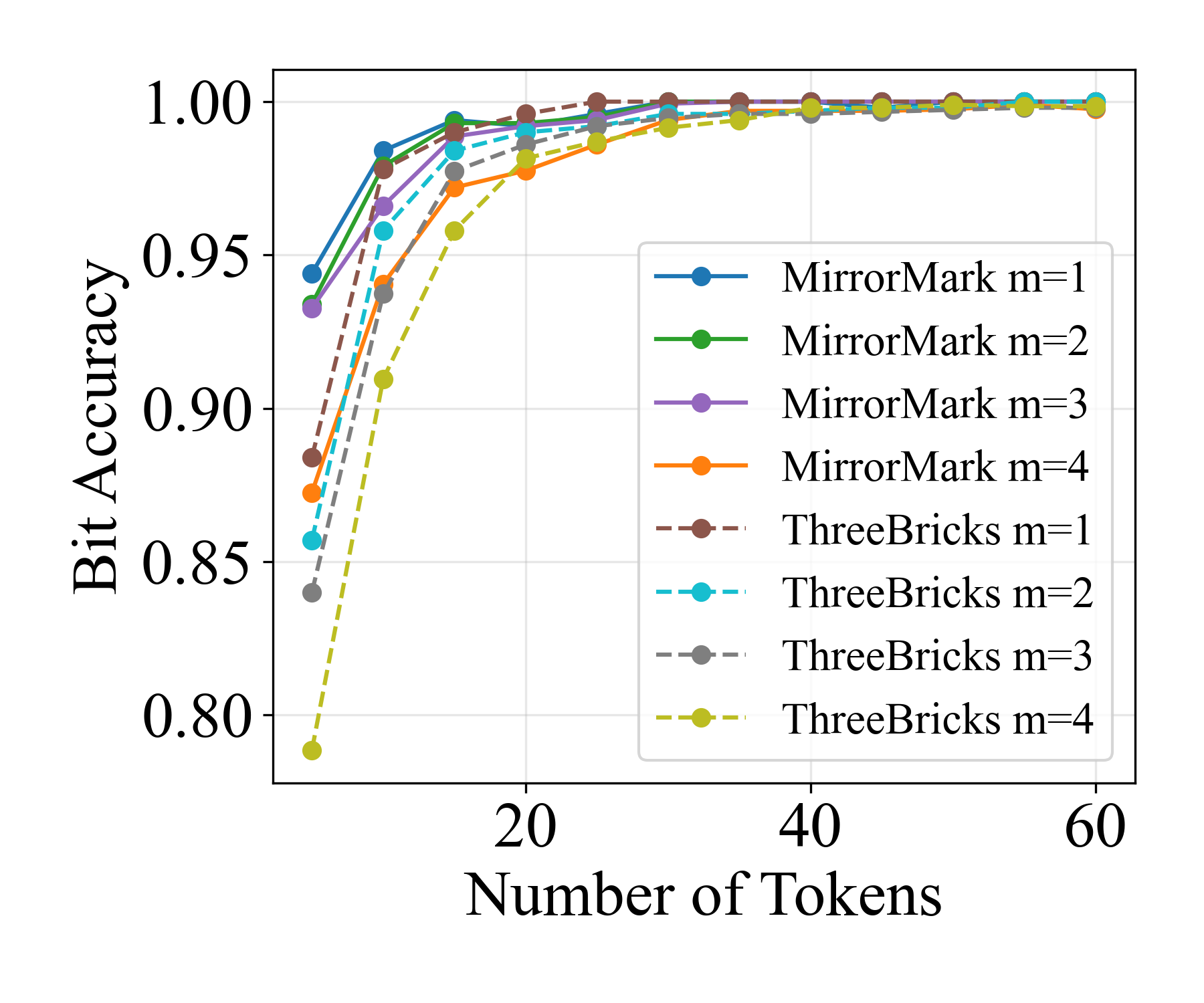}
  \end{minipage}
  \vspace{-0.7em}
  \caption{Controlled comparison between Gumbel-max-based MirrorMark and the Gumbel-max-based multi-bit extension in ThreeBricks under $H=1$.}
  \label{fig:gumbel_control}
  \vspace{-1.0em}
\end{figure*}

As shown in Fig.~\ref{fig:gumbel_control}, MirrorMark consistently improves bit accuracy over ThreeBricks under the same Gumbel-max sampler, especially as the symbol size $m$ increases. This supports the binary-tokenizer analysis in Section~\ref{sec:binary_tokenizer_mapping}, i.e., cyclic shifts place symbol hypotheses at different locations but do not make mismatched hypotheses complementary to the matched one. Appendix~\ref{app:dip_control} provides a similar controlled comparison under permutation-based reweighting, where mirroring-based DiPmark outperforms the interval-based mapping used by StealthInk. These results suggest that the advantage of mirroring is not tied to a specific sampler; rather, it comes from creating stronger matched--mismatched separation than location-based mappings.

\subsection{Trade-off between Text Quality and Detectability}

\begin{table*}[htbp]
\centering
\small
\setlength{\tabcolsep}{2.0pt}       
\renewcommand{\arraystretch}{0.96}   
\vspace{-0.3em}

\caption{Mean perplexity and detectability for different approaches on 300 tokens.
Each perplexity is reported with a 90\% confidence interval based on bootstrapping.}
\label{ppl_tradeoff_300tokens}

\begin{tabular}{
@{}L
C C C S[table-format=2.4]
!{\vrule width 0.5pt}
C C C S[table-format=2.4]
@{}}
\toprule
\multirow{2}{*}{Method}
& \multicolumn{4}{c}{36 Bits}
& \multicolumn{4}{c}{54 Bits} \\
\cmidrule(lr){2-5}\cmidrule(lr){6-9}
& {\footnotesize AUC}
& {\footnotesize TPR@1\%FPR}
& {\footnotesize Bit Acc.}
& {\footnotesize Perplexity}
& {\footnotesize AUC}
& {\footnotesize TPR@1\%FPR}
& {\footnotesize Bit Acc.}
& {\footnotesize Perplexity} \\
\midrule

Non Watermark
& -- & -- & -- &
\shortstack[c]{\meanppl{7.2784}\\[-2pt]\smallci{7.1294,\,7.4296}}
& -- & -- & -- &
\shortstack[c]{\meanppl{7.2784}\\[-2pt]\smallci{7.1294,\,7.4296}} \\

\cmidrule(lr){1-5}\cmidrule(lr){6-9}

MPAC
& 0.9949 & 0.9800 & 0.9347 &
\shortstack[c]{\meanppl{9.1951}\\[-2pt]\smallci{9.0404,\,9.3516}}
& 0.9962 & 0.9840 & 0.8928 &
\shortstack[c]{\meanppl{9.3457}\\[-2pt]\smallci{9.1704,\,9.5224}} \\

RSBH
& \textbf{0.9998} & \textbf{0.9980} & \textbf{1.0000} &
\shortstack[c]{\meanppl{32.8955}\\[-2pt]\smallci{31.4973,\,34.3369}}
& 0.9989 & 0.9980 & \textbf{0.9928} &
\shortstack[c]{\meanppl{32.8184}\\[-2pt]\smallci{31.3574,\,34.3446}} \\

StealthInk
& 0.9892 & 0.8520 & 0.8896 &
\shortstack[c]{\meanppl{7.8241}\\[-2pt]\smallci{7.6260,\,8.0223}}
& 0.9890 & 0.8900 & 0.8415 &
\shortstack[c]{\meanppl{7.8950}\\[-2pt]\smallci{7.6933,\,8.0974}} \\

Gumbel-max
& \textbf{1.0} & \textbf{1.0} & 0.9835 &
\shortstack[c]{\textbf{\meanppl{7.0486}}\\[-2pt]\smallci{6.8991,\,7.1997}}
& \textbf{1.0} & \textbf{1.0} & 0.9683 &
\shortstack[c]{\textbf{\meanppl{7.1751}}\\[-2pt]\smallci{7.0195,\,7.3383}} \\

Tour-Wmean
& 0.9990 & 0.9920 & 0.9732 &
\shortstack[c]{\meanppl{7.3706}\\[-2pt]\smallci{7.2265,\,7.5202}}
& \textbf{1.0} & \textbf{1.0} & 0.9491 &
\shortstack[c]{\meanppl{7.3295}\\[-2pt]\smallci{7.1828,\,7.4792}} \\

Tour-Bayes
& 0.9992 & 0.9960 & 0.9811 &
\shortstack[c]{\meanppl{7.3706}\\[-2pt]\smallci{7.2265,\,7.5202}}
& \textbf{1.0} & 0.9960 & 0.9576 &
\shortstack[c]{\meanppl{7.3295}\\[-2pt]\smallci{7.1828,\,7.4792}} \\

\bottomrule
\end{tabular}
\end{table*}

We evaluate moderate payloads $b\in\{36,54\}$ with 300 generated tokens. Table~\ref{ppl_tradeoff_300tokens} reports detectability and perplexity. Distortion-based baselines achieve strong detection only with substantial perplexity degradation, while StealthInk preserves text quality but has weaker detection and bit accuracy. MirrorMark achieves strong AUC, TPR@1\%FPR, and bit accuracy while keeping perplexity close to non-watermarked text. For completeness, we also provide results on shorter sequences of 200 tokens and longer sequences of 400 tokens respectively in Table~\ref{ppl_tradeoff_200tokens} and Table~\ref{ppl_tradeoff_400tokens} of Appendix~\ref{200_400tokens}. Besides, we evaluate the GPT4o score and repetition rate across these approaches as in Fig.~\ref{gpt4o} and Table~\ref{tab:repetition_rate} of Appendix~\ref{extra_text_quality}, which demonstrates the superior text quality of MirrorMark. Given that MirrorMark maintains a favorable trade-off in these settings, 
we further stress-test it under larger payload sizes ($b \in \{72,90\}$). The results in Fig.~\ref{fig:detectability_of_MirrorMark} of Appendix~\ref{72bits_90bits} demonstrate that MirrorMark continues to provide competitive detectability, highlighting its scalability beyond what baseline approaches can achieve. 

Note that we report results for both Tour-Bayes and Tour-Wmean for completeness, although their performance is largely comparable. In SynthID, the Bayesian detector demonstrates a clearer advantage when using Bernoulli-distributed $u$ values, which maximize diversity but are inherently binary and thus support only one bit per position. In contrast, our focus is on mod-1 mirroring for multi-bit embedding, where this advantage does not directly carry over. Bayesian detection requires training a detector for each configuration, and we illustrate its training setup and cost in Appendix~\ref{setup}.


\vspace{-0.5em}
\subsection{Robustness}

 To evaluate robustness, we consider copy-paste and paraphrasing attacks. In the copy-paste attack, a fraction $\epsilon$ of non-watermarked text is mixed into watermarked text while preserving the total length; for paraphrasing, we rewrite the watermarked text using the model from~\cite{zhang2020pegasus}. Table~\ref{tab:detect_400tokens_copy_paste} reports copy-paste robustness on 400-token texts with 36 embedded bits, where $\epsilon=0$ denotes clean samples. For MirrorMark, we report the default setting $m=2$, which consistently gives strong robustness. Additional copy-paste results for $\epsilon\in\{0.1,0.3,0.5\}$ and different symbol sizes $m\in\{2,3,4,6\}$ are provided in Appendix~\ref{cp_appendix}. Overall, MirrorMark is more robust than the baselines. 

 \begin{table*}[htbp]
\centering
\small
\setlength{\tabcolsep}{2.5pt}     
\renewcommand{\arraystretch}{1.06} 

\caption{Detectability under copy-paste attacks on 400-token texts with 36 embedded bits ($\epsilon$ denotes the edit fraction).}
\label{tab:detect_400tokens_copy_paste}

\begin{tabular}{@{}L
                C C C !{\vrule width 0.5pt}
                C C C !{\vrule width 0.5pt}
                C C C@{}}
\toprule
\multirow{2}{*}{Method}
& \multicolumn{3}{c}{$\epsilon=0$ (No attack)}
& \multicolumn{3}{c}{$\epsilon=0.2$}
& \multicolumn{3}{c}{$\epsilon=0.4$} \\
\cmidrule(lr){2-4}\cmidrule(lr){5-7}\cmidrule(lr){8-10}
& {\footnotesize AUC} & {\footnotesize TPR@1\%FPR} & {\footnotesize Bit Acc.}
& {\footnotesize AUC} & {\footnotesize TPR@1\%FPR} & {\footnotesize Bit Acc.}
& {\footnotesize AUC} & {\footnotesize TPR@1\%FPR} & {\footnotesize Bit Acc.} \\
\midrule
MPAC
& 0.9970 & 0.9820 & 0.9599
& 0.9753 & 0.8975 & 0.8997
& 0.9593 & 0.7675 & 0.8397 \\

RSBH
& 0.9999 & \textbf{1.0} & \textbf{1.0}
& 0.9697 & 0.0850 & 0.6138
& 0.8455 & 0.0050 & 0.6038 \\

StealthInk
& 0.9941 & 0.9500 & 0.9204
& 0.9705 & 0.8175 & 0.8448
& 0.9172 & 0.4750 & 0.7716 \\

Tour-Wmean
& 0.9997 & \textbf{1.0} & 0.9681
& 0.9981 & 0.9900 & 0.9106
& 0.9825 & 0.8980 & 0.8323 \\

Tour-Bayes
& 0.9996 & \textbf{1.0} & 0.9681
& 0.9978 & 0.9840 & 0.9106
& 0.9900 & 0.9220 & 0.8323 \\

Gumbel-max
& \textbf{1.0} & \textbf{1.0} & 0.9891
& \textbf{1.0} & \textbf{1.0} & \textbf{0.9690}
& \textbf{1.0} & \textbf{1.0} & \textbf{0.9328} \\
\bottomrule
\end{tabular}
\end{table*}

Table~\ref{tab:wm_paraphrasing_comparison} reports paraphrasing robustness. MirrorMark maintains stronger watermarked/non-watermarked separability than prior multi-bit schemes and often its zero-bit baselines. This advantage mainly comes from mod-1 mirroring, which creates asymmetric symbol hypotheses: the correct symbol tends to produce larger mirrored values, while incorrect symbols are pushed toward smaller values. Thus, even when paraphrasing weakens token-level evidence, the maximum score still separates watermarked from non-watermarked text more effectively than cyclic-shift or reweighting-based mappings.

\begin{table*}[htbp]
\centering
\caption{Detectability under paraphrasing attacks for different schemes. TB denotes Tour-Bayes and G-max denotes Gumbel-max. Multi-bit methods embed 36 bits in 400 tokens.}
\resizebox{\textwidth}{!}{%
\begin{tabular}{lccccccccccccc}
\toprule
 &
MPAC & RSBH & StealthInk &
TB (0 bit) &
\makecell{TB\\(m=2)} &
\makecell{TB\\(m=3)} &
\makecell{TB\\(m=4)} &
\makecell{TB\\(m=6)} &
\makecell{G-max\\(0 bit)} &
\makecell{G-max\\(m=2)} &
\makecell{G-max\\(m=3)} &
\makecell{G-max\\(m=4)} &
\makecell{G-max\\(m=6)} \\
\midrule
AUC &
0.5743 & 0.3414 & 0.5188 &
0.7925 & 0.8139 & 0.9001 & 0.8938 & 0.8140 &
0.8245 & 0.9306 & 0.9091 & 0.9109 & 0.9025 \\
TPR@1\%FPR &
0.0100 & 0.0000 & 0.0050 &
0.2630 & 0.2220 & 0.3200 & 0.3480 & 0.2300 &
0.2800 & 0.5780 & 0.4860 & 0.4620 & 0.3840 \\
Bit Accuracy &
0.5734 & 0.6220 & 0.5673 &
-- & 0.5216 & 0.5152 & 0.5152 & 0.5123 &
-- & 0.5434 & 0.5398 & 0.5378& 0.5333 \\
\bottomrule
\end{tabular}%
}
\label{tab:wm_paraphrasing_comparison}
\end{table*}

However, all methods show poor bit accuracy after paraphrasing. Payload recovery requires the true symbol to be the top-scoring hypothesis, $\widehat{M}=\arg\max_M C_M$, which is stricter than detection. Paraphrasing changes tokens and their corresponding $u$ values, disrupting this fine-grained ordering. Thus, MirrorMark retains partial evidence for tamper-evident provenance detection, while reliable payload recovery under strong rewriting remains open.

\section{Ablation Study}

We compare CABS with the position schedulers used in MPAC and RSBH, denoted as \emph{NaiveHash} and \emph{DPHash}. As shown in Appendix~\ref{cabs_ablation_scheduler}, CABS achieves near-uniform token allocation with a near-zero Gini score and substantially improves MirrorMark under both Gumbel-max and Tour-Bayes. In contrast, MPAC is less sensitive to the scheduler, with only marginal bit-accuracy gains. These results show that balanced allocation is especially important for MirrorMark, which aggregates evidence across positions.

We then ablate the key CABS parameters, including frame size $f$, context window $W$, and maximum frame expansion factor $\texttt{max\_factor}$. Full results under insertion, deletion, and substitution attacks are reported in Tables~\ref{tab:cabs-insert-final}--\ref{tab:cabs-sub-final} of Appendix~\ref{ablation_cabs_parameters}. Across attacks and edit ratios, $f=3$, $W=4$, and $\texttt{max\_factor}=1.5$ provide the best overall trade-off between robustness and bit recovery, and are used as the default configuration.

\section{Conclusion}
We presented MirrorMark, a generalizable mapping-centric approach to multi-bit LLM watermarking. MirrorMark separates the symbol mapping rule from the base watermarking sampler and uses mod-1 mirroring to transform detector-reproducible pseudorandom objects, including sampling values and permutation ranks. This measure-preserving rule creates stronger matched--mismatched separation while preserving the relevant randomness distribution, enabling distortion-free sampling and natural extension to permutation-based reweighting. We further introduced CABS to balance token allocation and localize edit effects. With Gumbel-max and tournament sampling, we derived theoretical EER estimates that align with empirical results. Experiments show that MirrorMark achieves strong detectability and bit accuracy while maintaining text quality comparable to non-watermarked generation. Future work includes improving payload recovery under strong paraphrasing and strengthening post-edit provenance detection.
\bibliography{cite}
\bibliographystyle{icml2025}

\appendix
\clearpage
\section*{Appendix Contents}
\addcontentsline{toc}{section}{Appendix Contents}

\begin{itemize}
  \item[\textbf{A}] Related Work
  \begin{itemize}
    \item[A.1] \hyperref[related-zero-bit]{Zero-bit Watermarking}
    \item[A.2] \hyperref[related-multi-bit]{Multi-bit Watermarking}
  \end{itemize}

  \item[\textbf{B}] \hyperref[app:binary_mapping_gap]{Derivation of Score Gaps for Binary-Tokenizer Mapping Rules}

  \item[\textbf{C}] \hyperref[motivation_mod1]{Motivation for Mod-1 Mirroring}

  \item[\textbf{D}] \hyperref[app:warmup]{Representative Base Samplers and MirrorMark Instantiations}
  \begin{itemize}
    \item[D.1] \hyperref[app:base_samplers]{Base Samplers}
    \item[D.2] \hyperref[sec_detection]{Zero-Bit Detection Scores}
    \item[D.3] \hyperref[app:mirror_decoder_details]{MirrorMark Decoding and Detection Instantiations}
  \end{itemize}

  \item[\textbf{E}] \hyperref[Distortion-freeness]{Distortion-Freeness of MirrorMark}

  \item[\textbf{F}] \hyperref[cabs_scheduling_algo]{CABS Scheduling Algorithm}

  \item[\textbf{G}] \hyperref[cabs_encoder]{CABS-Based Encoder Algorithm}

  \item[\textbf{H}] \hyperref[cabs_decoder]{CABS-Based Decoding and Detection Algorithm}

  \item[\textbf{I}] Theoretical EER of MirrorMark
  \begin{itemize}
    \item[I.1] \hyperref[gumbel_eer]{Gumbel-Max-Based MirrorMark}
    \item[I.2] \hyperref[tour_eer]{Tournament-Sampling-Based MirrorMark}
    \item[I.3] \hyperref[app:token_independence]{Empirical Validation of Token-Level Dependence}
  \end{itemize}

  \item[\textbf{J}] \hyperref[lemmas]{Lemmas}

  \item[\textbf{K}] \hyperref[setup]{Experimental Setup}

  \item[\textbf{L}] Additional Results
  \begin{itemize}
    \item[L.1] \hyperref[app:dip_control]{Controlled Comparison under Permutation-Based Reweighting}
    \item[L.2] \hyperref[200_400tokens]{Performance Comparison over 200 and 400 Tokens}
    \item[L.3] \hyperref[extra_text_quality]{Repetition Score and LLM-as-Judge Evaluation}
    \item[L.4] \hyperref[72bits_90bits]{Performance with 72-Bit and 90-Bit Payloads}
    \item[L.5] \hyperref[calibration]{Threshold Calibration under Argmax Decoding}
    \item[L.6] \hyperref[cross_lang]{Cross-Language Adaptation}
    \item[L.7] \hyperref[sec:alignment]{Empirical EER Comparison}
    \item[L.8] \hyperref[instruction-task]{Detectability and Text Quality on Instruction Tasks}
    \item[L.9] \hyperref[cp_appendix]{Copy-Paste Robustness with Different Symbol Sizes}
    \item[L.10] \hyperref[1_bit_justification]{One-Bit Specialization of Mod-1 Mirroring}
    \item[L.11] Ablation Study for CABS
    \begin{itemize}
      \item[L.11.1] \hyperref[ablation_cabs_parameters]{Sensitivity to CABS Parameters}
      \item[L.11.2] \hyperref[cabs_ablation_scheduler]{Effect of Position Allocation Schedulers}
    \end{itemize}
  \end{itemize}
\end{itemize}

\clearpage
\newpage
\onecolumn
\section*{Appendix}

\section{Related Work}

\subsection{Zero-bit watermarking}\label{related-zero-bit}

Due to the discrete linguistic nature of text, designing effective watermarking schemes for digital text remains a challenging problem~\citep{shih2017digital}. Early approaches were primarily rule-based, including paraphrasing~\citep{atallah2002natural}, syntactic restructuring~\citep{atallah2001natural}, and synonym substitution~\citep{topkara2006hiding}. However, these methods relied on handcrafted transformations and were limited in scalability, naturalness, and robustness. The emergence of LLMs created new opportunities for watermarking because they are generative by nature, producing text token by token under probabilistic distributions. This generative process allows watermarking to be embedded directly in the sampling procedure rather than through post hoc text modifications. For example, KGW~\citep{kirchenbauer2023watermark} introduces the first in-generation watermarking scheme for LLMs and highlights a key property of reweighting-based watermarking: the watermark can be detected algorithmically without knowledge of the model parameters or access to the LLM API. Their method partitions the vocabulary into red and green token lists using a hash function seeded with the preceding context tokens, and then applies a small bias to the logits of green-list tokens. As a result, the watermarked LLM is more likely to generate green-list tokens. Detection is achieved by reconstructing the same lists and conducting hypothesis testing to evaluate whether a text was generated under the reweighted distribution. Subsequent works strengthened this family along several largely complementary dimensions. In particular, \cite{zhao2023provable} and~\cite{kirchenbauer2023reliability} provide formal robustness guarantees against distortion-bounded editing attacks, including insertion, deletion, and substitution. Focusing on deployment where detection must be publicly accessible, \cite{liu2024an} propose an unforgeable publicly verifiable watermark that decouples watermark generation and detection so that verification can proceed without revealing the generation key, mitigating counterfeiting risks in public detection settings. \cite{zhao2025can} explore whether decoding-time watermarks can support model-level IP infringement detection “for free” (i.e., without modifying training pipelines), and identifies practical failure modes such as query sensitivity and hash-key instability, motivating more reliable infringement-oriented detection procedures. Specifically, by retaining the configurations of red-green list, \cite{hu2023unbiased} and  DiPmark~\citep{wu2023dipmark} introduce an evolved family of permutation-based reweighting strategies for watermarking which maintains the expected distribution of the text; i.e., they proposed a stealthy or unbiased reweighting strategy for LLM watermarking. However, the detector in~\cite{hu2023unbiased} necessitates access to both the prompt and the output distribution provided by the LLM for
a given prompt, which requires the detector to know the prompt used to generate the text.

In contrast to distortion-based watermarking, which embeds signals by perturbing token probability distributions, recent works have explored distortion-free approaches based on inverse sampling. For example, \cite{christ2023undetectable} and \cite{kuditipudi2023robust} propose generating watermarked text without modifying the underlying distribution. However, the method in~\cite{christ2023undetectable} leaves open the challenge of resilience against text corruption. The scheme of~\cite{kuditipudi2023robust}, although tailored for robust detection, depends on hundreds of resampling steps during detection, which is computationally expensive for long texts. Beyond inverse sampling, other distortion-free watermarking techniques have also been proposed. \cite{Aaronson2023} introduce a Gumbel-max–based watermark, which is further extended by~\cite{fugumbelsoft} to improve generation diversity. SynthID~\citep{Dathathri2024} embeds watermarks through tournament sampling. Furthermore,~\cite{he2025theoretically} provides a theoretical framework that characterizes fundamental trade-offs between detectability, distortion, and false positive rate in LLM watermarking, and highlights the need for distribution-adaptive designs to achieve reliable detection under strict constraints.

\subsection{Multi-bit watermarking}\label{related-multi-bit}

 ThreeBricks~\citep{fernandez2023three} extends the schemes of KGW~\citep{kirchenbauer2023watermark} and AA~\citep{Aaronson2023}, respectively, by encoding multi-bit messages through cyclic shifts of the vocabulary permutation or sampling randomness according to the target message, enabling efficient multi-bit identification. Similarly, RSBH~\citep{qu2024provably} construct symbol-dependent cyclic shifts of the vocabulary permutation based on KGW and bias tokens in the corresponding green list to facilitate multi-bit decoding, further incorporating error-correcting codes to improve robustness. However, these approaches fundamentally formulate multi-bit recovery as a multi-class identification problem, where different messages correspond to shifted but otherwise symmetric hypotheses. As a result, under a constrained token budget, cyclic shifts alone do not explicitly maximize the statistical separation between the true message and incorrect alternatives, leading to limited decoding reliability as the message space grows. To compensate for this effect, strong bias is often required to amplify the signal. For example, RSBH sets the bias parameter to $\delta = 6.0$ to achieve high bit accuracy, at the cost of substantially increased text distortion. In contrast, MirrorMark uses a structured mirroring construction that increases matched--mismatched contrast while preserving the relevant pseudorandom distribution.
 
 MPAC~\citep{yoo2023advancing} introduces a multi-color technique. In this scheme, the pseudorandom vocabulary permutation (seeded by prior tokens) is partitioned into multiple equal-length segments, each represented by a distinct color. Message bits are then encoded by selecting color segments. For example, dividing the vocabulary into four colors requires a 2-bit message to specify a segment. During generation, the logits of tokens within the chosen color segment corresponding to the message are boosted by a fixed bias, steering the next token toward that segment.

Beyond color-based methods, other approaches focus on reducing or eliminating distortion. StealthInk~\citep{jiang2025stealthink} perturbs the distribution at each generation step but designs the watermark such that the overall distribution is preserved in expectation, maintaining fluency and text quality. However, its interval-based message mapping can provide weaker matched--mismatched separation under limited token budgets. DISC~\citep{boroujeny2024multi} and \cite{zamir2024excuse} propose fully distortion-free multi-bit schemes. These works demonstrate the feasibility of distribution-preserving payload embedding. However, DISC relies on cyclic-shift-based symbol mapping, and both lines of work are built on~\cite{christ2023undetectable} where practical robustness to text edits remains challenging.


\section{Derivation of Score Gaps for Mapping Rules in Section~\ref{sec:binary_tokenizer_mapping} with Binary Tokenizer}
\label{app:binary_mapping_gap}

In this section, we derive the expected per-token matched--mismatched score gaps in Table~\ref{tab:binary_mapping_gap}. Throughout this section, the binary entropy is defined as
\begin{equation}
\begin{aligned}
H_{\mathrm b}(p)
=
-p\log p-(1-p)\log(1-p).
\end{aligned}
\label{eq:binary_entropy_base2}
\end{equation}

Let $p=p_t^b$ for brevity. The binary sampler outputs token `$1$' if $0\leq u<p$ and token `$0$' if $p\leq u<1$. The binary score is
\begin{equation}
\begin{aligned}
s(x,u)
=
\begin{cases}
\ln \frac{1}{u}, & x=1,\\
\ln \frac{1}{1-u}, & x=0.
\end{cases}
\end{aligned}
\label{eq:app_binary_score}
\end{equation}
For the binary case, there is only one mismatched hypothesis. Thus, the per-token matched--mismatched margin is
\begin{equation}
\begin{aligned}
\Delta_t^b(M^\star)
&=
\mathbb{E}\!\left[
s_{M^\star}(x_t^b,u_t^b)-s_M(x_t^b,u_t^b)
\mid M^\star
\right] \\
&=
\mathbb{E}\!\left[
s_{M^\star}(x_t^b,u_t^b)
\mid M^\star
\right]
-
\mathbb{E}\!\left[
s_M(x_t^b,u_t^b)
\mid M^\star
\right],
\end{aligned}
\label{eq:app_binary_gap_def}
\end{equation}

where $M\neq M^\star$.

\paragraph{Matched score.}
Under the matched hypothesis, the aligned randomness is exactly the randomness used to generate the token. Therefore,
\begin{equation}
\begin{aligned}
\mathbb{E}\!\left[s_{M^\star}(x_t^b,u_t^b)\mid M^\star\right]
&=
\int_0^p \ln\frac{1}{u}\,du
+
\int_p^1 \ln\frac{1}{1-u}\,du\\
&=
\int_0^p \ln\frac{1}{u}\,du
+
\int_0^{1-p} \ln\frac{1}{v}\,dv\\
&=
\big(p-p\ln p\big)
+
\big((1-p)-(1-p)\ln(1-p)\big)\\
&=
1-p\ln p-(1-p)\ln(1-p)\\
&=
1+\ln(2)H_{\mathrm b}(p).
\end{aligned}
\label{eq:app_matched_score}
\end{equation}
Thus, all three mappings have the same expected matched score, because all of them preserve the same token-generation rule under the matched message.

\paragraph{Naive Multi-Key Christ Mapping.}
In the Naive Multi-Key Christ mapping, each symbol hypothesis uses an independent pseudorandom value. For any mismatched message $M\neq M^\star$, the reconstructed randomness is independent of the generated token. Therefore,
\begin{equation}
\begin{aligned}
\mathbb{E}\!\left[s_M(x_t^b,u_t^b)\mid M\neq M^\star\right]
&=
\Pr[x_t^b=1]\int_0^1 \ln\frac{1}{u}\,du
+
\Pr[x_t^b=0]\int_0^1 \ln\frac{1}{1-u}\,du\\
&=
p\cdot 1+(1-p)\cdot 1\\
&=
1.
\end{aligned}
\label{eq:app_mk_mismatch}
\end{equation}
Combining equations~\ref{eq:app_matched_score} and~\ref{eq:app_mk_mismatch}, we obtain
\begin{equation}
\begin{aligned}
\Delta_{t,\mathrm{Multi-Key}}^b(M^\star)
=
\ln(2)H_{\mathrm b}(p_t^b).
\end{aligned}
\label{eq:app_mk_gap}
\end{equation}

\paragraph{DISC Shift Mapping.}
For the DISC shift mapping, the two binary messages use shifts $\delta_0=0$ and $\delta_1=\frac{1}{2}$. Without loss of generality, consider $M^\star=0$ and the mismatched hypothesis $M=1$. The mismatched aligned randomness is
\begin{equation}
\begin{aligned}
\widetilde{u}
=
\left(u-\frac{1}{2}\right)\bmod 1.
\end{aligned}
\label{eq:app_disc_shifted_u}
\end{equation}
Therefore, the expected mismatched score is
\begin{equation}
\begin{aligned}
\mathbb{E}\!\left[s_M(x_t^b,u_t^b)\mid M\neq M^\star\right]
=
\int_0^p
\ln\frac{1}{(u-\frac{1}{2})\bmod 1}\,du
+
\int_p^1
\ln\frac{1}{1-\left((u-\frac{1}{2})\bmod 1\right)}\,du.
\end{aligned}
\label{eq:app_disc_integral}
\end{equation}
Let
\begin{equation}
\begin{aligned}
q=\left|p-\frac{1}{2}\right|.
\end{aligned}
\label{eq:app_disc_q}
\end{equation}
Splitting the integral at $u=\frac{1}{2}$ and simplifying gives
\begin{equation}
\begin{aligned}
\mathbb{E}\!\left[s_M(x_t^b,u_t^b)\mid M\neq M^\star\right]
&=
1-\ln(2)-q\ln q-(1-q)\ln(1-q)\\
&=
1+\ln(2)\left(H_{\mathrm b}(q)-1\right).
\end{aligned}
\label{eq:app_disc_mismatch}
\end{equation}
Define
\begin{equation}
\begin{aligned}
\widetilde{H}_{\mathrm{DISC}}(p)
=
\ln(2)\left[
H_{\mathrm b}\!\left(\left|p-\frac{1}{2}\right|\right)-1
\right].
\end{aligned}
\label{eq:app_h_disc_def}
\end{equation}
Then equation~\ref{eq:app_disc_mismatch} can be written as
\begin{equation}
\begin{aligned}
\mathbb{E}\!\left[s_M(x_t^b,u_t^b)\mid M\neq M^\star\right]
=
1+\widetilde{H}_{\mathrm{DISC}}(p).
\end{aligned}
\label{eq:app_disc_mismatch_short}
\end{equation}
Combining equations~\ref{eq:app_matched_score} and~\ref{eq:app_disc_mismatch_short}, we obtain
\begin{equation}
\begin{aligned}
\Delta_{t,\mathrm{DISC}}^b(M^\star)
=
\ln(2)H_{\mathrm b}(p_t^b)
-
\widetilde{H}_{\mathrm{DISC}}(p_t^b).
\end{aligned}
\label{eq:app_disc_gap}
\end{equation}

\paragraph{Swapping Mapping.}
For the swapping mapping, the two symbol hypotheses are complementary. Without loss of generality, consider $M^\star=0$. Then token `$1$' is generated when $u\in[0,p)$ and token `$0$' is generated when $u\in[p,1)$. The mismatched hypothesis $M=1$ uses the complementary randomness $1-u$. Hence,
\begin{equation}
\begin{aligned}
\mathbb{E}\!\left[s_M(x_t^b,u_t^b)\mid M\neq M^\star\right]
&=
\int_0^p \ln\frac{1}{1-u}\,du
+
\int_p^1 \ln\frac{1}{u}\,du\\
&=
1+p\ln p+(1-p)\ln(1-p)\\
&=
1-\ln(2)H_{\mathrm b}(p).
\end{aligned}
\label{eq:app_swap_mismatch}
\end{equation}
Therefore,
\begin{equation}
\begin{aligned}
\Delta_{t,\mathrm{Swap}}^b(M^\star)
&=
\left(1+\ln(2)H_{\mathrm b}(p_t^b)\right)
-
\left(1-\ln(2)H_{\mathrm b}(p_t^b)\right)\\
&=
2\ln(2)H_{\mathrm b}(p_t^b).
\end{aligned}
\label{eq:app_swap_gap}
\end{equation}

The same derivations hold when $M^\star=1$ by symmetry. Thus, under the same binary sampler, the three mappings preserve the same matched score but induce different mismatched scores. In particular, the swapping mapping doubles the margin of Naive Multi-Key Christ and is no smaller than the DISC shift margin. To see the latter, let $q=|p-\frac{1}{2}|$, and $p\in[0, 1]$. From equations~\ref{eq:app_disc_gap} and~\ref{eq:app_swap_gap},
\begin{equation}
\begin{aligned}
\Delta_{t,\mathrm{Swap}}^b(M^\star)
-
\Delta_{t,\mathrm{DISC}}^b(M^\star)
=
\ln(2)\left[
H_{\mathrm b}(p)+H_{\mathrm b}(q)-1
\right]\geq 0,
\end{aligned}
\end{equation}
Therefore, among the three mappings, swapping gives the largest expected per-token matched--mismatched margin by making the mismatched evidence complementary to the matched evidence.

\section{Motivation for Mod-1 Mirroring: Optimal Message-Hypothesis Separation}\label{motivation_mod1}

In this section, we provide the motivation for the mod-1 mirroring design from the perspective of symbol-hypothesis separation. We first show that, for binary messages, mirroring gives the optimal pairwise complementarity between the matched and mismatched hypotheses under the distortion-free constraint. That is, when the matched mirrored randomness $\Psi(u, \psi_{M^\star})$ is close to 1, the mismatched mirrored randomness $\Psi(u, \psi_{M})$ ($M\neq M^\star$) is close to 0. We then show that when the message space contains more than two hypotheses, such binary complementarity cannot be achieved for all pairs simultaneously. This motivates distributing the reflection centers as evenly as possible on the unit circle, which maximizes the worst-case pairwise separation.

We first consider the binary-message case. Let $Z_{M^\star}$ be the mirrored randomness under the matched hypothesis and $Z_M$ be that under the mismatched hypothesis. The decoding scores used in MirrorMark are monotone increasing in the mirrored value. For example, the Gumbel-max decoder uses $-\log(1-\Psi(u,\psi_M))$, and the tournament weighted-mean decoder uses $\Psi(u,\psi_M)$. Therefore, a larger mirrored value gives higher score for the corresponding symbol hypothesis.

For binary messages, the best situation is that the two symbol hypotheses give opposite evidence for the same randomness $u$, and thus whenever the matched value is large, the mismatched value should be small, and vice versa. This is achieved by
\begin{equation}
\begin{aligned}
Z_M=1-Z_{M^\star}.
\end{aligned}
\end{equation}
This relation is optimal for binary messages because both $Z_{M^\star}$ and $Z_M$ must still be uniformly distributed on $[0,1)$ to preserve distortion freeness. The mapping $Z_M=1-Z_{M^\star}$ keeps this marginal distribution unchanged, while making the two hypotheses maximally opposite point by point. Thus, it gives the largest pairwise contrast between the matched and mismatched hypotheses for any score that increases with the mirrored value.

This binary complementarity motivates a reflection-based construction. It suggests that different symbol hypotheses should be arranged as reflected versions of the same sampling randomness. To extend this idea beyond two messages, we assign each message $M$ a reflection center $\psi_M$ on the unit circle. Reflecting $u$ around the center $\psi_M$ gives a point $Z_M$ whose midpoint with $u$ is $\psi_M$, i.e.,
\begin{equation}
\begin{aligned}
\psi_M=\frac{u+Z_M}{2}.
\end{aligned}
\end{equation}
Solving for $Z_M$ yields
\begin{equation}
\begin{aligned}
Z_M=2\psi_M-u.
\end{aligned}
\end{equation}
Since the reflected value may fall outside $[0,1)$, we wrap it back to the unit interval and define the mod-1 mirroring map as
\begin{equation}
\begin{aligned}
Z_M=\Psi(u,\psi_M)=(2\psi_M-u)\bmod 1 .
\end{aligned}
\end{equation}

For two messages $M_a$ and $M_b$, this gives
\begin{equation}
\begin{aligned}
Z_{M_b}
&=\Psi(u,\psi_{M_b}) \\
&=(\Psi(u,\psi_{M_a})+2(\psi_{M_b}-\psi_{M_a}))\bmod 1 \\
&=(Z_{M_a}+\Delta_{b,a})\bmod 1,
\end{aligned}
\end{equation}
where
\begin{equation}
\begin{aligned}
\Delta_{b,a}=2(\psi_{M_b}-\psi_{M_a})\bmod 1 .
\end{aligned}
\end{equation}
Thus, the relative separation between two symbol hypotheses is determined by the circular offset between their effective centers.

When the message space contains more than two hypotheses, exact binary complementarity cannot hold for all pairs simultaneously. Suppose, for contradiction, that for a matched message $M^\star$, every incorrect message could satisfy the binary complement relation. Then for two distinct incorrect messages $M_1\neq M_2$, we would have
\begin{equation}
\begin{aligned}
Z_{M_1}=1-Z_{M^\star},\qquad
Z_{M_2}=1-Z_{M^\star}.
\end{aligned}
\end{equation}
Hence,
\begin{equation}
\begin{aligned}
Z_{M_1}=Z_{M_2},
\end{aligned}
\end{equation}
which makes these two incorrect hypotheses indistinguishable. Therefore, for $m>1$, the design objective cannot be exact complementarity with all incorrect hypotheses. Instead, the goal is to spread all symbol hypotheses as evenly as possible on the unit circle.

Let
\begin{equation}
\begin{aligned}
q_M=2\psi_M\bmod 1
\end{aligned}
\end{equation}
be the effective center of message $M$. The worst-case pairwise separation is
\begin{equation}
\begin{aligned}
\min_{M_a\neq M_b} d_{\mathrm{circ}}(q_{M_a},q_{M_b}),
\end{aligned}
\end{equation}
where
\begin{equation}
\begin{aligned}
d_{\mathrm{circ}}(x,y)=\min\{|x-y|,1-|x-y|\}.
\end{aligned}
\end{equation}
We now show that this quantity is maximized by uniformly spaced effective centers. Sort the $2^m$ effective centers as
\begin{equation}
\begin{aligned}
0\le q_0<q_1<\cdots<q_{2^m-1}<1 .
\end{aligned}
\end{equation}
Define the circular gaps
\begin{equation}
\begin{aligned}
g_i=q_{i+1}-q_i,\quad i=0,\ldots,2^m-2,
\end{aligned}
\end{equation}
and
\begin{equation}
\begin{aligned}
g_{2^m-1}=1-q_{2^m-1}+q_0 .
\end{aligned}
\end{equation}
Since the gaps sum to one,
\begin{equation}
\begin{aligned}
\sum_{i=0}^{2^m-1}g_i=1 .
\end{aligned}
\end{equation}
Therefore,
\begin{equation}
\begin{aligned}
\min_i g_i\le \frac{1}{2^m}.
\end{aligned}
\end{equation}
The minimum pairwise circular distance cannot exceed the minimum adjacent gap. Hence,
\begin{equation}
\begin{aligned}
\min_{M_a\neq M_b} d_{\mathrm{circ}}(q_{M_a},q_{M_b})
\le \frac{1}{2^m}.
\end{aligned}
\end{equation}
This upper bound is achieved by choosing
\begin{equation}
\begin{aligned}
q_M=\frac{M}{2^m},\qquad M=0,1,\ldots,2^m-1 .
\end{aligned}
\end{equation}
Since $q_M=2\psi_M\bmod 1$, this corresponds to
\begin{equation}
\begin{aligned}
\psi_M=\frac{M}{2^{m+1}},\qquad M=0,1,\ldots,2^m-1 .
\end{aligned}
\end{equation}
This is exactly the center assignment used in MirrorMark.

Finally, this separation-maximizing assignment is still distortion free. For every fixed $\psi_M$, the map $\Psi(\cdot,\psi_M)$ is a bijective and measure-preserving map on $[0,1)$. Therefore, if $U\sim\mathrm{Unif}(0,1)$, then $\Psi(U,\psi_M)\sim\mathrm{Unif}(0,1)$ for every message $M$. See Appendix~\ref{Distortion-freeness} for proof details. Thus, among mappings of the form $\Psi(u,\psi)=(2\psi-u)\bmod 1$, uniformly spaced centers, i.e., $\psi=\psi_M=\frac{M}{2^{m+1}}$, maximize the worst-case pairwise circular separation while preserving the marginal randomness required by the underlying distortion-free sampler.

\section{Representative Base Samplers and MirrorMark Instantiations}
\label{app:warmup}

In this appendix, we review the two representative zero-bit samplers used in our experiments, AA and SynthID, and then describe how MirrorMark instantiates decoding and detection on top of their score functions. Let $p(x_1), \dots, p(x_V)$ denote the probability distribution over the $V$-token vocabulary at generation step $t$, given by the LLM as $p_{\mathrm{LM}}(\cdot \mid x_{<t})$.\footnote{In our paper, uppercase characters such as $G$ and $U$ denote random variables, lowercase characters such as $g$ and $u$ denote their realizations, and bold characters such as $\boldsymbol{u}$ denote vectors.}

\subsection{Base Samplers}
\label{app:base_samplers}

\subsubsection{Gumbel-max Sampling}
\label{sec_gumbel}

The classical Gumbel trick~\citep{gumbel1954statistical} samples from the distribution by adding i.i.d.\ $\mathrm{Gumbel}(0,1)$ noise to the log-probabilities:
\vspace{-0.5em}
\begin{equation}
\begin{aligned}
x^\ast
=
\arg\max_i \left\{\log p(x_i)+G_i\right\},
\end{aligned}
\end{equation}

which guarantees $\Pr(x^\ast=x_i)=p(x_i)$. Using the representation $G_i=-\log(-\log U_i)$ with $U_i\sim\mathrm{Uniform}(0,1)$, this is equivalently
\vspace{-0.5em}
\begin{equation}
\label{eq:gumbel-form3}
\begin{aligned}
x^\ast
&=
\arg\max_{1\leq i\leq V}
\left[\log p(x_i)-\log(-\log U_i)\right] \\
&=
\arg\max_{1\leq i\leq V}
U_i^{1/p(x_i)} .
\end{aligned}
\end{equation}

To embed the watermark by Gumbel-max sampling, AA uses the watermark key and context tokens as the seed $r_t$ at step $t$, and sets $u_i=g(x_i,r_t)$ for each token $x_i$, where $g(\cdot,r_t)$ is a PRF with range $\mathrm{Uniform}(0,1)$. This construction embeds the watermark in the sampled token and allows detection by reproducing the same pseudorandom values. Since Gumbel-max sampling preserves the target distribution, the sampling process is distortion-free. However, as shown in~\eqref{eq:gumbel-form3}, the token with the largest $U_i^{1/p(x_i)}$ is always selected, so the generated response is deterministic for the same prompt and key.

\subsubsection{Tournament Sampling}
\label{sec_tournament}

SynthID proposes tournament sampling to embed a zero-bit watermark. Tournament sampling proceeds in $L$ layers. At layer $\ell$, a PRF $g^\ell(\cdot,r_t):\mathcal{V}\rightarrow[0,1]$ assigns each token a value $u^\ell$ using a seed $r_t$ derived from the watermark key and context tokens. Before the tournament starts, $n_0$ candidate tokens $\{c_1,\dots,c_{n_0}\}$ are sampled from the original distribution $p_{\mathrm{LM}}(\cdot\mid x_{<t})$, where $n_0=2^L$. At the first layer, the $n_0$ candidates are randomly paired. At each subsequent layer, the surviving candidates are paired according to the tournament structure. In each match, the token with the larger PRF value wins. After $L$ layers, the final surviving token $x_t$ is chosen as the output. Compared with AA, which deterministically samples the token for the same prompt and key, SynthID is probabilistic and therefore provides more generation diversity.

\subsection{Zero-Bit Detection Scores}
\label{sec_detection}

Given a text $x_1,\dots,x_T$, the detector in AA recomputes $u_t=g(x_t,r_t)$ for $t=1,\dots,T$. If the text is not watermarked, the $u_t$ values follow $\mathrm{Uniform}(0,1)$ i.i.d.; if watermarked, they are skewed toward larger values. AA uses the following score:
\vspace{-0.5em}
\begin{equation}
\label{eq:logscore}
\begin{aligned}
\mathrm{LogScore}(x)
=
-\sum_{t=1}^{T}\log(1-u_t).
\end{aligned}
\end{equation}

Following SynthID, let $u_{t,\ell}:=g^\ell(x_t,r_t)$ denote the value produced by the $\ell$-th tournament layer at step $t$, and let $\alpha_\ell$ be the corresponding layer weight. The weighted mean score is
\vspace{-0.5em}
\begin{equation}
\label{eq:wmeanscore}
\begin{aligned}
\mathrm{WeightedMeanScore}
=
\frac{1}{TL}
\sum_{t=1}^{T}
\sum_{\ell=1}^{L}
\alpha_\ell u_{t,\ell}.
\end{aligned}
\end{equation}

SynthID also uses a Bayesian score that aggregates evidence across tokens and layers:

\begin{equation}
\label{eq:bayes-score}
\begin{aligned}
\mathrm{BayesianScore}(x)
&=
P(w\mid\boldsymbol{u}) \\
&=
\sigma\!\left(
\log\frac{P(w\mid\boldsymbol{u})}{P(\neg w\mid\boldsymbol{u})}
\right) \\
&=
\sigma\!\left(
\log\frac{P(\boldsymbol{u}\mid w)}{P(\boldsymbol{u}\mid\neg w)}
+
\log\frac{P(w)}{1-P(w)}
\right),
\end{aligned}
\end{equation}
where $w$ and $\neg w$ denote the watermarked and non-watermarked hypotheses, respectively, $P(w)$ and $P(\neg w)$ are the priors, and $\sigma(\cdot)$ is the logistic sigmoid. Since $u_{t,\ell}$ is generated by a PRF and follows $\mathrm{Uniform}(0,1)$ under the non-watermarked hypothesis,
\vspace{-0.5em}
\begin{equation}
\label{nonwatermark_likelihood}
\begin{aligned}
P(\boldsymbol{u}\mid\neg w)
&=
\prod_{t=1}^{T}
\prod_{\ell=1}^{L}
P(u_{t,\ell}\mid\neg w) \\
&=
\prod_{t=1}^{T}
\prod_{\ell=1}^{L}
1
=
1.
\end{aligned}
\end{equation}
For the watermarked hypothesis, SynthID models the likelihood as
\vspace{-0.5em}
\begin{equation}
\label{watermark_likelihood}
\begin{aligned}
P(\boldsymbol{u}\mid w)
&=
\prod_{t=1}^{T}
\prod_{\ell=1}^{L}
P(u_{t,\ell}\mid w,u_{t,<\ell}) \\
&=
\prod_{t=1}^{T}
\prod_{\ell=1}^{L}
\sum_{c=1}^{2}
P(u_{t,\ell}\mid \pi_{t,\ell}=c)
P(\pi_{t,\ell}=c\mid w,u_{t,<\ell}),
\end{aligned}
\end{equation}
where $\pi_{t,\ell}\in\{1,2\}$ denotes the number of distinct $u$ values in the pairwise tournament at layer $\ell$ for the $t$-th token. SynthID derives $P(u_{t,\ell}\mid \pi_{t,\ell}=c)$ from the distribution of watermarked $u_{t,\ell}$ conditioned on $\pi_{t,\ell}$, and learns $P(\pi_{t,\ell}=c\mid w,u_{t,<\ell})$ with a logistic regression model.

\subsection{MirrorMark Decoding and Detection Instantiations}
\label{app:mirror_decoder_details}

This subsection gives the concrete decoders used when MirrorMark is instantiated with the two base samplers reviewed above. In all cases, the detector first replays CABS to assign eligible tokens to message positions, reconstructs the corresponding pseudorandom values, and evaluates each candidate message $M\in\mathcal{M}$ by applying the mirrored transformation $\Psi(\cdot;\psi_M)$ before using the base-sampler score.

\subsubsection{Gumbel-max-based MirrorMark}
\label{app:gumbel_decoder}

For Gumbel-max-based MirrorMark, suppose a position receives $K$ tokens with reconstructed values $\{u_i\}_{i=1}^{K}$. The detector decodes the symbol at this position by applying the LogScore in~\eqref{eq:logscore} to the mirrored values:
\vspace{-0.5em}
\begin{equation}
\begin{aligned}
\widehat{M}
=
\arg\max_{M\in\mathcal{M}}
-\sum_{i=1}^{K}
\log\left(1-\Psi(u_i;\psi_M)\right).
\end{aligned}
\label{eq:gumbeldecoder}
\end{equation}

\subsubsection{Tournament-based MirrorMark}
\label{app:tournament_decoder}

For tournament-based MirrorMark, suppose a position receives $K$ tokens and each token has $L$ tournament-layer values $\{u_{t,\ell}\}_{t=1,\ldots,K;\ell=1,\ldots,L}$. With the weighted mean score in~\eqref{eq:wmeanscore}, the decoded symbol is
\vspace{-0.5em}
\begin{equation}
\begin{aligned}
\widehat{M}
=
\arg\max_{M\in\mathcal{M}}
\frac{1}{K}
\sum_{t=1}^{K}
\frac{1}{L}
\sum_{\ell=1}^{L}
\alpha_\ell\Psi(u_{t,\ell};\psi_M).
\end{aligned}
\label{eq:wmeandecoder}
\end{equation}

For the Bayesian decoder, let $U=\{u_{i,\ell}\}_{i=1,\ldots,K;\ell=1,\ldots,L}$ be the reconstructed values assigned to one position. The decoder selects

\begin{equation}
\begin{aligned}
\widehat{M}
=
\arg\max_{M\in\mathcal{M}}
\left\{
\log P(M)+\log P(U\mid M,w)
\right\},
\end{aligned}
\label{eq:bayesiandecoder_compact}
\end{equation}
where $P(U\mid M,w)$ is computed by applying the SynthID likelihood model to the mirrored values $\Psi(u_{i,\ell};\psi_M)$. Then

\begin{equation}
\begin{aligned}
P(U\mid M,w)
=
\prod_{i=1}^{K}
\prod_{\ell=1}^{L}
\sum_{c=1}^{2}
P\big(\Psi(u_{i,\ell};\psi_M)\mid \pi_{i,\ell}=c\big)
P\big(\pi_{i,\ell}=c\mid w,\Psi(u_{i,<\ell};\psi_M)\big).
\end{aligned}
\label{eq:bayesian_likelihood_decoder}
\end{equation}

\subsubsection{Global Detection}
\label{app:global_detection}

After decoding all positions, the detector mirrors each token using the symbol decoded for its assigned position. The mirrored values are aggregated using the corresponding base-sampler score: LogScore for Gumbel-max, WeightedMeanScore for tournament weighted mean, or BayesianScore for tournament Bayesian detection. The text is declared watermarked if the resulting global score exceeds a predefined threshold.

\section{Distortion-freeness of Gumbel-max and tournament-based MirrorMark}\label{Distortion-freeness}

\begin{theorem}[Distortion-freeness of MirrorMark]
Suppose the underlying zero-bit watermarking sampler is distortion-free whenever its randomness follows $U\sim \mathrm{Uniform}(0,1)$. In particular, this holds for AA and SynthID, which are the zero-bit watermarking baselines, where distortion-freeness is established in Appendix B.1.1 and Appendix G of SynthID, respectively. If $U$ is replaced by the mod-1 mirroring transformation $\Psi(U;\psi_M)$ as in~\eqref{eq:mod1_reflect_compact}, where $\psi_M$ is determined by the embedded message symbol $M$, then the resulting sampler remains distortion-free. In particular, both AA-based MirrorMark and SynthID-based MirrorMark are distortion-free.
\end{theorem}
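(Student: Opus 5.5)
The plan is to reduce everything to one measure-theoretic fact: for each fixed pivot $\psi_M$, the map $T_M(u)=(2\psi_M-u)\bmod 1$ is a bijection of $[0,1)$ that preserves Lebesgue measure. Once this is established, the distortion-freeness hypothesis on the base sampler transfers directly.

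\textbf{Step 1: bijectivity and measure preservation.} I will show that $T_M$ is an involution, i.e.\ $T_M(T_M(u))=u$, and hence a bijection. I will also show that it is piecewise affine with slope $-1$ on the two pieces $[0,c]$ and $(c,1)$, where $c=2\psi_M\bmod 1$. Concretely, $T_M(u)=c-u$ on the first piece and $c-u+1$ on the second.

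Then for any interval $[a,b)\subset[0,1)$, I will verify that
\[
\Pr\bigl[T_M(U)\in[a,b)\bigr]=\mathrm{Leb}\bigl(T_M^{-1}([a,b))\bigr)=b-a .
\]
This holds because the preimage is a union of at most two intervals of total length $b-a$, since each affine piece is a reflection. Since intervals generate the Borel $\sigma$-algebra, it follows that $T_M(U)\sim\mathrm{Uniform}(0,1)$. For the one-bit specialization in \eqref{eq:mod1_m1}, the map is either the identity or $u\mapsto 1-u$, and both are trivially measure-preserving.

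\textbf{Step 2: vectors of PRF values.} Both samplers use many uniforms per step: one per vocabulary token for Gumbel-max, and one per token per layer for tournament sampling. I will apply $T_M$ coordinatewise. Because the map is deterministic given $M$ (and $M$ is fixed for the step), independent i.i.d.\ uniforms map to independent i.i.d.\ uniforms. Hence the joint law of the mirrored randomness vector equals the law assumed by the base sampler. This is where the idealization matters. Following the paper's modeling, I will treat the PRF outputs as truly i.i.d.\ uniform and independent of the language model distribution $\mathbf{p}_t$.

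\textbf{Step 3: conclusion.} The watermarked token is $x_t=\mathsf{Samp}_{\mathcal{A}}(\mathbf{p}_t,T_M(\mathbf{z}_t))$. Conditional on the context, $T_M(\mathbf{z}_t)$ has the same law as $\mathbf{z}_t$, so $x_t$ has the same law as the base sampler's output. The base sampler is distortion-free, so this law is $\mathbf{p}_t$.

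For AA, I will check this directly. Since $\Pr[\arg\max_i U_i^{1/p_i}=j]=p_j$ for i.i.d.\ uniforms, the claim follows. For SynthID, I will cite its Appendix G result, which holds for i.i.d.\ uniform layer values.

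Finally, I will extend to the full sequence by induction over $t$ using the chain rule, conditioning on $x_{<t}$. The CABS assignment of the position (and thus of $M$) depends only on the key and the context. Given $x_{<t}$, it is therefore fixed and does not affect the argument.

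\textbf{Main obstacle.} I expect the main subtlety to be this last conditioning point. I need to ensure that the choice of symbol $M$ at step $t$ is independent of the fresh randomness $\mathbf{z}_t$. Otherwise, the pushforward argument fails, since applying a data-dependent choice among measure-preserving maps need not preserve uniformity. I will argue that independence holds because the scheduler's tie-breaking uses a hash of the context that is separate from the PRF values $\mathbf{z}_t$ at that step. I will state this as a modeling assumption, in the same idealized-PRF sense used throughout the paper.
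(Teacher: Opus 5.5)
Your proposal is correct and follows essentially the same route as the paper, which also reduces distortion-freeness to showing $\Psi(U;\psi_M)\sim\mathrm{Uniform}(0,1)$ for each fixed $M$. It does so via the same piecewise slope-$(-1)$ form and a case check that the preimage of every interval $[a,b)$ has length $b-a$. Your extra steps make the argument more complete than the paper's, which leaves them implicit: the involution argument for bijectivity, the coordinatewise extension to the PRF vectors, the chain-rule induction over $t$, and the explicit assumption that the CABS-selected symbol is independent of the step-$t$ randomness.
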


\begin{proof}
Since AA and SynthID are already distortion-free under uniform randomness, it suffices to show that for any fixed $M$,
\begin{equation}
\begin{aligned}
U\sim \mathrm{Uniform}(0,1)
\;\Rightarrow\;
\Psi(U;\psi_M)\sim \mathrm{Uniform}(0,1).
\end{aligned}
\end{equation}

Fix $\psi:=\psi_M= \frac{M}{2^{m+1}}\in[0,1)$. The mod-1 mirroring in~\eqref{eq:mod1_reflect_compact} can be written equivalently as
\begin{equation}\label{mod-1-mirroring-proof}
\begin{aligned}
\Psi(U;\psi)=
\begin{cases}
2\psi-U, & 0\le U\le 2\psi,\\
2\psi-U+1, & 2\psi<U<1.
\end{cases}
\end{aligned}
\end{equation}

\paragraph{To prove $\boldsymbol{\Psi(U;\psi)\sim \mathrm{Uniform}(0,1)}$,}
let $[a,b)\subseteq[0,1)$ be any interval. It suffices to show
\begin{equation}
\begin{aligned}
\Pr(\Psi(U;\psi)\in[a,b)) = b-a.
\end{aligned}
\end{equation}

Since $U\sim \mathrm{Uniform}(0,1)$, this is equivalent to showing that the preimage $\Psi^{-1}([a,b))$ has length $b-a$. Let $u$ denote a realization of $U$. We consider three cases.

\paragraph{Case 1: $[a,b)\subseteq[0,2\psi]$.}
From~\eqref{mod-1-mirroring-proof},
\begin{equation}
\begin{aligned}
a \le 2\psi - u < b
\;\Rightarrow\;
2\psi - b < u \le 2\psi - a.
\end{aligned}
\end{equation}
Thus
\begin{equation}
\begin{aligned}
\Psi^{-1}([a,b)) = (2\psi-b,\;2\psi-a],
\end{aligned}
\end{equation}
whose length is
\begin{equation}
\begin{aligned}
(2\psi-a)-(2\psi-b)=b-a.
\end{aligned}
\end{equation}

\paragraph{Case 2: $[a,b)\subseteq[2\psi,1)$.}
From~\eqref{mod-1-mirroring-proof},
\begin{equation}
\begin{aligned}
a \le 2\psi - u + 1 < b
\;\Rightarrow\;
2\psi + 1 - b < u \le 2\psi + 1 - a.
\end{aligned}
\end{equation}
Thus
\begin{equation}
\begin{aligned}
\Psi^{-1}([a,b)) = (2\psi+1-b,\;2\psi+1-a],
\end{aligned}
\end{equation}
whose length is
\begin{equation}
\begin{aligned}
(2\psi+1-a)-(2\psi+1-b)=b-a.
\end{aligned}
\end{equation}

\paragraph{Case 3: $a<2\psi<b$.}
The interval crosses the split point $2\psi$, so the preimage consists of two parts, where the length from first branch is $2\psi - a$, and the length from second branch is $b - 2\psi$.
Hence
\begin{equation}
\begin{aligned}
|\Psi^{-1}([a,b))|
=
(2\psi-a)+(b-2\psi)
=
b-a.
\end{aligned}
\end{equation}

Therefore, for all intervals $[a,b)\subseteq[0,1)$,
\begin{equation}
\begin{aligned}
\Pr(\Psi(U;\psi)\in[a,b)) = b-a,
\end{aligned}
\end{equation}
which implies
\begin{equation}
\begin{aligned}
\Psi(U;\psi)\sim \mathrm{Uniform}(0,1).
\end{aligned}
\end{equation}

Thus, mod-1 mirroring preserves the uniform distribution. Since AA and SynthID are distortion-free under uniform randomness, replacing $U$ with $\Psi(U;\psi_M)$ does not alter the randomness distribution seen by the sampler, and therefore does not change the resulting token distribution. Hence MirrorMark remains distortion-free.
\end{proof}

\section{CABS Scheduling Algorithm~\ref{alg:CABS}}\label{cabs_scheduling_algo}
Specifically, the function $\textsf{Elig}(\cdot)$ in Algorithm~\ref{alg:CABS} specifies the eligibility condition for watermarking a token, i.e., the context $h$ tokens are not repeated for the current generation step. Since each random value is generated by a PRF seeding the context $h$ tokens, this restriction avoids correlations between consecutive watermarking decisions and helps maintain the statistical independence of the resulting pseudorandom draws.
\begin{algorithm}[htbp]
\caption{CABS Scheduling}\label{alg:CABS}
\begin{algorithmic}[1]
\REQUIRE Eligibility function $\mathsf{Elig}(\cdot)$, secret key $\mathsf{sk}$, message length $H$, counter vector $\boldsymbol{c}\gets \mathbf{0}^H$, queue $Q\gets[\,]$, window size $W$, $f$, count of tokens within a frame $\ell\gets 0$, context length $h$, $\texttt{min\_len}$, $\texttt{max\_factor}$, sequence $\boldsymbol{x}_{0:T-1}$
\ENSURE Position assignment for each eligible token
\STATE $\texttt{max\_len}=\texttt{max\_factor}\times H$
\FOR{$i=h,\dots,T-1$}
    \IF{not $\mathsf{Elig}(\boldsymbol{x}_{i-h:i-1})$}
        \STATE \textbf{continue}
    \ELSE
        \STATE $F \gets \mathsf{Hash}(Q)$
        \STATE $Q.\mathsf{enqueue}(\boldsymbol{x}_i)$
        \IF{$|Q|>W$}
            \STATE $Q.\mathsf{dequeue}(\boldsymbol{x}_{i-W})$
        \ENDIF
        \STATE $min\_pos=\argmin(\boldsymbol{c})$ \%\% Select the positions with the fewest tokens from counter vector for tokens-to-positions mapping 
        \STATE $pos \sim \mathsf{Unif}\big(min\_pos\big)$ \%\% Randomly select a position, seeded by $\mathsf{PRF}_{\mathsf{sk}}(\boldsymbol{x}_{i-h:i-1})$
        \STATE $\boldsymbol{c}_{pos} \gets \boldsymbol{c}_{pos}+1$ \%\% Increment the count for the assigned position
        \STATE $\ell \gets \ell+1$ \%\% Increment the count of tokens for the current frame
        \STATE $\textit{cut } (\mathsf{true} \text{ or } \mathsf{false}) \gets \big(\ell \ge \texttt{min\_len} \ \wedge\ (F \bmod 2^f==0)\big)\ \vee\ (\ell \ge \texttt{max\_len})$ \%\% Whether to end the current frame and start a new one
        \IF{$\textit{cut}$}
            \STATE $\boldsymbol{c}\gets \mathbf{0}^H$, \quad $Q\gets[\,]$, \quad $\ell\gets 0$
        \ENDIF
    \ENDIF
\ENDFOR
\end{algorithmic}
\end{algorithm}

\section{CABS-based Encoder Algorithm~\ref{alg:encoder}}\label{cabs_encoder}

\begin{algorithm}[htbp]
\caption{CABS-based Encoder}\label{alg:encoder}
\begin{algorithmic}[1]
\REQUIRE CABS parameters $\big(\mathsf{sk}, H, W, f, h, \texttt{min\_len}, \texttt{max\_factor}\big)$, prompt $\boldsymbol{a}$, length $T$, message sequence with $H$ positions $\texttt{MsgSeq}\in\{0,\dots,2^m-1\}^H$, original distribution $p_{LM}$, watermarked distribution $p_{wm}$
\ENSURE Generated sequence $\boldsymbol{x}_{0:T-1}$
\STATE $\texttt{max\_len}=\texttt{max\_factor}\times H$
\STATE $\mathsf{cabs}\gets \mathsf{CABS}\big(\mathsf{sk}, H, W, f, h, \texttt{min\_len}, \texttt{max\_len}\big)$
\FOR{$t=0$ to $T-1$}
    \IF{$t<h$}
        \STATE Sample $x_t \sim p_{LM}\big(\cdot \mid \boldsymbol{a}, \boldsymbol{x}_{:t-1}\big)$
    \ELSE
        \STATE $pos \gets \mathsf{cabs}(\boldsymbol{x}_{:t-1})$
        \STATE Sample $x_t \sim p_{wm}\big(\cdot \mid \boldsymbol{a}, \boldsymbol{x}_{:t-1}, \texttt{MsgSeq}[pos]\big)$
    \ENDIF
\ENDFOR
\end{algorithmic}
\end{algorithm}

\section{CABS-based Decoding and Detection Algorithm~\ref{alg:decoder}}\label{cabs_decoder}
\begin{algorithm}[htbp]
\caption{CABS-based Decoding \& Detection}\label{alg:decoder}
\begin{algorithmic}[1]
\REQUIRE Sequence $\boldsymbol{x}_{0:T-1}$, secret key $\mathsf{sk}$, message length $H$, context length $h$, CABS params $\big(\mathsf{Elig}(\cdot), W, f, \texttt{min\_len}, \texttt{max\_factor}\big)$, decoder choice $\texttt{DEC}\in\{\texttt{gumbel},\texttt{wmean},\texttt{bayes}\}$, scorer choice $\texttt{SCORER}\in\{\texttt{gumbel},\texttt{wmean},\texttt{bayes}\}$, threshold $\texttt{thres}$
\ENSURE Message sequence $\texttt{MsgSeq}\in\{0,\dots,2^m-1\}^H$ and a decision $\in\{\mathsf{true},\mathsf{false}\}$ on whether $\boldsymbol{x}_{0:T-1}$ is watermarked
\STATE $\texttt{max\_len}=\texttt{max\_factor}\times H$
\STATE $\mathsf{cabs}\gets\mathsf{CABS}\big(\mathsf{Elig}(\cdot),\mathsf{sk},H,W,f,h,\texttt{min\_len},\texttt{max\_len}\big)$
\STATE Initialize $\mathcal{U} \gets \{\, \textit{pos} : [\,] \mid \textit{pos}=1,\dots,H \,\}$, \quad $\mathcal{U}_{\text{mirror}}\gets[\,]$
\FOR{$t=h,\dots,T-1$}
    \STATE $\textit{pos}\gets \mathsf{cabs}(\boldsymbol{x}_{:t})$
    \STATE Generate random value $u_t$ seeding $\mathsf{sk}$ and $\boldsymbol{x}_{t-h:t}$
    \STATE $\mathcal{U}[\textit{pos}].\mathsf{append}(u_t)$
\ENDFOR
\FOR{$\textit{pos}=1,\dots,H$}
    \STATE $\texttt{MsgSeq}[\textit{pos}]\gets \mathsf{SymbolDecoder}\big(\mathcal{U}[\textit{pos}];\,\texttt{DEC}\big)$ \%\% If $\texttt{DEC}=\texttt{gumbel}$ use~\eqref{eq:gumbeldecoder}; if $\texttt{DEC}=\texttt{wmean}$ use~\eqref{eq:wmeandecoder}; if $\texttt{DEC}=\texttt{bayes}$ use~\eqref{eq:bayesiandecoder_compact}.
    \FOR{each $u\in\mathcal{U}[\textit{pos}]$}
        \STATE $u_{\text{mir}}\gets \Psi\big(u,\,\psi_{\texttt{MsgSeq}[\textit{pos}]})\big)$
        \STATE $\mathcal{U}_{\text{mirror}}.\mathsf{append}(u_{\text{mir}})$
    \ENDFOR
\ENDFOR
\STATE $\textit{score}\gets \mathsf{Score}\big(\mathcal{U}_{\text{mirror}};\,\texttt{SCORER}\big)$ \%\% If $\texttt{SCORER}=\texttt{gumbel}$ use~\eqref{eq:logscore}; if $\texttt{SCORER}=\texttt{wmean}$ use~\eqref{eq:wmeanscore}; if $\texttt{SCORER}=\texttt{bayes}$ use~\eqref{eq:bayes-score}.
\STATE \textbf{return} $\mathsf{\textbf{true}}$ if $\textit{score}>\texttt{thres}$, else $\mathsf{\textbf{false}}$
\end{algorithmic}
\end{algorithm}

\section{Theoretical EER of MirrorMark}\label{app:theoretical_eer}

In the following, we analyze the theoretical EER of Gumbel-max and tournament-based MirrorMark with the number of positions $H$
=1.

\subsection{Gumbel-max-based multibit watermarking}\label{gumbel_eer}

Recall the sequence-level score of text $W$ for message $M$  is derived as follows, where $\mathsf{sk}$ is the watermark key and $u_t$ is the random value seeded by $\mathsf{sk}$, and $h$ context tokens from $W_{t-h:t}$, 
\begin{equation}
    \begin{aligned}
S_{M}(W_t,\mathsf{sk})&=\ln\frac{1}{1-\Psi(u_{t},\psi(M))},\\
C_{M}(W,\mathsf{sk})
&= \frac{1}{T}\sum_{t=1}^T S_{M}(W_t,\mathsf{sk}).
    \end{aligned}
\end{equation}

Under the null hypothesis $\mathcal{H}_{0}$, all $C_M$ share the same non-watermarked distribution. Under the alternative hypothesis $\mathcal{H}_{1}$,
exactly one index $M^\star$ is ``signal'', representing the message embedded by the encoder.

Under $\mathcal{H}_{0}$, $\Psi\sim\mathrm{Uniform}(0,1)$ and hence
$S_{M}(W_t,\mathsf{sk})\stackrel{d}{=}\mathrm{Exp}(1)$. Therefore,
\begin{equation}
\label{eq:gumbel-H0-moments}
\begin{aligned}
    \mathbb{E}[C_{M}(W,\mathsf{sk})\mid\mathcal{H}_{0}]&=\mu_{\mathcal{H}_{0}}=1,\\
\mathrm{Var}(C_{M}(W,\mathsf{sk})\mid\mathcal{H}_{0})&=\sigma_{\mathcal{H}_{0}}^2=\frac{1}{T}.
\end{aligned}
\end{equation}

 Under $\mathcal{H}_{1}$, referring to equation (14) in ThreeBricks~\citep{fernandez2023three}, for the $t$-th watermarked token with bias $p_t\in(0,1]$,
$\Psi(u_t,\psi(M^\star))\sim\mathrm{Beta}\!\big(\tfrac{1}{p_t},1\big)$ so that
$1-\Psi\sim\mathrm{Beta}\!\big(1,\tfrac{1}{p_t}\big)$. According to the digamma function $\psi_{0}$ and trigamma function $\psi_{1}$ defined in~Lemma~\ref{lem:beta-log-digamma},
\begin{equation}
    \begin{aligned}
        \mathbb{E}[S_{M^\star}(W_t,\mathsf{sk})]
&= \psi_{0} \Big(1+\tfrac{1}{p_t}\Big)-\psi_{0}(1)
:= H_{1/p_t}, \\
 \mathrm{Var}\!\big(S_{M^\star}(W_t,\mathsf{sk})\big)
&= \psi_1(1)-\psi_1\!\Big(1+\tfrac{1}{p_t}\Big).
    \end{aligned}
\end{equation}

\paragraph{Relating $p_t$ to the vocabulary size.} In Gumbel-max sampling for an LLM with the vocabulary size of $V$, suppose $\kappa V$ candidates enter a uniform competition, which means each candidate receives an i.i.d. PRF value $U\sim\mathrm{Uniform}(0,1)$ and the winner achieves $U_{(\kappa V)}=\max\{U_1,\ldots,U_{\kappa V}\}\sim\mathrm{Beta}(\kappa V,1)$. Intuitively, $\kappa$ increases with the entropy. Therefore, we can identify an effective pool size $\kappa V\simeq 1/p$. Equivalently, $\kappa$ can be characterized via the entropy of the next-token
distribution. Let $p_t(\cdot)=p_{\mathrm{LM}}(\cdot \mid x_{<t})$ denote the next-token probability distribution at step $t$, and define the entropy as
\begin{equation}
\begin{aligned}\label{entropy_eq}
    \mathcal{H}(p_t) \;\triangleq\; -\sum_{i=1}^{V} p_t(i)\log p_t(i).
\end{aligned}
\end{equation}

The effective number of competing tokens is then given by $\exp\big(\mathcal{H}(p_t)\big)$, which corresponds to the size of a uniform distribution with the same
uncertainty. Accordingly, we have $\kappa \approx \exp\big(\mathcal{H}(p_t)\big)/V$. Furthermore, we set $\frac{1}{p_t}=\kappa V=\exp\big(\mathcal{H}(p_t)\big)$ as estimated in a development set. For large $V$, using the expansions in Lemma~\ref{lem:beta-log-digamma}, we have
\begin{equation}\label{large_v_approx}
    \begin{aligned}
        &H_{1/p_t}
        = \ln(\exp\big(\mathcal{H}(p_t)\big)) + \gamma
        = \mathcal{H}(p_t) + \gamma, \\
        &\psi_1(1)-\psi_1\!\big(1+\exp\big(\mathcal{H}(p_t)\big)\big)
        = \frac{\pi^2}{6},
    \end{aligned}
\end{equation}
where $\gamma$ is Euler's constant defined in
Lemma~\ref{lem:beta-log-digamma}.

Therefore, for the true message $M^\star$,
\begin{equation}
\begin{aligned}\label{H1_distribution}
   \mathbb{E}[C_{M^\star}\mid\mathcal{H}_1]
   &= \mu_{\mathcal{H}_1}
   = \frac{1}{T}\sum_{t=1}^T H_{1/p_t}= \frac{1}{T}\sum_{t=1}^T \mathcal{H}(p_t)+\gamma, \\
   \mathrm{Var}[C_{M^\star}\mid\mathcal{H}_1]
   &= \sigma_{\mathcal{H}_1}^2
   = \frac{\pi^2}{6T^2}.
\end{aligned}
\end{equation}

Let $Z=\max_{M\in\{0, \dots, 2^m-1\}}\{C_{M}\}$. Since the sequence-level score $C_{M}(W,\mathsf{sk})$ averages over
$T$ tokens,  the Central Limit Theorem (CLT) suggests that, as 
$T$ grows, $C_{M}(W,\mathsf{sk})\sim \mathcal{N}(\mu_{\mathcal{H}_0}, \sigma^2_{\mathcal{H}_0})$. Besides, although the statistics $\{C_M\}$ are not strictly independent since they are calculated on the same text, 
each $C_M$ is an average of $T$ per-token scores with variance $O(1/T)$. 
As $T$ grows, the variance of each $C_M$ shrinks. Therefore, the event $\{Z>\tau\}$ is potentially caused by one candidate $C_M$ exhibiting an unusually large deviation, 
rather than by simultaneous moderate deviations of many correlated $C_M$. Although $\{C_M\}$ for different candidate messages $M$ are not exactly independent, by Lemma~\ref{lem:maxK}, we obtain
\begin{equation}\label{fpr}
    \begin{aligned}
        \mathrm{FPR}(\tau)
= \Pr\big(Z>\tau\mid \mathcal{H}_{0}\big)
&= 1-\Big[\Phi\big(\tfrac{\tau-\mu_{\mathcal{H}_{0}}}{\sigma_{\mathcal{H}_{0}}}\big)\Big]^{2^m}\approx 2^m Q\!\big(\tfrac{\tau-\mu_{\mathcal{H}_{0}}}{\sigma_{\mathcal{H}_{0}}}\big),
    \end{aligned}
\end{equation}
where as defined in Lemma~\ref{lem:maxK}, $\Phi(\cdot)$ denotes the cumulative distribution function of the standard normal distribution while $Q(\cdot)$ is the gaussian tail probability. Here, we justify the accuracy of the approximation in \eqref{fpr}, where $\Pr\big(Z>\tau\mid \mathcal{H}_{0}\big)$ is empirically derived by collecting a bunch of $Z$, while $\mu_{H_0}$ and $\sigma_{H_0}$ is calculated as in~\eqref{eq:gumbel-H0-moments}. As shown in Fig.~\ref{fig:analysis_gumbel1}, we observe the all three curves closely match near the decision region ($\boldsymbol{\tau \approx 1.33}$ with $m=3$ and $T=200$), indicating that \eqref{fpr} provides an accurate approximation in practice despite the independence simplification.

\begin{figure}[htbp]
  \centering
  \captionsetup[subfigure]{justification=centering}

  \begin{subfigure}[t]{0.49\linewidth}
    \centering
    \includegraphics[width=\linewidth]{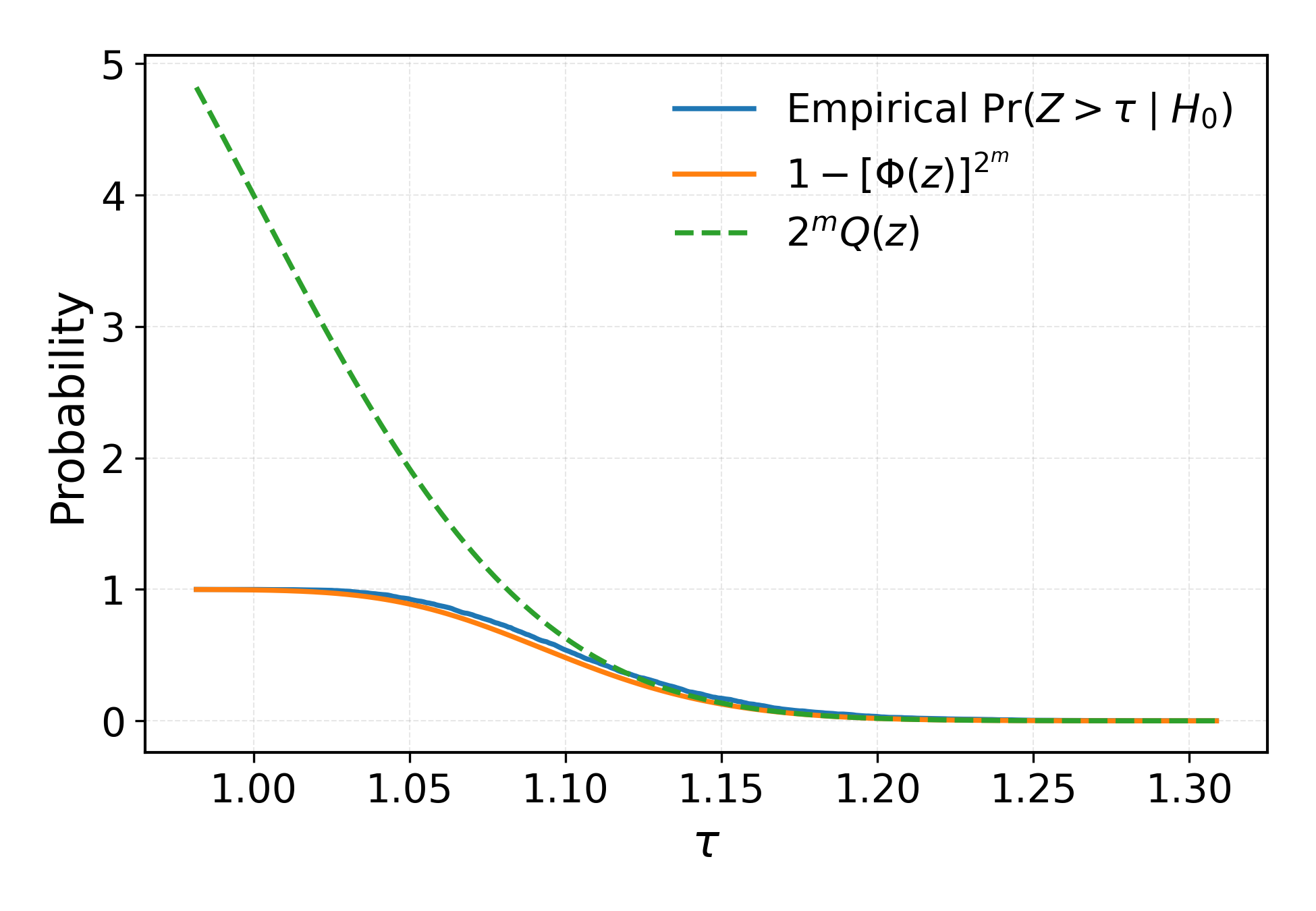}
    \caption{$\Pr(Z>\tau \mid \mathcal{H}_0)$ approximation}
    \label{fig:analysis_gumbel1}
  \end{subfigure}
  \hfill
  \begin{subfigure}[t]{0.49\linewidth}
    \centering
    \includegraphics[width=\linewidth]{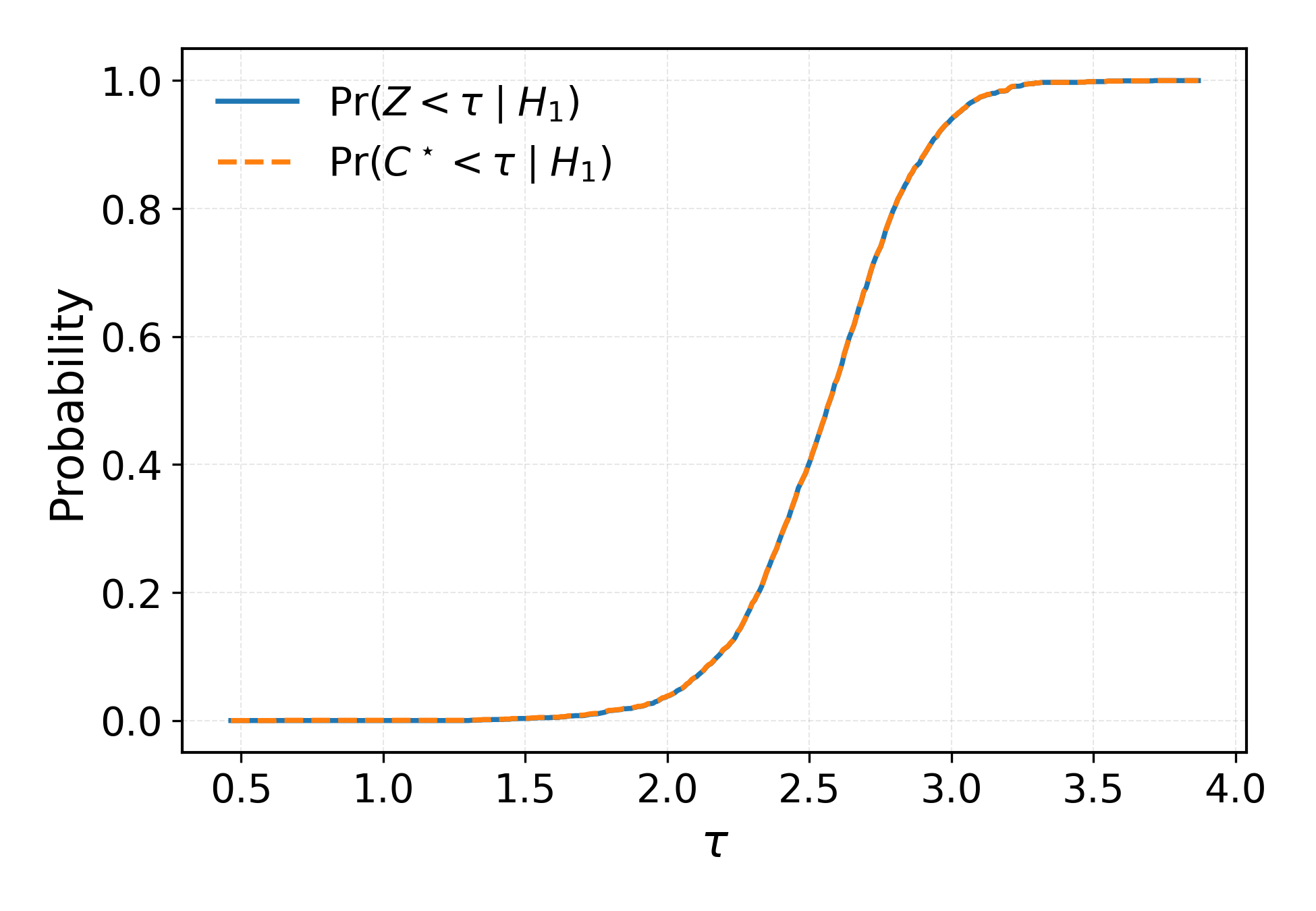}
    \caption{$\Pr(Z<\tau \mid \mathcal{H}_1)$ approximation}
    \label{fig:analysis_gumbel2}
  \end{subfigure}

  \caption{Validation of Gaussian approximations under $\mathcal{H}_0$ and $\mathcal{H}_1$.}
  \label{fig:analysis_gumbel}
\end{figure}

Similarly, under $\mathcal{H}_{1}$, we can approximate $C_{M^{\star}}(W,\mathsf{sk})\sim \mathcal{N}(\mu_{\mathcal{H}_1}, \sigma^2_{\mathcal{H}_1})$ and calculate FNR as

\begin{equation}\label{fnr}
    \begin{aligned}
        \mathrm{FNR}(\tau)
&=\Pr(Z < \tau \mid H_1) = \Pr(C_{M^*} < \tau \;\text{and}\; C_M < \tau,\ \forall M \neq M^* \mid H_1)\approx\Pr\big(C_{M^{\star}}<\tau\mid \mathcal{H}_{1}\big)\\&=\Phi\big(\frac{\tau-\mu_{\mathcal{H}_1}}{\sigma_{\mathcal{H}_1}}\big)=Q\big(\frac{\mu_{\mathcal{H}_1}-\tau}{\sigma_{\mathcal{H}_1}}\big).
    \end{aligned}
\end{equation}

Similarly, to justify the approximation of $\Pr\big(Z<\tau\mid \mathcal{H}_{1}\big)\approx\Pr\big(C_{M^{\star}}<\tau\mid \mathcal{H}_{1}\big)$, we use 2500 watermarked samples ($m = 3$ and $T=200$), compute all $C_M$, and compare $\Pr(Z < \tau)$ and $\Pr(C_{M^*} < \tau)$. As shown in Fig.~\ref{fig:analysis_gumbel2}, the two curves almost coincide across the full range, including near the decision threshold. This indicates that errors are dominated by $C_{M^*} < \tau$, while failures caused by competing messages are rare.

To solve the EER threshold, let $\mathrm{FPR}(\tau^{\mathrm{eer}})=\mathrm{FNR}(\tau^{\mathrm{eer}})$. Let 
\begin{equation}\label{z_brevity}
    \begin{aligned}
        z_0(\tau^{\mathrm{eer}})&=\frac{\tau^{\mathrm{eer}}-\mu_{\mathcal{H}_0}}{\sigma_{\mathcal{H}_0}},\\ z_1(\tau^{\mathrm{eer}})&=\frac{\mu_{\mathcal{H}_1}-\tau^{\mathrm{eer}}}{\sigma_{\mathcal{H}_1}},
    \end{aligned}
\end{equation}

we write $z_0=z_0(\tau^{\mathrm{eer}})$and $z_1=z_1(\tau^{\mathrm{eer}})$ for brevity, combining~\eqref{fpr} and~\eqref{fnr}, then

\begin{equation}
    \begin{aligned}
        z_1^2 = z_0^2 - 2m\ln 2-2\ln\Big(\frac{z_1}{z_0}\Big).
    \end{aligned}
\end{equation}
Since $z_0$ and $z_1$ are of the same order as the EER operating points, $2\ln\Big(\frac{z_1}{z_0}\Big)$ is lower-order. Thus, we obtain
\begin{equation}\label{eqstar}
    \begin{aligned}
        z_1^2 \approx z_0^2 -2m\ln 2.
    \end{aligned}
\end{equation}

Let $\Delta\mu=\mu_{\mathcal{H}_1}-\mu_{\mathcal{H}_0}$, and thus,
\begin{equation}
    \begin{aligned}
        \sigma_0 z_0+\sigma_1 z_1=\Delta\mu
    \end{aligned}
\end{equation}

We first take $m=0$. Therefore,

\begin{equation}\label{eer_m0}
    \begin{aligned}
        \tau^{\mathrm{eer}}_{m=0}=\frac{\mu_{\mathcal{H}_0}\sigma_{\mathcal{H}_1}+\mu_{\mathcal{H}_1}\sigma_{\mathcal{H}_0}}{\sigma_{\mathcal{H}_0}+\sigma_{\mathcal{H}_1}}
    \end{aligned}
\end{equation}

    Thus, plug~\eqref{eer_m0} into~\eqref{z_brevity}, at $m=0$,
\begin{equation}\label{forz}
    \begin{aligned}
        z_0=z_1=z_{m=0}=\frac{\Delta\mu}{\sigma_{\mathcal{H}_0}+\sigma_{\mathcal{H}_1}}.
    \end{aligned}
\end{equation}

Now we take a first order perturbation for $m>0$. Let
\begin{equation}
    \begin{aligned}
        z_0=z_{m=0}+\varepsilon_0,\qquad
z_1=z_{m=0}+\varepsilon_1,
    \end{aligned}
    \label{pertubation_z}
\end{equation}

since $z_1^2-z_0^2=(z_{m=0}+\varepsilon_1)^2-(z_{m=0}+\varepsilon_0)^2\approx 2z(\varepsilon_1-\varepsilon_0)$, from~\eqref{eqstar},
\begin{equation}
    \begin{aligned}
        2z(\varepsilon_1-\varepsilon_0)=-2m\ln 2.
    \end{aligned}
\end{equation}
Therefore, 

\begin{equation}\label{eq11}
    \begin{aligned}
        \varepsilon_1-\varepsilon_0=-\frac{m\ln 2}{z_{m=0}}
    \end{aligned}
\end{equation}

Combining the identity $\mu_{\mathcal{H}_0}+\sigma_{\mathcal{H}_0} z_0=\tau^{\mathrm{eer}}=\mu_{\mathcal{H}_1}-\sigma_{\mathcal{H}_1} z_1$, we obtain

\begin{equation}\label{eq22}
    \begin{aligned}
        \sigma_{\mathcal{H}_0}\varepsilon_0+\sigma_{\mathcal{H}_1}\varepsilon_1=0
    \end{aligned}
\end{equation}
Furthermore, combining~\eqref{forz},~\eqref{eq11} and~\eqref{eq22}, we obatin

\begin{equation}
    \begin{aligned}
        \varepsilon_1&=-\frac{\sigma_{\mathcal{H}_0}}{\Delta\mu}\,m\ln 2\\
\varepsilon_0&=\frac{\sigma_{\mathcal{H}_1}}{\Delta\mu}\,m\ln 2.
    \end{aligned}
\end{equation}
Therefore, from~\eqref{pertubation_z}, we obtain
\begin{equation}
    \begin{aligned}
        z_1\approx z_{m=0}-\frac{\sigma_{\mathcal{H}_0}}{\Delta\mu}\,m\ln 2
    \end{aligned}
\end{equation}
Substituting~\eqref{forz} gives the EER approximation
\begin{equation}\label{eergeneral}
    \begin{aligned}
        \mathrm{EER}_{\text{Gumbel}}\approx
Q\left(
\frac{\Delta\mu}{\sigma_{\mathcal{H}_0}+\sigma_{\mathcal{H}_1}}
-\frac{\sigma_{\mathcal{H}_0}}{\Delta\mu}\,m\ln 2
\right).
    \end{aligned}
\end{equation}

To obtain a closed-form expression and visualization of dependence of EER on $H$, $m$, and $T$, we collapse token-level heterogeneous statistics into a single global entropy averaged over all tokens. Denote $\mathcal{H}=\frac{1}{T}\sum_{t=1}^{T}\mathcal{H}(p_t)$, we obtain 
\begin{equation}
\label{eq:EER-gumbel-N-asym}
\mathrm{EER}_{\mathrm{Gumbel}}
\approx Q\big(z_{\mathcal H}\big),
\end{equation}
where
\begin{equation}\label{eq:z_V}
    \begin{aligned}
        z_{\mathcal H}
        &=
        \frac{(\mathcal{H}+\gamma-1)\sqrt{T}}{1+\pi/\sqrt{6}}
        -
        \frac{m\ln 2}{(\mathcal{H}+\gamma-1)\sqrt{T}} .
    \end{aligned}
\end{equation}

For large
$z_{\mathcal H}$, we approximate
\begin{equation}\label{log_eer_gumbel}
\begin{aligned}
\log \mathrm{EER}_{\mathrm{Gumbel}}
&=
-\frac{z_{\mathcal H}^2}{2}
-\log\!\big(z_{\mathcal H}\sqrt{2\pi}\big) \\
&=
- c_1\, T \big(\mathcal{H}+\gamma-1\big)^2 + c_2\, m
-\frac{(m\ln 2)^2}{2T\big(\mathcal{H}+\gamma-1\big)^2}
-\log\!\big(z_{\mathcal H}\sqrt{2\pi}\big),
\end{aligned}
\end{equation}
where the constants $c_1=\frac{1}{2\big(1+\pi/\sqrt{6}\big)^2}$ and $c_2=\frac{\ln 2}{1+\pi/\sqrt{6}}$.

The dominant term in~\eqref{log_eer_gumbel} scales quadratically with
$\mathcal{H}$, while the dependence on the symbol size $m$ appears as a linear
correction in the exponent. The remaining terms are strictly lower order in
$\mathcal{H}$. Consequently, 
\begin{equation}
\begin{aligned}
    \log \mathrm{EER}_{\mathrm{Gumbel}}
=
- c_1\, T \mathcal{H}^2
+ c_2\, m
+ o\!\big(T\mathcal{H}^2\big),
\end{aligned}
\end{equation}
which implies an exponential decay of $\mathrm{EER}_{\mathrm{Gumbel}}$ at a
quadratic rate in $\mathcal{H}$. Furthermore, increasing the symbol size $m$
leads to a larger $\mathrm{EER}$ through a linear shift in the exponent.

\subsection{Tournament sampling based multi-bit watermarking}\label{tour_eer}

Recall in~\eqref{eq:wmeanscore} the score of $t$-th token for message $M$
\begin{equation}
    \begin{aligned}
        S_{M}(\mathsf{sk},W_t)=\frac{1}{L}\sum_{\ell=1}^{L}\alpha_{\ell}\,\Psi(u_{t, l},\psi(M)),
    \end{aligned}
\end{equation}
and the sequence-level statistic for message $M$ as the per-token average
\begin{equation}
\label{eq:Cmax-def}
C_{M}(\mathsf{sk},W)=\frac{1}{T}\sum_{t=1}^{T} S_{M}(\mathsf{sk},W_t).
\end{equation}

In this derivation, we treat $m=1$, where the construction enforces $S_0(\mathsf{sk}, W_t)+S_1(\mathsf{sk}, W_t)\equiv 1$ per token from the property of mirroring. Define
\begin{equation}
    \begin{aligned}
        Z_t&=\max\{S_0(\mathsf{sk}, W_t),S_1(\mathsf{sk}, W_t)\}\\&=\frac{1}{2}+\left|S_0(\mathsf{sk}, W_t)-\frac{1}{2}\right|,
    \end{aligned}
\end{equation}
and detect with $C_{\max}=\frac{1}{T}\sum_{t=1}^T Z_t$. For easier analysis, we assume $\frac{1}{L}\sum_{\ell=1}^L\alpha_\ell=1$.


Under null hypothesis, at each layer $\ell$, $\Psi\sim\mathrm{Uniform}(0,1)$, hence
\begin{equation}
    \begin{aligned}
        \mathbb{E}\big[S_0(\mathsf{sk}, W_t)\mid\mathcal H_0\big]&=\frac{1}{L}\sum_{\ell=1}^L\frac{\alpha_\ell}{2}
=\frac{1}{2},\\
\mathrm{Var}\big[S_0(\mathsf{sk}, W_t)\mid\mathcal H_0\big]
&=\frac{1}{L^2}\sum_{\ell=1}^L\alpha_\ell^2\,\mathrm{Var}(\Psi)
=\frac{A}{12\,L^2},
    \end{aligned}
\end{equation}
where $A=\sum_{\ell=1}^L \alpha_\ell^2$. Approximating $S_0(\mathsf{sk}, W_t)-\tfrac12$ by $\mathcal N(0,\,\frac{A}{12\,L^2})$ and using the
Lemma~\ref{lem:folded-normal}, 
\begin{equation} \label{eq:tour-H0-fold}
\begin{aligned}
    \mathbb{E}\left|S_0(\mathsf{sk}, W_t)-\tfrac12\right|
&= \sqrt{\frac{A}{6\pi\,L^2}},\\
\mathrm{Var}\left|S_0(\mathsf{sk}, W_t)-\tfrac12\right|
&= \frac{A}{12L^2}\left(1-\frac{2}{\pi}\right).
\end{aligned}
\end{equation}
Therefore
\begin{equation}
\begin{aligned}
    \mathbb{E}[Z_t\mid\mathcal H_0]
&=\frac12+\sqrt{\frac{A}{6\pi\,L^2}},\\
\mathrm{Var}[Z_t\mid\mathcal H_0]
&=\frac{A}{12L^2}\!\left(1-\frac{2}{\pi}\right),
\end{aligned}
\end{equation}

By central limit theorem (CLT), for $C_{\max}=\frac{1}{T}\sum_t Z_t$,
\begin{equation}
\label{eq:tour-H0}
\begin{aligned}
    \mu_{\mathcal H_0}&=\mathbb{E}[C_{\max}\mid\mathcal H_0]= \frac12+\sqrt{\frac{A}{6\pi\,L^2}},\\
\sigma_{\mathcal H_0}^2&=\mathrm{Var}[C_{\max}\mid\mathcal H_0]
= \frac{A}{12L^2T}\!\left(1-\frac{2}{\pi}\right).
\end{aligned}
\end{equation}

We can derive the FPR as
\begin{equation}\label{tournament_fpr}
    \begin{aligned}
    \mathrm{FPR}=\Pr[C_{max}>\tau|H_{0}]=Q(\frac{\tau-\mu_{\mathcal H_0}}{\sigma_{\mathcal H_0}})
\end{aligned}
\end{equation}

On the other hand, under the alternative hypothesis $\mathcal H_1$, at layer $\ell$, refer to Corollary 28 in SynthID, the mirrored random variable described by the cumulative density function (CDF) and probability density function (PDF) as follows,
\begin{equation}
    \begin{aligned}
        F_{\Psi_\ell}(x) &= C_{wm}^\ell x+(1-C_{wm}^\ell)x^2,\\
f_{\Psi_\ell}(x) &= C_{wm}^\ell+2(1-C_{wm}^\ell)x,
    \end{aligned}
\end{equation}
where $C_{wm}^\ell\in[0,1)$ represents the collision probability at layer $\ell$ as defined in Definition 22 in SynthID, which is the probability that two samples drawn i.i.d. from the probability distribution of tokens at layer $l$ are the same. Hence,
\begin{equation}
    \begin{aligned}
        \mathbb E[\Psi_\ell]&=\frac{2}{3}-\frac{C_{wm}^\ell}{6},\\
\mathrm{Var}(\Psi_\ell)&=\frac{2+2C_{wm}^\ell-(C_{wm}^\ell)^2}{36}.
    \end{aligned}
\end{equation}
Hence the per-token $S_0(t)=\frac{1}{L}\sum_\ell \alpha_\ell\Psi_\ell$ has
\begin{equation}
\begin{aligned}\label{mu_S_and_v_S}
    \mu_S&=\mathbb{E}\big[S_0(\mathsf{sk}, W_t)\mid\mathcal H_1\big]
=\frac{1}{L}\sum_{\ell=1}^L\alpha_\ell\left(\frac{2}{3}-\frac{C_{wm}^\ell}{6}\right),\\
v_S&=\mathrm{Var}\big[S_0(\mathsf{sk}, W_t)\mid\mathcal H_1\big]=\frac{1}{L^2}\sum_{\ell=1}^L\alpha_{\ell}^2\,\frac{2+2C_{wm}^\ell-(C_{wm}^\ell)^2}{36}.
\end{aligned}
\end{equation}
Let $\mu_\Delta=\mu_S-\tfrac{1}{2}$. Using the Lemma~\ref{lem:folded-normal} again,
\begin{equation}
\label{eq:tour-H1-fold}
\begin{aligned}
    &\mathbb{E}|S_0(\mathsf{sk}, W_t)-\tfrac12|= \sqrt{\frac{2}{\pi}}\,\sqrt{v_S}\;\exp\!\left(-\frac{\mu_\Delta^2}{2v_S}\right)
+ \mu_\Delta\left[1-2\Phi\!\left(-\frac{\mu_\Delta}{\sqrt{v_S}}\right)\right]=: \mathrm{FNmean}(\mu_\Delta,v_S),\\
&\mathrm{Var}(|S_0(\mathsf{sk}, W_t)-\tfrac12|)= \mu_\Delta^2 + v_S - \big(\mathbb{E}|S_0(\mathsf{sk}, W_t)-\tfrac12|\big)^2=: \mathrm{FNvar}(\mu_\Delta,v_S).
\end{aligned}
\end{equation}
Thus for $Z_t=\tfrac12+|S_0(\mathsf{sk}, W_t)-\tfrac12|$,

\begin{equation}
    \begin{aligned}
        \mathbb{E}[Z_{t}|\mathcal{H}_{1}]
&=\frac12+\mathrm{FNmean}(\mu_\Delta,v_S),\\
\textsf{Var}[Z_{t}|\mathcal{H}_{1}]
&=\mathrm{FNvar}(\mu_\Delta,v_S),
    \end{aligned}
\end{equation}
and
\begin{equation}
\label{eq:tour-H1}
\begin{aligned}
    \mu_{\mathcal H_1}&=\mathbb{E}[C_{\max}\mid\mathcal H_1]
= \mathbb{E}[Z_{t}|\mathcal{H}_{1}],\\
\sigma_{\mathcal H_1}^2&=\mathrm{Var}[C_{\max}\mid\mathcal H_1]
= \textsf{Var}[Z_{t}|\mathcal{H}_{1}]/T.
\end{aligned}
\end{equation}

We can derive the FNR as

\begin{equation}\label{tournament_fnr}
    \begin{aligned}
    \mathrm{FNR}=\Pr[C_{max}<\tau|H_{1}]=\Phi(\frac{\tau-\mu_{\mathcal H_1}}{\sigma_{\mathcal H_1}})
\end{aligned}
\end{equation}

Combining~\eqref{tournament_fpr} and~\eqref{tournament_fnr}, let $\mathrm{FPR}=\mathrm{FNR}$, we can derive the $\mathrm{EER}$ is
\begin{equation}
\label{eq:EER}
\begin{aligned}
    \mathrm{EER}_{\text{tour}}
&=\mathrm{FPR}(\tau^{\mathrm{eer}})
=\mathrm{FNR}(\tau^{\mathrm{eer}})=Q\left(\frac{\mu_{\mathcal H_1}-\mu_{\mathcal H_0}}
{\sigma_{\mathcal H_0}+\sigma_{\mathcal H_1}}\right).
\end{aligned}
\end{equation}

For easier analysis, we assume $\frac{1}{L}\sum_{\ell=1}^L\alpha_\ell=1$, define
\begin{equation}\label{c1_c2}
    \begin{aligned}
        C_1 &:=\frac{1}{L}\sum_{\ell=1}^L\alpha_\ell C_{wm}^\ell,
    \\C_2 &:= \frac{1}{L}\sum_{\ell=1}^L\alpha_{\ell}^2\,\frac{2+2C_{wm}^\ell-(C_{wm}^\ell)^2}{36}.
    \end{aligned}
\end{equation}
Therefore,

\begin{equation}
    \begin{aligned}
       \mu_\Delta&=\frac{1-C_1}{6},\\
v_S&=\frac{C_2}{L}.
    \end{aligned}
\end{equation}

Let $z=\frac{\mu_{\Delta}}{\sqrt{v_S}}$, which means $z$ depends on $C_1, C_2$, and $L$. Then the folded-normal mean in~\eqref{eq:tour-H1-fold} satisfies
\begin{equation}
    \begin{aligned}
        \mathrm{FNmean}(\mu_\Delta,v_S)
&=\mu_{\Delta}\Big(\sqrt{\tfrac{2}{\pi}}\frac{e^{-z^2/2}}{z}
+2\Phi(z)-1\Big).
    \end{aligned}
\end{equation}

Let $m(z) = \sqrt{\tfrac{2}{\pi}}\frac{e^{-z^2/2}}{z}
+2\Phi(z)-1$. Therefore, $\mathrm{FNmean}(\mu_\Delta,v_S)
=\mu_{\Delta}m(z)$. Hence,
\begin{equation}
    \begin{aligned}
        \mu_{\mathcal H_1}=\frac{1}{2}+\mu_{\Delta}m(z).
    \end{aligned}
\end{equation}

Meanwhile,
\begin{equation}
    \begin{aligned}
        \sigma_{\mathcal H_1}&=\sqrt{\frac{\mu_\Delta^2 + v_S - \big(\mathrm{FNmean}(\mu_\Delta,v_S)\big)^2}{T}}\\&=\frac{1}{\sqrt{T}}\sqrt{\mu^2_{\Delta}\Big(1-m^2(z)+\frac{1}{z^2}\Big)}.
    \end{aligned}
\end{equation}

Let

\begin{equation}\label{kappa0}
    \kappa_0=\sqrt{\frac{A}{6\pi L^2}},
\end{equation}

\begin{equation}\label{kappa1}
    \begin{aligned}
       \kappa_1=\sqrt{\frac{A(1-\tfrac{2}{\pi})}{12L^2}}, 
    \end{aligned}
\end{equation}

\begin{equation}\label{kappa2}
    \kappa_2=\mu_{\Delta}m(z),
\end{equation}

and

\begin{equation}\label{kappa3}
    \kappa_3=\mu_{\Delta}\sqrt{1-m^2(z)+\frac{1}{z^2}},
\end{equation}
then

Hence, by~\eqref{eq:EER},
\begin{equation}
\label{eq:eer-LT-exact}
\begin{aligned}
    \mathrm{EER}_{\text{tour}}
&=
Q\!\left(\frac{\mu_{\mathcal H_1}-\mu_{\mathcal H_0}}
{\sigma_{\mathcal H_0}+\sigma_{\mathcal H_1}}\right)
\\&=
Q\left(
\frac{\kappa_2-\kappa_0}{\kappa_3+\kappa_1}\sqrt{T}
\right).
\end{aligned}
\end{equation}

Define
\begin{equation}
\label{eq:GammaBeta-def}
\begin{aligned}
    \Gamma=\Gamma(C_1, C_2, L)&=\frac{\kappa_2}{\kappa_3+\kappa_1},\\
\beta=\beta(C_1, C_2, L)
&=\frac{\kappa_0}{\kappa_3+\kappa_1},
\end{aligned}
\end{equation}

Then
\begin{equation}
\label{eq:eer-compact}
\mathrm{EER}_{\text{tour}}
=
Q\Big((\Gamma-\beta)\sqrt{T}\Big).
\end{equation}

Using Lemma~\ref{lem:maxK}, we obtain
\begin{equation}
\label{eq:eer-tail}
\begin{aligned}
    \mathrm{EER}_{\text{tour}}
\approx\frac{
\exp\Big(
-\frac{T}{2}\big(\Gamma-\beta\big)^2
\Big)
}{
\sqrt{2\pi T}\big(\Gamma-\beta\big)
}.
\end{aligned}
\end{equation}

Taking the logarithm of~\eqref{eq:eer-tail} yields 

\begin{equation}
\label{eq:logeer-expanded-exact}
\begin{aligned}
\log \mathrm{EER}_{\mathrm{tour}}
&=
-\frac{T}{2}(\Gamma-\beta)^2-\frac{1}{2}\log2\pi T-\log(\Gamma-\beta).
\end{aligned}
\end{equation}

Since~\eqref{c1_c2} shows that both $C_1$ and $C_2$ depend on the
layer-wise collision probabilities $\{C_{\ell,\mathrm{wm}}\}_{\ell=1}^L$,
we define
\begin{equation}
\zeta\!\left(L, C_{1,\mathrm{wm}}, \dots, C_{L,\mathrm{wm}}\right)
\triangleq \Gamma(C_1, C_2, L) - \beta(C_1, C_2, L).
\end{equation}

For notational convenience, we denote the collection of collision-related
parameters by
\begin{equation}
    \begin{aligned}
        \mathbf{c} \triangleq \left(L, C_{1,\mathrm{wm}}, C_{2,\mathrm{wm}}, \dots, C_{L,\mathrm{wm}}\right).
    \end{aligned}
\end{equation}

 Hence,

\begin{equation}
    \begin{aligned}
        \log \mathrm{EER}_{\mathrm{tour}}
&=
-\frac{T}{2}\zeta^2(c)
- \frac{1}{2}\log2\pi T-\log\zeta(c).
    \end{aligned}
\end{equation}

\subsection{Empirical Validation of Token-Level Dependence}
\label{app:token_independence}

\begin{figure}[htbp]
    \centering
    \begin{subfigure}{0.49\linewidth}
        \centering
        \includegraphics[width=\linewidth]{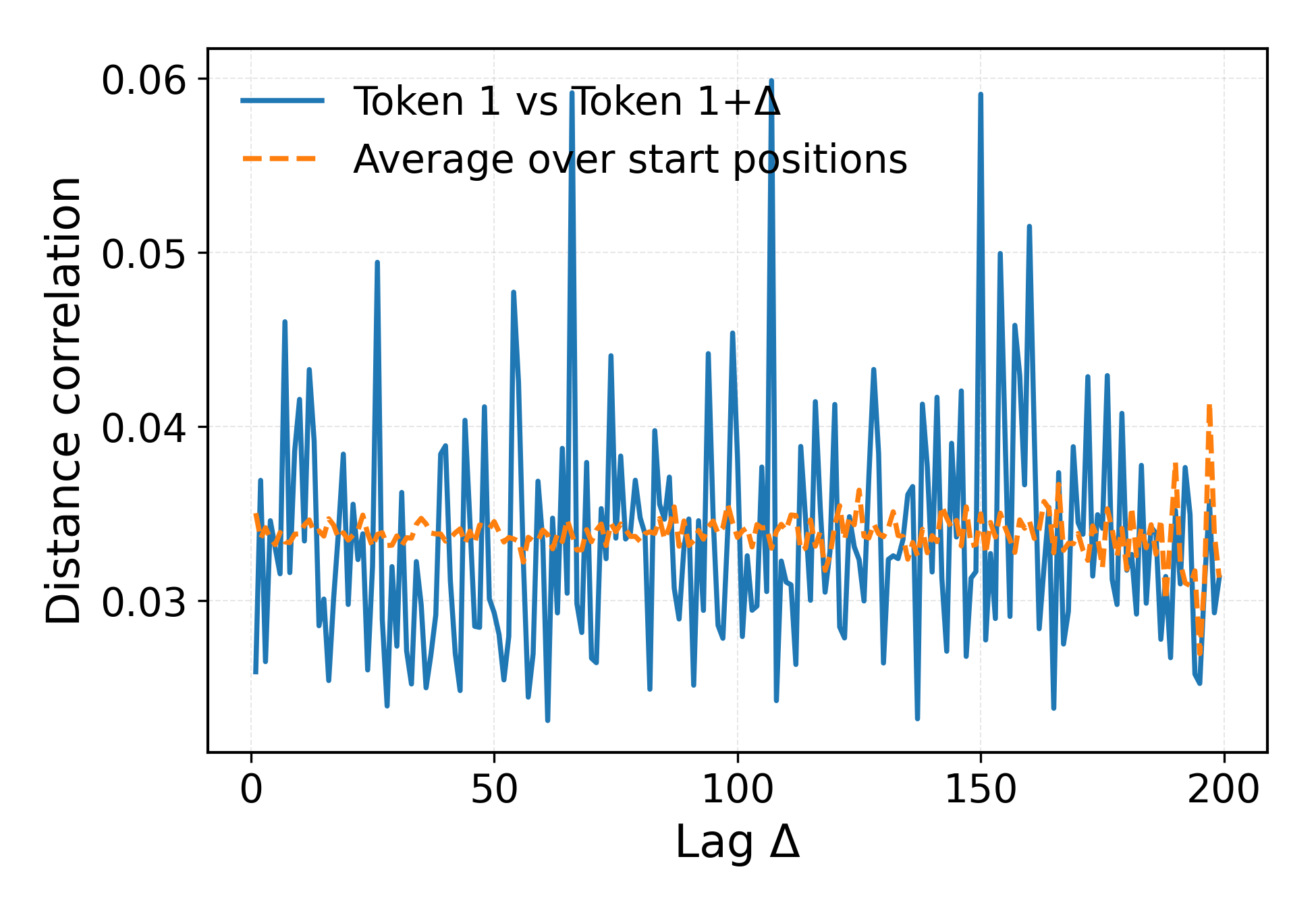}
        \caption{Gumbel-max-based MirrorMark.}
    \end{subfigure}
    \hfill
    \begin{subfigure}{0.49\linewidth}
        \centering
        \includegraphics[width=\linewidth]{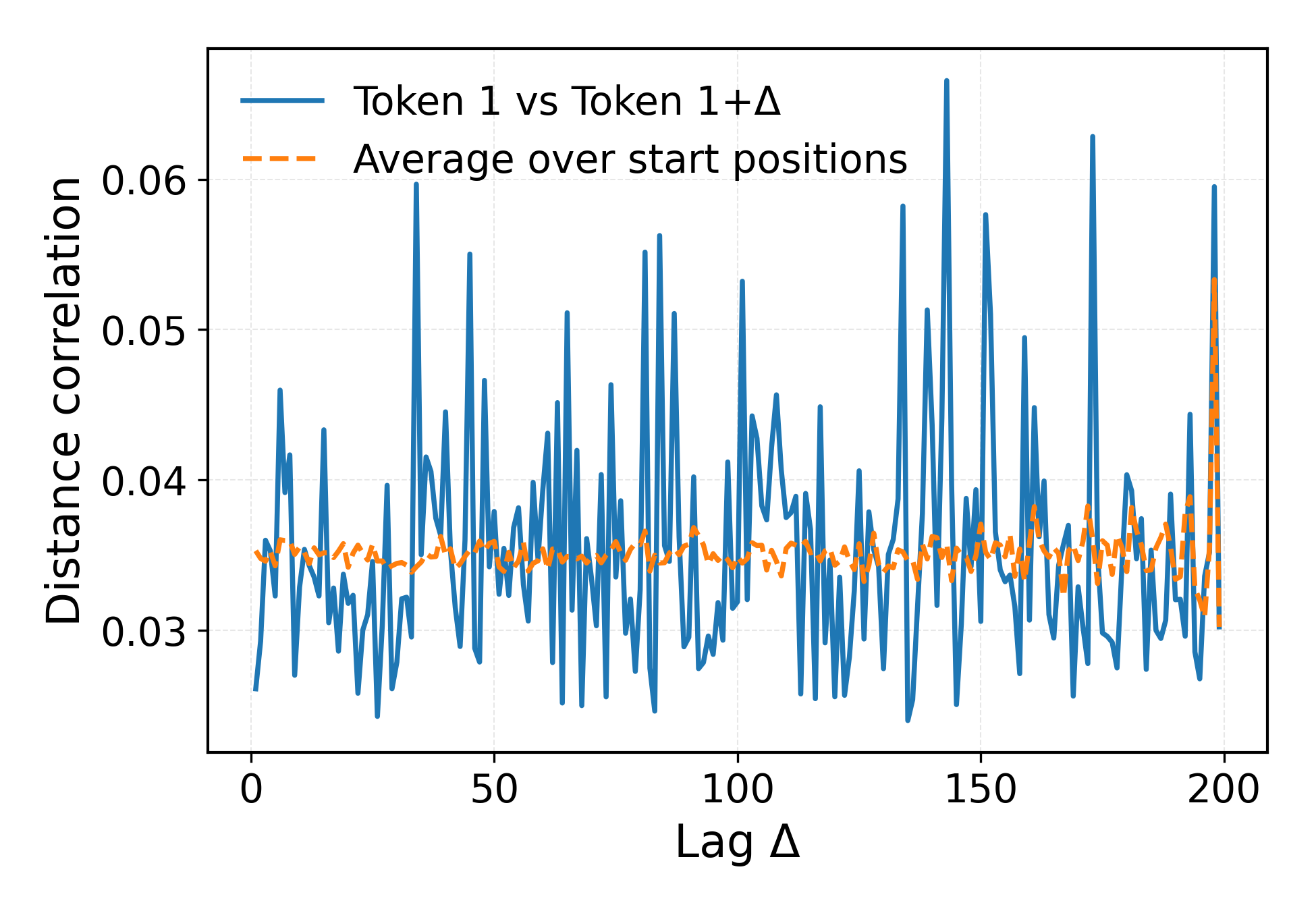}
        \caption{Tournament-based MirrorMark.}
    \end{subfigure}
    \caption{Token-level dependence measured by distance correlation between $S_M(t)$ and $S_M(t+\Delta)$ on 2,500 non-watermarked sequences with $T=200$. The dependence is weak but non-zero, and does not increase with the lag $\Delta$.}
    \label{fig:token_independence}
\end{figure}

The CLT-based EER analysis assumes that token-level score contributions are approximately independent across generation steps. To empirically examine this assumption, we measure the distance correlation between $S_M(t)$ and $S_M(t+\Delta)$ over 2,500 non-watermarked sequences, each with $T=200$ tokens. Fig.~\ref{fig:token_independence} reports the results for both Gumbel-max-based and tournament-based MirrorMark. The observed dependence is weak but non-zero, with distance correlation typically around $0.03$--$0.05$, and it does not increase as the lag $\Delta$ grows. These results suggest that token-level scores exhibit limited short- and long-range dependence in practice. Therefore, while the independence assumption is an approximation rather than an exact property of LLM-generated text, the measured dependence is small, which helps explain why the asymptotic EER estimates remain close to the empirical EERs reported in Table~\ref{theoretical_eer}. 

\section{Lemmas}\label{lemmas}
\begin{lemma}
\label{lem:maxK}
Let $X_1,\dots,X_K\overset{\mathrm{i.i.d.}}{\sim}\mathcal{N}(0,1)$ and $G_K=\max_i X_i$. For large $z$,
\begin{equation}
    \begin{aligned}
        \Pr(G_K>z) = 1-\bigl(1-Q(z)\bigr)^K = K\,Q(z)\,\bigl(1+o(1)\bigr),
    \end{aligned}
\end{equation}
where $\Phi(z)$ denotes the cumulative distribution function of the standard normal distribution\footnote{\url{https://en.wikipedia.org/wiki/Normal_distribution}}
and $Q(z) = 1 - \Phi(z)$ is its Gaussian tail probability.
\end{lemma}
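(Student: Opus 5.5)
The plan has two parts. First I compute the distribution of the maximum exactly using independence. Then I extract the tail asymptotics from an elementary inequality for $(1-x)^K$ as $x\to 0$, with $K$ held fixed.

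\emph{Exact formula.} Since the $X_i$ are i.i.d.\ standard normal, the event $\{G_K\le z\}$ is the intersection of the events $\{X_i\le z\}$. Hence
\[
\Pr(G_K\le z)=\prod_{i=1}^K\Pr(X_i\le z)=\Phi(z)^K=\bigl(1-Q(z)\bigr)^K .
\]
Taking complements gives $\Pr(G_K>z)=1-(1-Q(z))^K$, which is the first equality.

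\emph{Asymptotics.} Write $x=Q(z)$. Then $x\in(0,1)$ and $x\to0$ as $z\to\infty$, because $Q$ is a tail probability.

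For the upper bound, the union bound gives $1-(1-x)^K\le Kx$.

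For the lower bound, I use the Bonferroni inequality, or equivalently the inequality $(1-x)^K\le 1-Kx+\binom{K}{2}x^2$. This holds for $x\in[0,1]$ because the alternating binomial partial sums bracket $(1-x)^K$. It yields
\[
1-(1-x)^K\ge Kx-\tbinom{K}{2}x^2 .
\]

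Combining the two bounds,
\[
1-\frac{K-1}{2}\,Q(z)\;\le\;\frac{\Pr(G_K>z)}{K\,Q(z)}\;\le\;1 .
\]
As $z\to\infty$ the left side tends to $1$, so the ratio tends to $1$. This is exactly $\Pr(G_K>z)=K\,Q(z)(1+o(1))$.

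\emph{Main obstacle.} The only subtle point is the meaning of the $o(1)$ term. It is uniform for fixed $K$, with relative error at most $\frac{K-1}{2}Q(z)$. It is therefore not uniform in $K$: it requires $KQ(z)\to0$.

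This is the regime the paper uses. In \eqref{fpr} we have $K=2^m$, and $z$ is the EER operating point, so $2^m Q(z)$ is small. I would state this explicitly, so that the approximation $\mathrm{FPR}\approx 2^mQ\big(\tfrac{\tau-\mu_{\mathcal{H}_0}}{\sigma_{\mathcal{H}_0}}\big)$ is justified by the quantitative bound above. As an optional refinement, I would recall the Mills-ratio bounds $\frac{z}{1+z^2}\phi(z)\le Q(z)\le \frac{\phi(z)}{z}$, with $\phi$ the standard normal density. These show that the error term is $O\!\big(K e^{-z^2/2}/z\big)$ and also justify the tail form used later in \eqref{eq:eer-tail}.
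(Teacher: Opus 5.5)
Your proof is correct and complete. The paper gives no argument to compare against: Lemma~\ref{lem:maxK} is stated in the appendix as a standard fact, with only a pointer to the normal distribution. Your exact-CDF step, together with the two-sided bound $KQ(z)-\binom{K}{2}Q(z)^2\le \Pr(G_K>z)\le KQ(z)$, supplies what is missing. It also gives an explicit relative error of at most $\frac{K-1}{2}Q(z)$ in place of an unquantified $o(1)$, and correctly identifies the regime $KQ(z)\to 0$ in which the approximation is meaningful.

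Your union-bound/Bonferroni route also covers more than the i.i.d.\ statement does. The upper bound $\Pr(G_K>z)\le KQ(z)$ needs only the $\mathcal{N}(0,1)$ marginals. The lower bound needs only $\Pr(X_i>z,X_j>z)=o(Q(z))$ for $i\neq j$, which holds, for example, for jointly Gaussian pairs with correlation below $1$. That is the situation in equation~\ref{fpr}, where the paper itself concedes that the $C_M$ are dependent because they are computed from the same text. You could state this explicitly, since it justifies the FPR approximation better than the i.i.d.\ lemma can.

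Your Mills-ratio remark is also accurate. The tail form in equation~\ref{eq:eer-tail}, and the relation $z_1^2=z_0^2-2m\ln 2-2\ln(z_1/z_0)$ in the Gumbel analysis, both come from $Q(z)\sim\phi(z)/z$, not from the maximum lemma, even though the paper cites Lemma~\ref{lem:maxK} there.
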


\begin{lemma}
\label{lem:folded-normal}
Let $X\sim\mathcal N(\mu,\sigma^2)$ and $Y=|X|$. Then
\begin{equation}
    \begin{aligned}
        \mathbb E[Y]
= \sigma\sqrt{\tfrac{2}{\pi}}\,\exp\!\left(-\tfrac{\mu^2}{2\sigma^2}\right)
+ \mu\left[1-2\Phi\!\left(-\tfrac{\mu}{\sigma}\right)\right],
    \end{aligned}
\end{equation}
and
\begin{equation}
    \begin{aligned}
        \mathrm{Var}(Y)
= \mu^2+\sigma^2 - \bigl(\mathbb E[Y]\bigr)^2.
    \end{aligned}
\end{equation}
\end{lemma}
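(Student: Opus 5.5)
The statement is the folded-normal lemma (Lemma~\ref{lem:folded-normal}), and I would prove it by direct integration against the Gaussian density. Write $X=\mu+\sigma W$ with $W\sim\mathcal N(0,1)$, and let $\varphi$ denote the standard normal density. The mean splits at the point where $X$ changes sign, $w_0=-\mu/\sigma$:
\begin{equation*}
\mathbb E|X|=\int_{w_0}^{\infty}(\mu+\sigma w)\varphi(w)\,dw-\int_{-\infty}^{w_0}(\mu+\sigma w)\varphi(w)\,dw .
\end{equation*}

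For the terms proportional to $\mu$, I use the normal CDF. The first integral contributes $\mu(1-\Phi(w_0))$ and the second contributes $-\mu\Phi(w_0)$. Together they give $\mu[1-2\Phi(-\mu/\sigma)]$.

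For the terms proportional to $\sigma$, I use the identity $\varphi'(w)=-w\varphi(w)$. This gives
\begin{equation*}
\int_{w_0}^\infty w\varphi(w)\,dw=\varphi(w_0)
\qquad\text{and}\qquad
\int_{-\infty}^{w_0}w\varphi(w)\,dw=-\varphi(w_0).
\end{equation*}
Their difference, multiplied by $\sigma$, is $2\sigma\varphi(\mu/\sigma)$. Since $2\varphi(\mu/\sigma)=\sqrt{2/\pi}\,e^{-\mu^2/(2\sigma^2)}$, this term equals $\sigma\sqrt{2/\pi}\,e^{-\mu^2/(2\sigma^2)}$.

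Adding the two pieces yields the stated formula for $\mathbb E[Y]$.

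The variance needs no integration. Because $Y^2=X^2$, we have $\mathbb E[Y^2]=\mathbb E[X^2]=\mu^2+\sigma^2$. Then $\mathrm{Var}(Y)=\mathbb E[Y^2]-(\mathbb E[Y])^2$ gives exactly the second identity.

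I do not expect any real obstacle here. The only step that needs care is the bookkeeping of signs and limits when splitting at $w_0$. That includes using the symmetry $\varphi(-x)=\varphi(x)$ to rewrite $\varphi(w_0)$ as $\varphi(\mu/\sigma)$, and using $\Phi(w_0)=\Phi(-\mu/\sigma)$ as written in the statement.
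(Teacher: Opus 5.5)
Your proof is correct and complete. Splitting at $w_0=-\mu/\sigma$, the $\mu$-terms give $\mu[1-2\Phi(-\mu/\sigma)]$. Using the antiderivative $-\varphi$ of $w\varphi(w)$, the $\sigma$-terms give $2\sigma\varphi(\mu/\sigma)=\sigma\sqrt{2/\pi}\,e^{-\mu^2/(2\sigma^2)}$. The identity $Y^2=X^2$ then yields the variance with no further integration.

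The paper states this lemma without proof, as the standard folded-normal moment formula; it is applied in Appendix~\ref{tour_eer} with $\mu=0$ under $\mathcal H_0$ and with $\mu=\mu_\Delta$ under $\mathcal H_1$. There is therefore no alternative route to compare against.

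The only point worth stating explicitly is the implicit assumption $\sigma>0$. For $\sigma=0$ the quantity $\mu/\sigma$ is undefined, and the formula holds only as a limit, giving $\mathbb E[Y]=|\mu|$.
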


\begin{lemma}
\label{lem:beta-log-digamma}
Let $T\sim\mathrm{Beta}(a,b)$. Then
\begin{equation}
    \begin{aligned}
        \mathbb{E}[\ln T] &= \psi_{0}(a)-\psi_{0}(a+b),
    \\
    \mathrm{Var}(\ln T) &= \psi_1(a)-\psi_1(a+b),
    \end{aligned}
\end{equation}
where the digamma function $\psi_{0}(x)$ for $x>0$ is defined as
\begin{equation}
    \begin{aligned}
        \psi_{0}(x) \;=\; -\gamma + \sum_{n=0}^{\infty} \Big(\frac{1}{n+1} - \frac{1}{n+x}\Big),
    \end{aligned}
\end{equation}
    with $\gamma$ the Euler's constant.
The trigamma function $\psi_1(x)$ for $x>0$ is defined as 
\begin{equation}
    \begin{aligned}
        \psi_1(x) = \sum_{n=0}^{\infty} \frac{1}{(n+x)^2}.
    \end{aligned}
\end{equation}
In particular, for $x>0$, let the generalized harmonic number $H_x=\psi_{0}(x+1)-\psi_{0}(1)$. As $x\to\infty$,
\begin{equation}
    H_x = \ln x + \gamma.
\end{equation}
\end{lemma}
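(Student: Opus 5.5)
The plan is to treat $\ln T$ through its cumulant generating function, which for a Beta variable is a difference of log-Gamma functions; differentiating once and twice then gives the mean and variance. For $T\sim\mathrm{Beta}(a,b)$ and real $s>-a$,
\begin{equation*}
\mathbb{E}[T^s]=\frac{1}{B(a,b)}\int_0^1 t^{a+s-1}(1-t)^{b-1}\,dt=\frac{B(a+s,b)}{B(a,b)}=\frac{\Gamma(a+s)\,\Gamma(a+b)}{\Gamma(a)\,\Gamma(a+b+s)}.
\end{equation*}
Hence the cumulant generating function of $\ln T$ is
\begin{equation*}
K(s)=\ln\mathbb{E}[e^{s\ln T}]=\ln\Gamma(a+s)-\ln\Gamma(a+b+s)+\ln\Gamma(a+b)-\ln\Gamma(a).
\end{equation*}
Since $\mathbb{E}[\ln T]=K'(0)$ and $\mathrm{Var}(\ln T)=K''(0)$, the definitions $\psi_0=(\ln\Gamma)'$ and $\psi_1=\psi_0'$ immediately give $\psi_0(a)-\psi_0(a+b)$ and $\psi_1(a)-\psi_1(a+b)$.

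The one technical point here is justifying differentiation under the integral at $s=0$. For $|s|\le a/2$, the integrand $t^{a+s-1}|\ln t|^k(1-t)^{b-1}$ is dominated by $C\,t^{a/2-1}|\ln t|^k(1-t)^{b-1}$, which is integrable for $k=1,2$. Dominated convergence therefore shows that $s\mapsto\mathbb{E}[T^s]$ is $C^2$ near $0$, with derivatives $\mathbb{E}[T^s(\ln T)^k]$. This makes $K$ twice differentiable with the stated derivatives.

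Next, I will verify that the series given in the lemma agree with $(\ln\Gamma)'$ and $(\ln\Gamma)''$. Start from the Weierstrass product
\begin{equation*}
\frac{1}{\Gamma(x)}=x e^{\gamma x}\prod_{n\ge1}\Big(1+\frac{x}{n}\Big)e^{-x/n}.
\end{equation*}
Taking logarithms and differentiating termwise is legitimate, because the differentiated series converges locally uniformly on $x>0$. This yields
\begin{equation*}
\psi_0(x)=-\gamma-\frac1x+\sum_{n\ge1}\Big(\frac1n-\frac1{n+x}\Big),
\end{equation*}
which, after reindexing, is exactly the stated series. Differentiating once more gives $\psi_1(x)=\sum_{n\ge0}(n+x)^{-2}$.

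For the final claim, $\psi_0(1)=-\gamma$ follows from the series because the sum telescopes to $0$. It remains to show $\psi_0(x+1)=\ln x+o(1)$. I expect this asymptotic step to be the main obstacle, and I read the lemma's "$=$" as an asymptotic equivalence, meaning $H_x-\ln x-\gamma\to0$. My first approach is a direct comparison with an integral. Write
\begin{equation*}
\psi_0(x+1)+\gamma=\sum_{n\ge1}\Big(\frac1n-\frac1{n+x}\Big).
\end{equation*}
Compare this with $\int_1^\infty\big(\frac1u-\frac1{u+x}\big)\,du=\ln(1+x)$. The summand $\frac1u-\frac1{u+x}$ is decreasing in $u$, so the sum and the integral differ by a quantity between $0$ and the first term, which is bounded. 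That bound only gives $O(1)$ rather than $o(1)$. To sharpen it, I would apply the Euler--Maclaurin formula, or equivalently the known limit $\sum_{n\le N}1/n-\ln N\to\gamma$, to the split
\begin{equation*}
\psi_0(x+1)=\lim_{N\to\infty}\Big(\ln N-\sum_{n=1}^{N}\frac{1}{n+x}\Big).
\end{equation*}
Bounding $\sum_{n=1}^{N}1/(n+x)$ between $\ln\frac{N+x+1}{x+1}$ and $\ln\frac{N+x}{x}$ gives
\begin{equation*}
\ln x\le\psi_0(x+1)\le\ln(x+1).
\end{equation*}
Both bounds are $\ln x+O(1/x)$, and therefore $H_x=\ln x+\gamma+O(1/x)$.
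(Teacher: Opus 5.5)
Your proposal is correct. The paper gives no proof of this lemma. It is listed in the Lemmas appendix as a standard fact and used only to get $\mathbb{E}[S_{M^\star}]=\psi_0(1+1/p_t)-\psi_0(1)$ and the matching variance from $1-\Psi\sim\mathrm{Beta}(1,1/p_t)$, plus the large-$x$ form of $H_x$. So you are supplying an argument the paper omits rather than taking a different route from one it wrote.

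Your Mellin-transform/cumulant argument is the textbook derivation. The domination $t^{a+s-1}\le t^{a/2-1}$ for $|s|\le a/2$ is exactly what is needed to differentiate twice under the integral. The Weierstrass product, with locally uniform convergence of the differentiated series, correctly identifies the lemma's series with $(\ln\Gamma)'$ and $(\ln\Gamma)''$.

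Two small imprecisions:
\begin{itemize}
\item Passing from $-\gamma-\frac1x+\sum_{n\ge1}\bigl(\frac1n-\frac1{n+x}\bigr)$ to the lemma's $-\gamma+\sum_{n\ge0}\bigl(\frac1{n+1}-\frac1{n+x}\bigr)$ is not a pure reindexing. The $N$-th partial sums differ by $\frac1{N+1}$, which vanishes in the limit.
\item In the lemma's indexing, $\psi_0(1)=-\gamma$ holds because every term of the series is zero, not by telescoping.
\end{itemize}

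For the asymptotic claim, reading $H_x=\ln x+\gamma$ as $H_x-\ln x-\gamma\to0$ is the only sensible interpretation. Your first integral comparison, which only gives $O(1)$, can be dropped. The second argument, via the limit $\psi_0(x+1)=\lim_N\bigl(\ln N-\sum_{n=1}^N\frac1{n+x}\bigr)$ and the integral bounds on $\sum_{n=1}^N\frac1{n+x}$, gives $\ln x\le\psi_0(x+1)\le\ln(x+1)$ for every $x>0$. That yields the explicit estimate $0\le H_x-\ln x-\gamma\le 1/x$, which is stronger than what the paper's approximation for large effective pool size needs.
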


\section{Experimental Setup}\label{setup}

Unless otherwise specified, all experiments use the Llama-2-7B model~\citep{touvron2023llama} on a text completion task. We construct prompts from the RealNewsLike subset of C4~\citep{raffel2020exploring}. We randomly select 500 documents, truncate each document to obtain a prefix, and ask the model to generate a continuation conditioned on that prefix. Most results in the main paper are reported on this setting. To assess the generality of MirrorMark beyond this model and task, we additionally evaluate on the Gemma-7B-it~\citep{gemmateam2024gemmaopenmodelsbased} model on an instruction-following task. We randomly sample 500 prompts from the ELI5 dataset~\citep{fan-etal-2019-eli5}, treat them as user instructions, and generate model responses. We report AUC and TPR@1\%FPR for detection, bit accuracy for decoding, and perplexity, GPT-4o score, and repetition rate for text quality. 

Following SynthID, we use top-100 sampling with temperature of $1.0$ for all evaluated watermarking approaches. For CABS, we use the same hyperparameters throughout the experiments, where $h=4$, $f=3$, $W=4$, and $\texttt{max\_len} = \texttt{max\_factor} \cdot H$ with $\texttt{max\_factor}=1.5$ and $H$ denoting the number of positions in the context. Following SynthID, our experiments use a default of $L=30$ tournament layers. For each combination of $m$, $L$, and base model used in tournament-sampling–based MirrorMark, we train a separate Bayesian detector using 10{,}000 watermarked samples and 10{,}000 non-watermarked samples. We randomly split the watermarked and non-watermarked feature files into an 80\% training set and a 20\% validation set. The detector is trained with the Adam optimizer using a learning rate of $3\times 10^{-3}$, a batch size of 64, and up to 100 epochs. This training is relatively lightweight, i.e., approximately 2 hours on a single A100. We select the model that achieves the highest validation TPR at 1\% FPR, and report its performance in the main paper. We evaluate perplexity using the same model that generates the text. Specifically, text generated by LLaMA2-7B is evaluated using LLaMA2-7B, and text generated by Gemma-7B-it is evaluated using Gemma-7B-it.

For all baseline comparisons, we follow the default symbol sizes $m$ specified in the original papers, as these settings are reported to yield their best performance. In particular, MPAC uses $m=2$, StealthInk uses $m=1$, and RSBH uses $m=6$. Therefore, to embed $b$ bits, $H=\frac{b}{m}$ positions are needed.

\section{Additional Results}~\label{extra_experiments}

\subsection{Controlled Comparison under Permutation-Based Reweighting}
\label{app:dip_control}

We further evaluate whether mod-1 mirroring can be applied beyond distortion-free sampling. In permutation-based unbiased reweighting, the watermark randomness is a context-seeded vocabulary permutation rather than a final sampling value. We therefore apply mirroring to permutation ranks. Specifically, for each token, we normalize its position in the permutation to a value in $[0,1)$. For a candidate symbol $M$, we mirror these normalized ranks using the same mod-1 mirroring rule and obtain a symbol-specific effective permutation. The right half of this effective permutation is treated as the green region and receives larger probability under the DiPmark reweighting rule. During detection, we reconstruct the permutation for each context, apply the candidate-symbol mirroring, and decode the symbol with the largest green-token count.

This construction, which we call MirrorDip, uses the same permutation-based reweighting mechanism as the base unbiased watermark but replaces interval-based message assignment with mirroring-based rank transformation. We compare it with StealthInk, which also extends DiPmark to multi-bit watermarking but assigns each message to a contiguous interval in the context-seeded permutation and suppresses that interval during generation. This controlled comparison isolates the effect of the mapping rule under the same class of permutation-based reweighting methods.

\begin{figure*}[htbp]
  \centering
  \begin{minipage}[t]{0.33\textwidth}
    \centering
    \includegraphics[width=\linewidth]{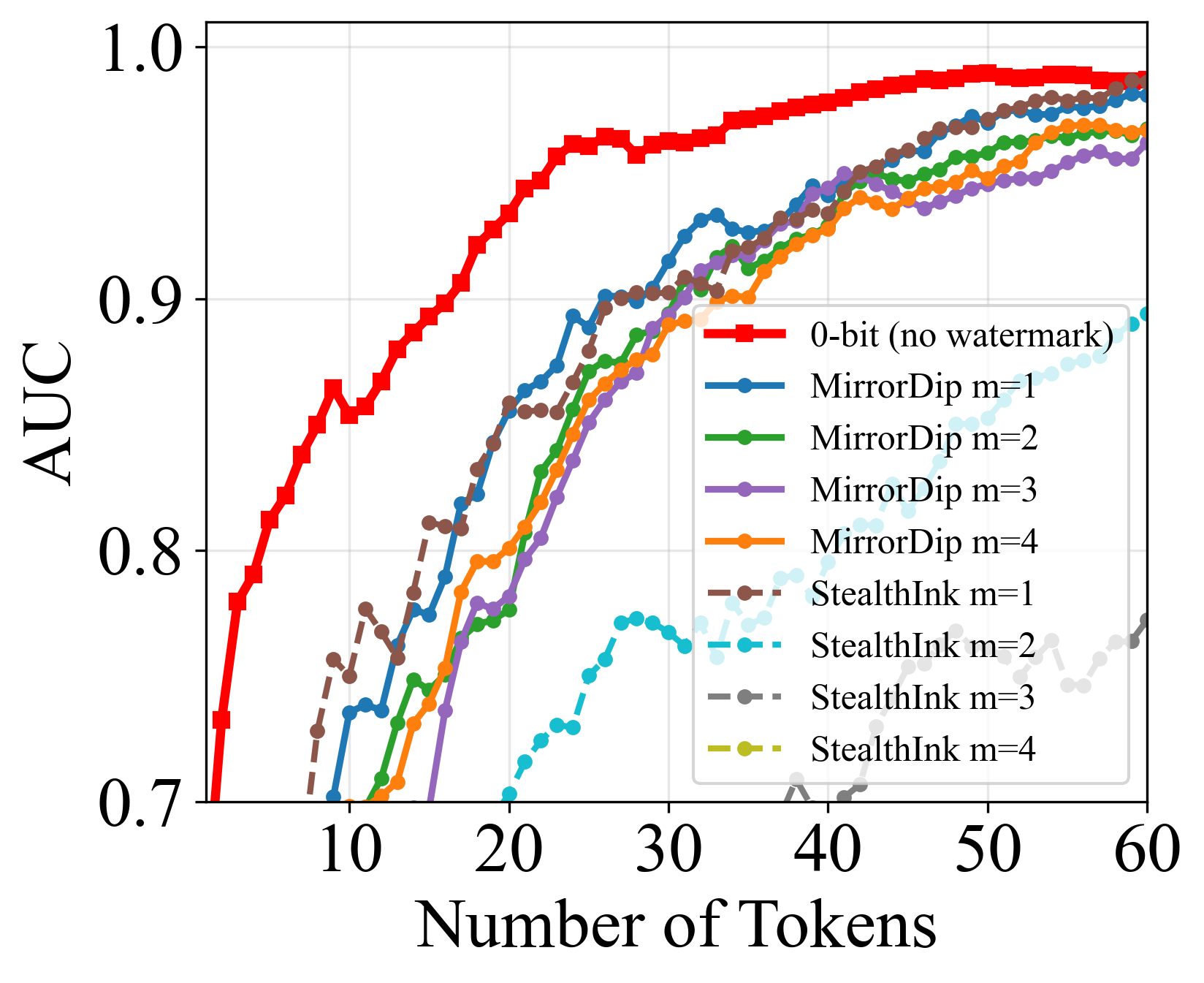}
  \end{minipage}\hfill
  \begin{minipage}[t]{0.33\textwidth}
    \centering
    \includegraphics[width=\linewidth]{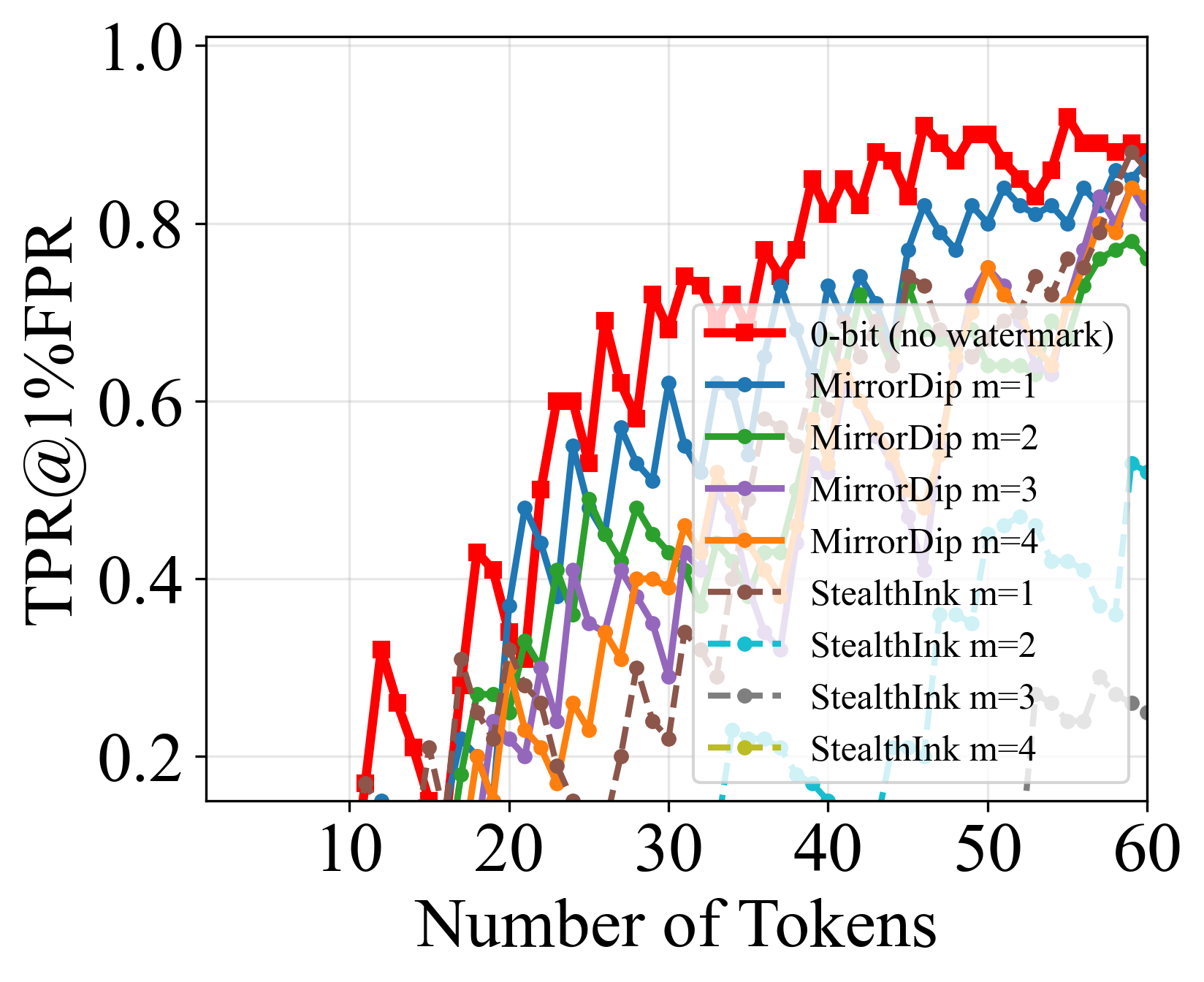}
  \end{minipage}\hfill
  \begin{minipage}[t]{0.33\textwidth}
    \centering
    \includegraphics[width=\linewidth]{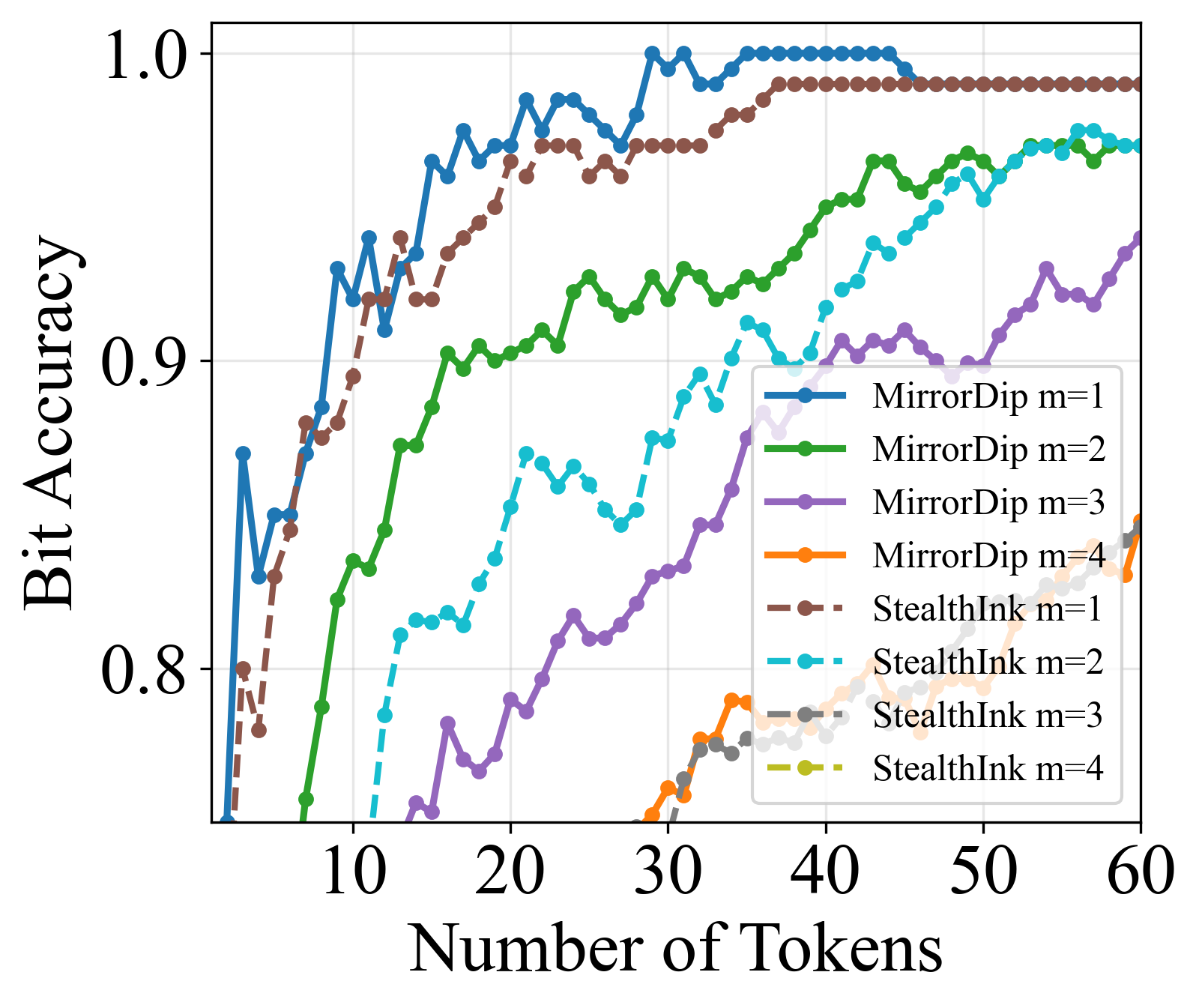}
  \end{minipage}
  \caption{Controlled comparison under permutation-based reweighting with $H=1$. Mirroring-based DiPmark applies mod-1 mirroring to normalized permutation ranks, while StealthInk uses contiguous rank intervals.}
  \label{fig:dip_control}
\end{figure*}

As shown in Fig.~\ref{fig:dip_control}, mirroring-based DiPmark consistently improves AUC, TPR@1\%FPR, and bit accuracy over StealthInk across token budgets and symbol sizes. This result suggests that the advantage of mirroring is not limited to samplers that directly expose sampling values. It also applies to permutation-based reweighting, where mirroring can be used to create symbol-specific effective permutations with stronger matched--mismatched separation than interval-based mappings.

\subsection{Performance comparison over 200 and 400 tokens}\label{200_400tokens}

We present the performance comparison across different approaches over 200 and 400 tokens, respectively as in Table~\ref{ppl_tradeoff_200tokens} and Table~\ref{ppl_tradeoff_400tokens}, where the watermarked text generated by each approach is embedded with 36 bits and 54 bits, respectively.

\begin{table*}[htbp]
\centering
\small
\setlength{\tabcolsep}{2.0pt}        
\renewcommand{\arraystretch}{0.96}   
\vspace{-0.3em}
\caption{Mean perplexity and detectability for different approaches on 200 tokens. Each perplexity is given with a 90\% confidence interval based on bootstrapping.}
\label{ppl_tradeoff_200tokens}

\begin{tabular}{@{}L C C S[table-format=1.4] C !{\vrule width 0.5pt} C C S[table-format=1.4] C@{}}
\toprule
\multirow{2}{*}{Method} & \multicolumn{4}{c}{36 Bits} & \multicolumn{4}{c}{54 Bits} \\
\cmidrule(lr){2-5}\cmidrule(lr){6-9}
& {\footnotesize AUC}
& {\footnotesize TPR@1\%FPR}
& {\footnotesize Bit Acc.}
& {\footnotesize Perplexity}
& {\footnotesize AUC}
& {\footnotesize TPR@1\%FPR}
& {\footnotesize Bit Acc.}
& {\footnotesize Perplexity} \\
\midrule
Non Watermark
& -- & -- & \multicolumn{1}{c}{--} & \makecell{\meanppl{7.7836}\\ \smallci{7.6024,\,7.9665}}
& -- & -- & \multicolumn{1}{c}{--} & \makecell{\meanppl{7.7836}\\ \smallci{7.6024,\,7.9665}} \\
\cmidrule(lr){1-5}\cmidrule(lr){6-9}

MPAC
& 0.9903 & 0.9400 & 0.8893 & \makecell{\meanppl{9.8604}\\ \smallci{9.6782,\, 10.0450}}
& 0.9913 & 0.9180 & 0.8394 & \makecell{\meanppl{10.1388}\\ \smallci{9.9353,\,10.3464}} \\

RSBH
& 0.9983 & \textbf{0.9980} & \textbf{0.9992} & \makecell{\meanppl{32.6466}\\ \smallci{31.2956,\,34.0539}}
& 0.9979 & \textbf{0.9980} & \textbf{0.9928} & \makecell{\meanppl{32.6994}\\ \smallci{31.2430,\,34.2013}} \\

StealthInk
& 0.9787 & 0.6540 & 0.8423 & \makecell{\textbf{\meanppl{7.3038}}\\ \smallci{7.0626,\,7.5421}}
& 0.9654 & 0.4420 & 0.7896 & \makecell{\textbf{\meanppl{7.2339}}\\ \smallci{6.9976,\,7.4662}} \\

Gumbel-max
& \textbf{1.0} & 0.9980 & 0.9613 & \makecell{\meanppl{7.5709}\\ \smallci{7.3951,\,7.7503}}
& 0.9998 & 0.9960 & 0.9338 & \makecell{\meanppl{7.6708}\\ \smallci{7.4902,\,7.8644}} \\

Tour-Wmean
& 0.9955 & 0.9880 & 0.9345 & \makecell{\meanppl{7.7710}\\ \smallci{7.6014,\,7.9373}}
& \textbf{0.9999} & \textbf{0.9980} & 0.8962 & \makecell{\meanppl{7.7592}\\ \smallci{7.5870,\,7.9293}} \\

Tour-Bayes
& 0.9954 & 0.9860 & 0.9495 & \makecell{\meanppl{7.7710}\\ \smallci{7.6014,\,7.9373}}
& 0.9992 & 0.9800 & 0.9051 & \makecell{\meanppl{7.7592}\\ \smallci{7.5870,\,7.9293}} \\

\bottomrule
\end{tabular}
\end{table*}

\begin{table*}[htbp]
\centering
\small
\setlength{\tabcolsep}{2.0pt}        
\renewcommand{\arraystretch}{0.96}   
\vspace{-0.3em}
\caption{Mean perplexity and detectability for different approaches on 400 tokens. Each perplexity is given with a 90\% confidence interval based on bootstrapping.}
\label{ppl_tradeoff_400tokens}

\begin{tabular}{@{}L C C S[table-format=1.4] C !{\vrule width 0.5pt} C C S[table-format=1.4] C@{}}
\toprule
\multirow{2}{*}{Method} & \multicolumn{4}{c}{36 Bits} & \multicolumn{4}{c}{54 Bits} \\
\cmidrule(lr){2-5}\cmidrule(lr){6-9}
& {\footnotesize AUC}
& {\footnotesize TPR@1\%FPR}
& {\footnotesize Bit Acc.}
& {\footnotesize Perplexity}
& {\footnotesize AUC}
& {\footnotesize TPR@1\%FPR}
& {\footnotesize Bit Acc.}
& {\footnotesize Perplexity} \\
\midrule
Non Watermark
& -- & -- & \multicolumn{1}{c}{--} & \makecell{\meanppl{7.0513}\\ \smallci{6.9156,\,7.1849}}
& -- & -- & \multicolumn{1}{c}{--} & \makecell{\meanppl{7.0513}\\ \smallci{6.9156,\,7.1849}} \\
\cmidrule(lr){1-5}\cmidrule(lr){6-9}

MPAC
& 0.9970 & 0.9820 & 0.9599 & \makecell{\meanppl{8.8160}\\ \smallci{8.6754,\,8.9583}}
& 0.9960 & 0.9940 & 0.9227 & \makecell{\meanppl{8.8811}\\ \smallci{8.7232,\,9.0393}} \\

RSBH
& 0.9999 & \textbf{1.0} & \textbf{1.0} & \makecell{\meanppl{32.5108}\\ \smallci{32.1111,\,34.9533}}
& 0.9990 & \textbf{1.0} & \textbf{0.9972} & \makecell{\meanppl{33.6699}\\ \smallci{32.1541,\,35.2284}} \\

StealthInk
& 0.9941 & 0.9500 & 0.9204 & \makecell{\textbf{\meanppl{6.5826}}\\ \smallci{6.4060,\,6.7593}}
& 0.9952 & 0.9400 & 0.8748 & \makecell{\textbf{\meanppl{6.5893}}\\ \smallci{6.4053,\,6.7813}} \\

Gumbel-max
& \textbf{1.0} & \textbf{1.0} & 0.9929 & \makecell{\meanppl{6.8081}\\ \smallci{6.6618,\,6.9545}}
& \textbf{1.0} & \textbf{1.0} & 0.9849 & \makecell{\textbf{\meanppl{6.8855}}\\ \smallci{6.7453,\,7.0332}} \\

Tour-Wmean
& 0.9998 & 0.9960 & 0.9811 & \makecell{\meanppl{7.1759}\\ \smallci{7.0406,\,7.3120}}
& \textbf{1.0} & \textbf{1.0} & 0.9665 & \makecell{\meanppl{7.0888}\\ \smallci{6.9534,\,7.2243}} \\

Tour-Bayes
& 0.9996 & 0.9920 & 0.9819 & \makecell{\meanppl{7.1759}\\ \smallci{7.0406,\,7.3120}}
& 0.9997 & 0.9960 & 0.9706 & \makecell{\meanppl{7.0888}\\ \smallci{6.9534,\,7.2243}} \\

\bottomrule
\end{tabular}
\end{table*}

\subsection{Repeatition score and LLM-as-judge score of the text generated with watermarking scheme}\label{extra_text_quality}

\begin{figure*}[htbp]
  \centering
  \captionsetup[subfigure]{justification=centering, font=small}

  \begin{subfigure}[t]{0.48\textwidth}
    \centering
    \includegraphics[width=0.9\linewidth]{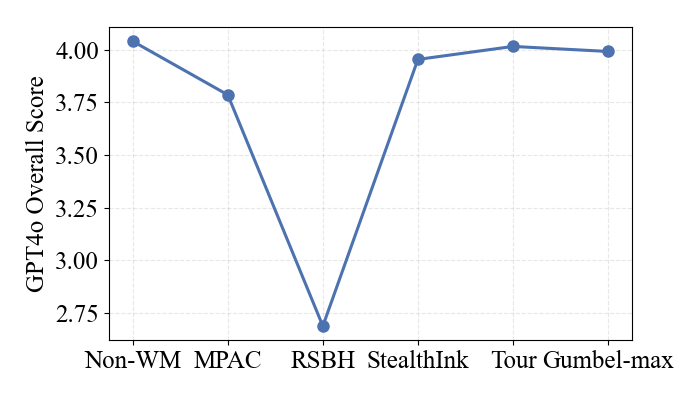}
  \end{subfigure}
  \hfill
  \begin{subfigure}[t]{0.48\textwidth}
    \centering
    \includegraphics[width=0.9\linewidth]{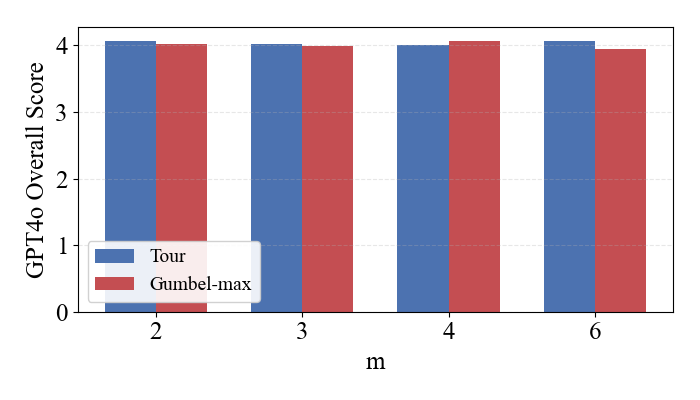}
  \end{subfigure}

  \caption{Text quality scored by GPT4o over 300 tokens, where $m=3$ and $H=12$}
    \label{gpt4o}
\end{figure*}

\begin{table*}[htbp]
\centering
\caption{Text quality scored with distinct-2 and repetition rate across watermarking schemes, 36 bits are embedded in 300 tokens.}
\resizebox{\textwidth}{!}{%
\begin{tabular}{lccccccccccccc}
\toprule
& Non-watermarked &
MPAC & RSBH & StealthInk &
\makecell{TB\\(m=2)} &
\makecell{TB\\(m=3)} &
\makecell{TB\\(m=4)} &
\makecell{TB\\(m=6)} &
\makecell{G-max\\(m=2)} &
\makecell{G-max\\(m=3)} &
\makecell{G-max\\(m=4)} &
\makecell{G-max\\(m=6)} \\
\midrule
Distinct-2 & 0.9471 & 0.9624 & 0.9648 & 0.9498 & 0.9452 & 0.9494 & 0.9475 & 0.9451 & 0.9277 & 0.9269 & 0.9209 & 0.9292\\
Repetition Rate & 0.4542 & 0.4183 & 0.3528 & 0.4410 & 0.4538 & 0.4504 & 0.4509 & 0.4561 & 0.4733 & 0.4761 & 0.4849 & 0.4752\\
\bottomrule
\end{tabular}%
}
\label{tab:repetition_rate}
\end{table*}

We further evaluate the linguistic quality of MirrorMark using two complementary metrics: (1) an LLM-as-a-judge assessment with GPT-4o as in Fig.~\ref{gpt4o}, and (2) a repetition-based analysis using distinct-2 and repetition rate as in Table~\ref{tab:repetition_rate}.

For the LLM-as-a-judge study, GPT-4o scored each text along four dimensions: coherence, clarity, naturalness, and overall quality. Following~\cite{jovanovic2024watermark}, we design the following GPT4o Judge prompt explicitly to ignore truncation effects and focus solely on linguistic fluency. 

Across all configurations, MirrorMark achieves GPT-4o scores that are statistically indistinguishable from the non-watermarked baseline. The overall score difference consistently stays within 0.05–0.10, well inside the natural variance of GPT-4o evaluations. These results confirm that mod-1 mirroring does not degrade linguistic quality, aligning with our theoretical guarantee that MirrorMark is distribution-preserving. In contrast, distortion-based baselines such as MPAC and RSBH exhibit noticeably lower GPT-4o scores, consistent with their higher perplexity and the known side effects of their logit-biasing mechanisms.

The diversity analysis further reinforces these findings. Although MPAC and RSBH report high distinct-2 and low repetition rates, this behavior is driven by artificially skewing the token distribution away from natural language usage, which corresponds to their lower GPT-4o scores. In comparison, MirrorMark, especially the tournament-sampling variant, achieves distinct-2 and repetition rates nearly identical to non-watermarked text, demonstrating that it preserves natural linguistic diversity. While Gumbel-max is inherently more deterministic under top-$k$ sampling and thus yields slightly lower diversity, GPT-4o evaluations confirm that this does not harm fluency or naturalness, as the generated sentences remain coherent and well-structured.

\newtcolorbox{promptbox}{
  breakable,
  enhanced,
  colback=gray!5,
  colframe=teal!60!black,
  title=GPT-4o Judge Prompt,
  fonttitle=\bfseries,
  coltitle=white,
  colbacktitle=teal!70!black,
  sharp corners,
  boxrule=0.7pt,
}

\begin{promptbox}
\small
You are an impartial expert evaluator of linguistic text quality.

The given text is a continuation generated from a truncated C4 sample (15--20 words).
The text may start or end abruptly because the generation length is fixed (e.g., 300 tokens).
Do \textbf{NOT} penalize truncation or incompleteness.

Evaluate \textbf{ONLY} linguistic quality:
\begin{itemize}
  \item Coherence --- logical flow of ideas
  \item Clarity --- easy to understand
  \item Naturalness --- how fluent / human-written the text appears
\end{itemize}

Rate each from 1 to 5.  
Compute ``overall'' as the average of the three.

Return only a JSON object in exactly the following structure:

\begin{verbatim}
{
  "coherence": float,
  "clarity": float,
  "naturalness": float,
  "overall": float
}
\end{verbatim}

Text: \verb|<<<TEXT>>>|
\end{promptbox}

\subsection{Performance of MirrorMark in 72 Bits and 90 Bits}\label{72bits_90bits}

Fig.~\ref{fig:detectability_of_MirrorMark_auc} demonstrates the AUC of MirrorMark in 72 bits and 90 bits across varying number of tokens, respectively.
\begin{figure*}[htbp]
  \centering
  \captionsetup[subfigure]{justification=centering, font=small}

  \begin{subfigure}[t]{0.49\textwidth}
    \centering
    \includegraphics[width=0.7\linewidth]{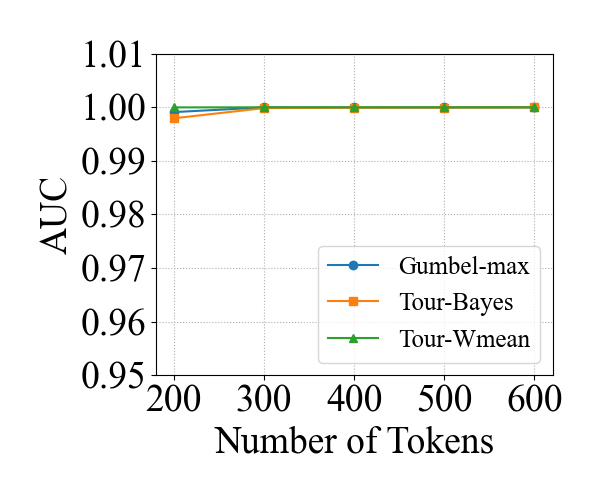}
    \caption{72 Bits}
  \end{subfigure}
  \hfill
  \begin{subfigure}[t]{0.49\textwidth}
    \centering
    \includegraphics[width=0.7\linewidth]{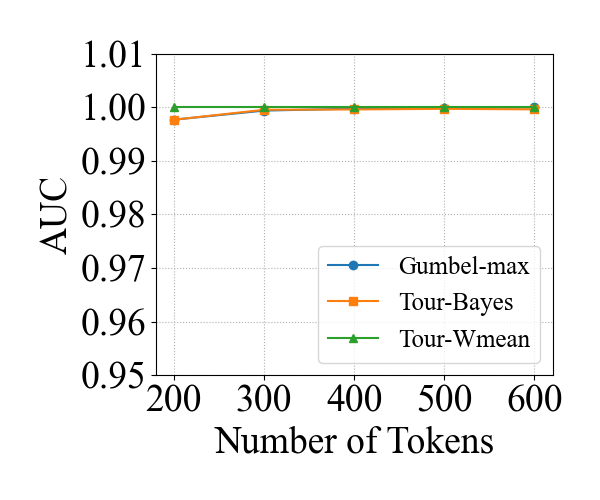}
    \caption{90 Bits}
  \end{subfigure}

  \caption{AUC of MirrorMark across varying number of tokens respectively with 72 and 90 bits embedded.}
  \label{fig:detectability_of_MirrorMark_auc}
\end{figure*}

\begin{figure*}[htbp]
  \centering
  \captionsetup[subfigure]{justification=centering, font=small}

  \begin{subfigure}[t]{0.9\textwidth}
    \centering
    \begin{minipage}[t]{0.49\linewidth}
      \centering
      \includegraphics[width=0.8\linewidth]{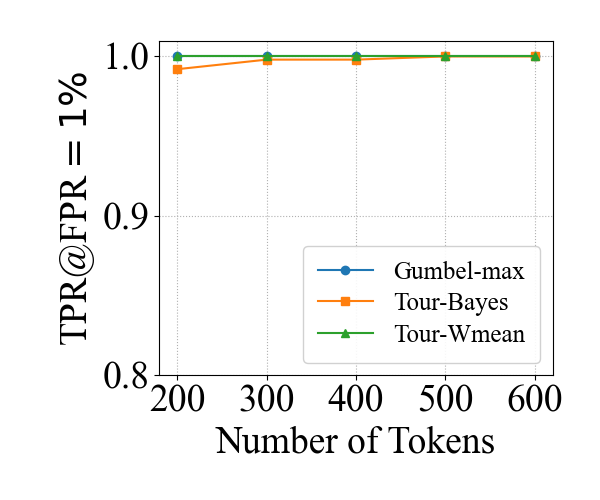}
    \end{minipage}\hfill
    \begin{minipage}[t]{0.49\linewidth}
      \centering
      \includegraphics[width=0.8\linewidth]{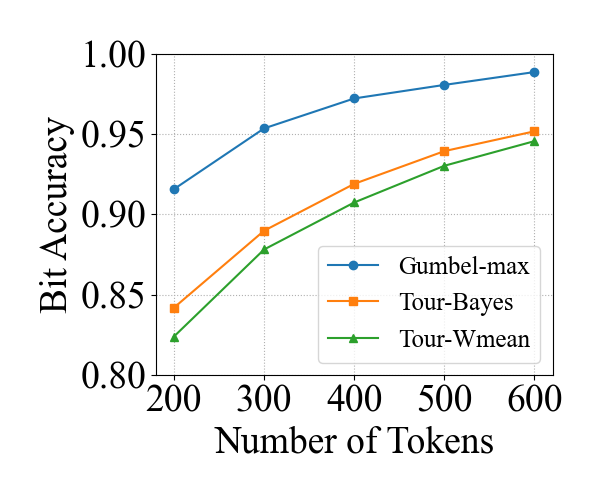}
    \end{minipage}
    \caption{72 Bits}
  \end{subfigure}

  \vspace{0.8em}

  \begin{subfigure}[t]{0.9\textwidth}
    \centering
    \begin{minipage}[t]{0.49\linewidth}
      \centering
      \includegraphics[width=0.8\linewidth]{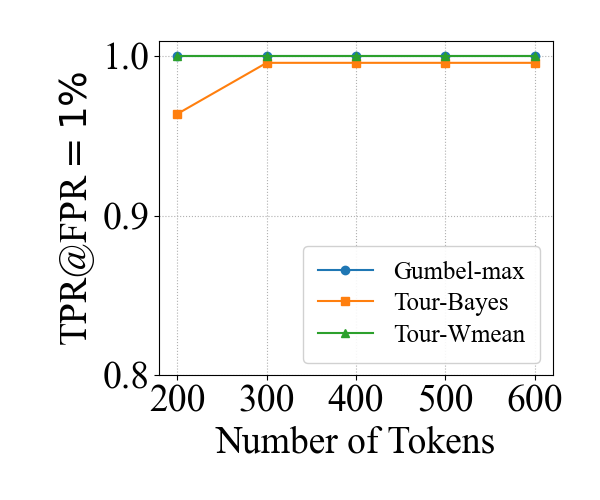}
    \end{minipage}\hfill
    \begin{minipage}[t]{0.49\linewidth}
      \centering
      \includegraphics[width=0.8\linewidth]{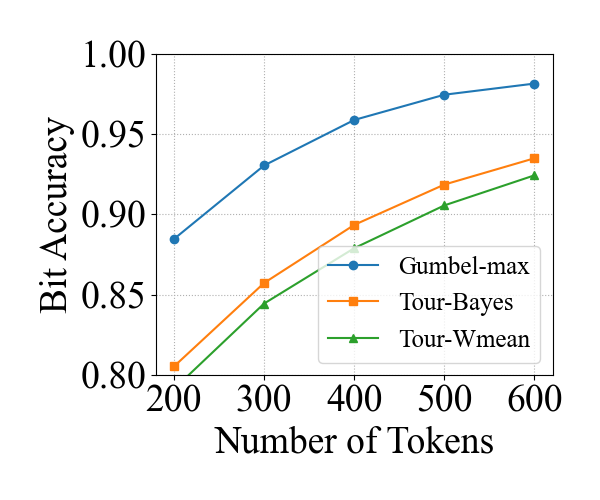}
    \end{minipage}
    \caption{90 Bits}
  \end{subfigure}

  \caption{Detectability of MirrorMark across varying number of tokens with 72 and 90 bits embedded.}
  \label{fig:detectability_of_MirrorMark}
\end{figure*}

\subsection{Threshold Calibration under Argmax Decoding}
\label{calibration}

In practical detection, MirrorMark decodes the embedded message by selecting the message with maximum score. This argmax step complicates the analytical characterization of the final null distribution. Therefore, instead of relying solely on an analytically derived threshold, we empirically calibrate the detection threshold on non-watermarked samples to achieve a target FPR of $1\%$, and evaluate whether the calibrated threshold remains reliable across settings. We first perform held-out calibration within the same setting, where the threshold is calibrated on one split and applied to a held-out split. As shown in Table~\ref{tab:within_setting_calibration}, the held-out FPR remains close to the target $1\%$FPR for both Gumbel-max and Tour-Wmean, while maintaining high TPR, indicating that empirical calibration remains reliable despite the argmax decoding step. We next evaluate cross-length transfer. Raw detection scores exhibit substantial FPR inflation when a threshold calibrated on longer sequences is applied to shorter ones, due to length-dependent scaling. To mitigate this mismatch, we apply z-score normalization. For Gumbel-max, we use $z = (C - 1)\sqrt{T}$, where $C$ is the average score and $T$ is the number of tokens. For Tour-Wmean with $m>1$, as follows, we use a variance-normalized approximation with null mean $1/2$,
\begin{equation}
    \begin{aligned}
        z = \frac{C - 1/2}{\sqrt{\mathrm{Var}(C)}}, \quad
\mathrm{Var}(C) \approx \sum_{\ell=1}^{L} \frac{\alpha_\ell^2}{12L^2T},
    \end{aligned}
\end{equation}
after normalizing the layer weights such that $\frac{1}{L}\sum_{\ell=1}^{L}\alpha_\ell=1$. As shown in Table~\ref{tab:cross_length_calibration}, z-score normalization substantially reduces FPR inflation in cross-length transfer while preserving high TPR. Finally, we evaluate cross-position transfer. As shown in Table~\ref{tab:cross_position_calibration}, z-score normalization does not fully resolve the mismatch and can sometimes worsen it. This indicates that the remaining shift is not solely due to token-count scaling. That is because when fewer tokens are assigned to each position, decoding becomes noisier and more sensitive to random fluctuations, while increasing the token budget per position improves stability.

\begin{table}[htbp]
\centering
\small
\caption{Held-out calibration within the same setting. Each entry reports held-out FPR / TPR after calibrating to target FPR $=0.01$.}
\label{tab:within_setting_calibration}
\begin{tabular}{lcc}
\toprule
Setting & Gumbel-max & Tour-Wmean \\
\midrule
pos=18, 300 tokens & 0.002 / 1.000 & 0.008 / 0.986 \\
pos=24, 300 tokens & 0.004 / 0.996 & 0.004 / 0.994 \\
pos=18, 400 tokens & 0.002 / 1.000 & 0.008 / 0.994 \\
pos=24, 400 tokens & 0.004 / 0.996 & 0.008 / 1.000 \\
\bottomrule
\end{tabular}
\end{table}

\begin{table}[htbp]
\centering
\small
\caption{Cross-length threshold transfer. Raw FPR / TPR $\rightarrow$ z-normalized FPR / TPR.}
\label{tab:cross_length_calibration}
\begin{tabular}{lcc}
\toprule
Transfer & Gumbel-max & Tour-Wmean \\
\midrule
pos=18, 400 $\rightarrow$ 200
& 0.130 / 1.000 $\rightarrow$ 0.004 / 0.998
& 0.772 / 1.000 $\rightarrow$ 0.034 / 0.986 \\

pos=24, 400 $\rightarrow$ 200
& 0.830 / 1.000 $\rightarrow$ 0.006 / 0.994
& 0.910 / 1.000 $\rightarrow$ 0.008 / 0.980 \\

pos=24, 400 $\rightarrow$ 300
& 0.086 / 1.000 $\rightarrow$ 0.006 / 0.994
& 0.176 / 1.000 $\rightarrow$ 0.010 / 0.988 \\
\bottomrule
\end{tabular}
\end{table}

\begin{table}[htbp]
\centering
\small
\caption{Cross-position threshold transfer. Raw FPR / TPR $\rightarrow$ z-normalized FPR / TPR.}
\label{tab:cross_position_calibration}
\begin{tabular}{lcc}
\toprule
Transfer & Gumbel-max & Tour-Wmean \\
\midrule
pos=18, 200 $\rightarrow$ pos=24, 200
& 0.100 / 0.998 $\rightarrow$ 0.154 / 0.996
& 0.060 / 0.994 $\rightarrow$ 0.140 / 0.994 \\

pos=18, 300 $\rightarrow$ pos=24, 300
& 0.006 / 0.996 $\rightarrow$ 0.028 / 0.996
& 0.072 / 1.000 $\rightarrow$ 0.210 / 1.000 \\

pos=18, 200 $\rightarrow$ pos=24, 300
& 0.002 / 0.996 $\rightarrow$ 0.080 / 0.998
& 0.004 / 0.984 $\rightarrow$ 0.150 / 1.000 \\
\bottomrule
\end{tabular}
\end{table}

\subsection{Cross-language adaptation}\label{cross_lang}

To evaluate whether MirrorMark is tied to a specific language or can be reliably applied across languages, we conduct a cross-language experiment using the multilingual XL-Sum dataset~\citep{hasan2021xlsumlargescalemultilingualabstractive} on Gemma-7B-it~\citep{gemmateam2024gemmaopenmodelsbased}. For each language (English, Chinese, and Russian), we sample summaries from XL-Sum and prompt the model to generate full news articles in the corresponding language. During generation, we apply exactly the same MirrorMark watermarking rule as in our main experiments. For each language, we generate 500 paired watermarked and non-watermarked samples of length 200 tokens, and evaluate both the Bayesian detector for tournament sampling (Tour-Bayes) and the analytic detector for Gumbel-max.

Fig.~\ref{roc_cross_lang} shows that a detection threshold~$\tau$ calibrated in one language does not perfectly transfer to another. In particular, when a threshold learned on English is applied to Chinese, the empirical false positive rate (FPR) on Chinese increases, whereas applying the same threshold to Russian yields largely unchanged behavior. Conversely, a threshold calibrated on Chinese becomes overly conservative when applied to English or Russian, reducing both FPR and true positive rate (TPR).

This cross-language threshold mismatch is consistent with a well-established empirical observation~\citep{Montemurro20111UniversalEntropy}: Chinese text exhibits systematically lower next-token entropy than English and Russian, while English and Russian have similar entropy profiles. As a result, both watermarked and non-watermarked scores for Chinese are expected to be shifted toward larger values, even when the separation between the two hypotheses remains comparable. Consequently, a threshold~$\tau$ calibrated on English (where the non-watermarked distribution is farther left) becomes slightly too permissive for Chinese, increasing FPR, whereas a threshold calibrated on Chinese becomes too strict when applied to English or Russian.

Overall, Fig.~\ref{roc_cross_lang} demonstrates that MirrorMark is not tied to English or any particular dataset. Across all three languages, both the tournament-based (Tour-Bayes) and Gumbel-max variants remain reliably detectable, with similar ROC trends. The observed differences are limited to small score-scale shifts induced by language-specific entropy characteristics, which can be addressed through simple threshold recalibration. These results support our claim that MirrorMark is a data-agnostic generative watermark whose detectability is primarily governed by sequence length and entropy, rather than by language or domain.

\begin{figure*}[htbp]
  \centering
  \captionsetup[subfigure]{justification=centering, font=small}

  \begin{subfigure}[t]{0.49\textwidth}
    \centering
    \includegraphics[width=0.9\linewidth]{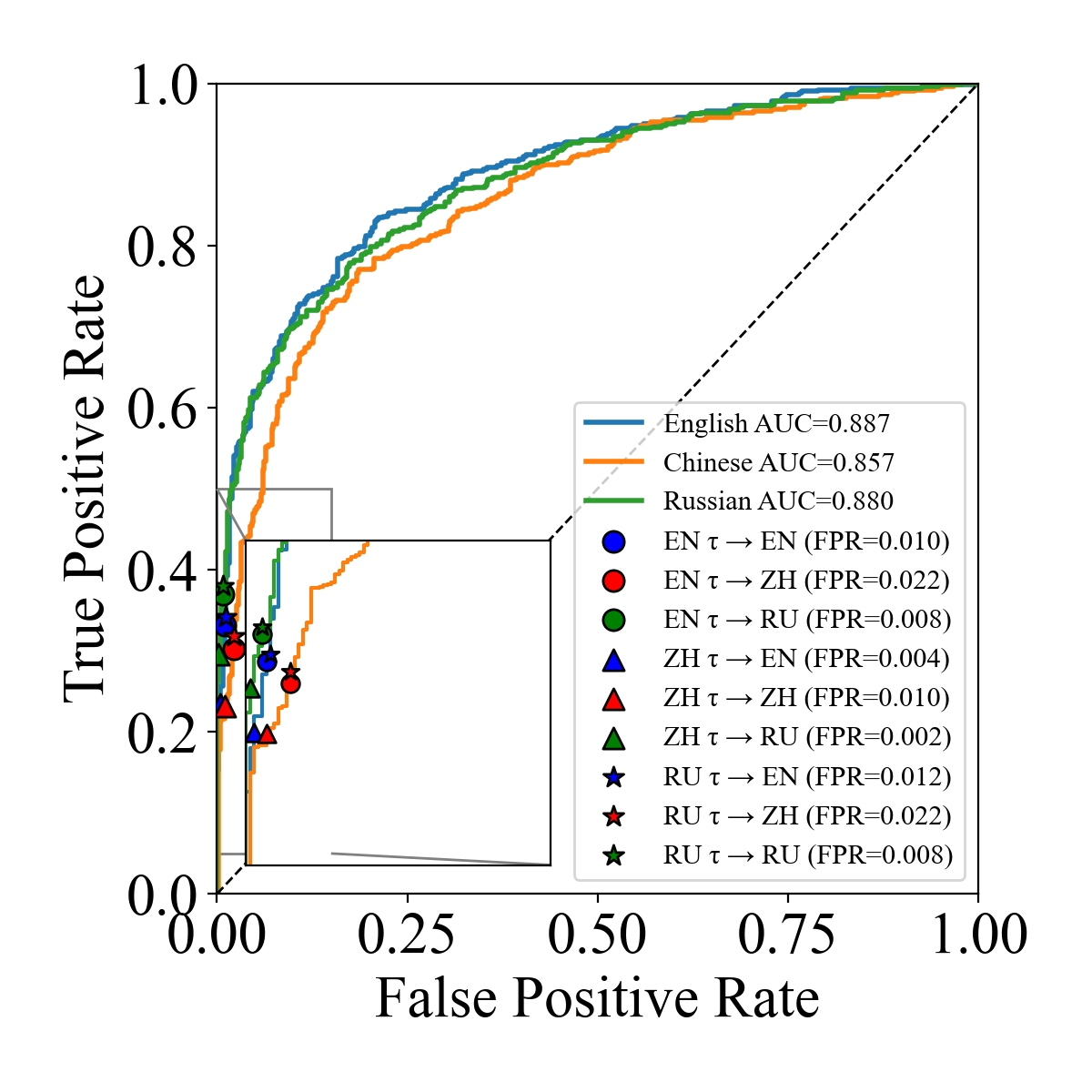}
    \caption{Tour-Bayes}
  \end{subfigure}
  \hfill
  \begin{subfigure}[t]{0.49\textwidth}
    \centering
    \includegraphics[width=0.9\linewidth]{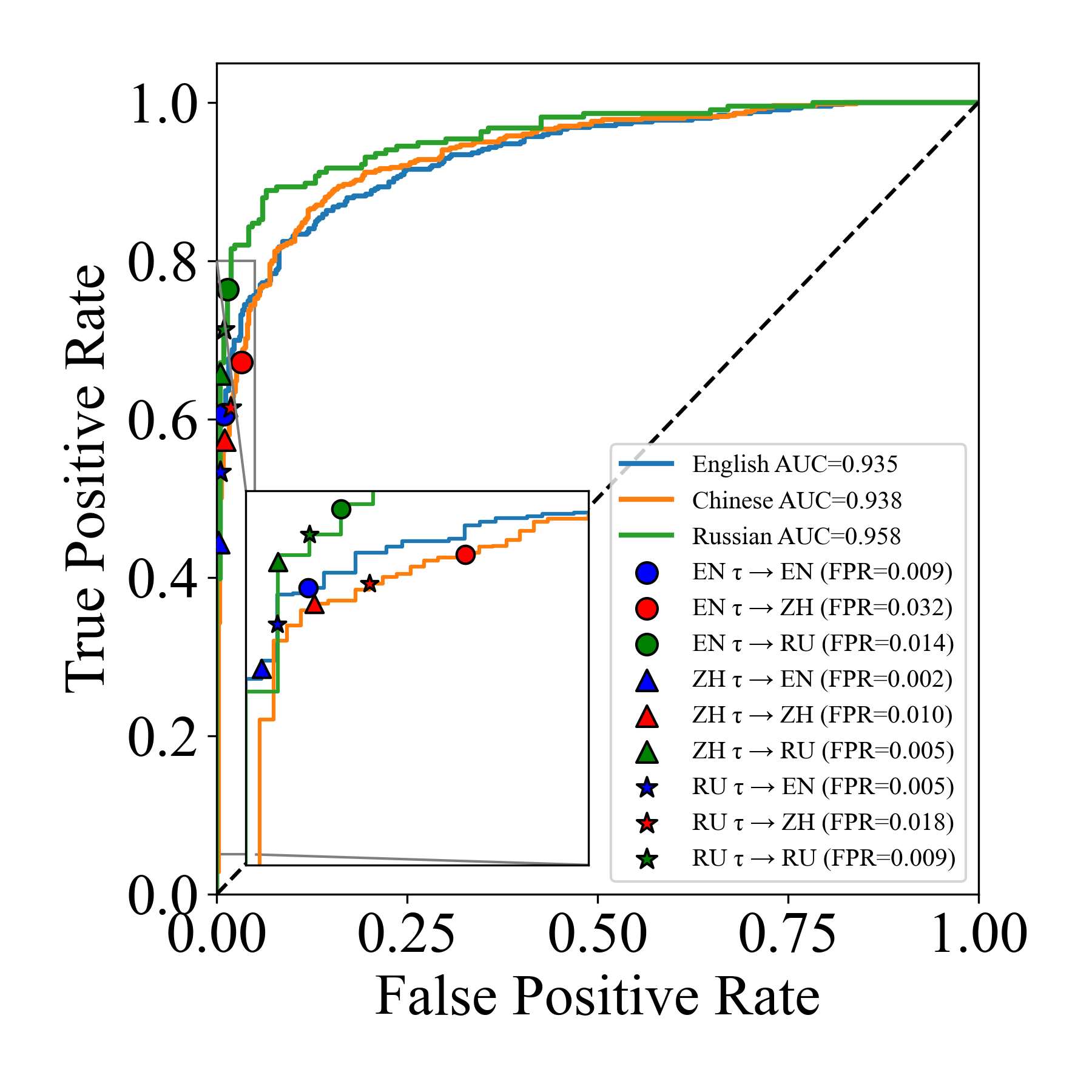}
    \caption{Gumbel-max}
  \end{subfigure}

  \caption{ROC across three languages over 200 tokens, where $m=3$ and the number of positions $H$=12.}\label{roc_cross_lang}
\end{figure*}

\subsection{Empirical EER comparison}\label{sec:alignment}
\vspace{-0.5em}
\begin{figure}[htbp]
  \centering

  \includegraphics[width=0.6\columnwidth]{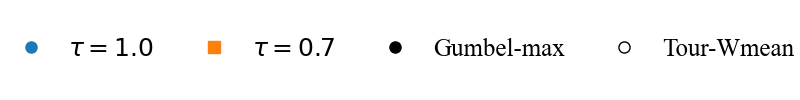}

  \vspace{-1em}

  \begin{minipage}[htbp]{0.48\columnwidth}
    \centering
    \includegraphics[width=0.7\linewidth]{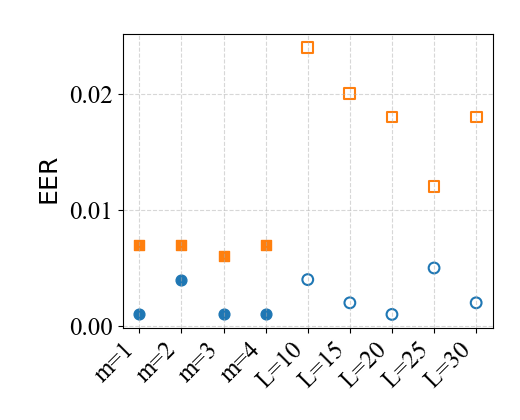}
  \end{minipage}
  \hfill
  \begin{minipage}[htbp]{0.48\columnwidth}
    \centering
    \includegraphics[width=0.7\linewidth]{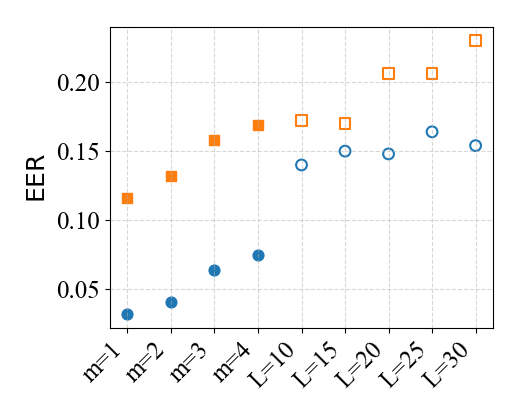}
  \end{minipage}

  \caption{Empirical EER of MirrorMark on LLaMA-2-7B with C4 prompts (left) and Gemma-7B-it~\citep{gemmateam2024gemmaopenmodelsbased} with ELI5 prompts~\citep{fan-etal-2019-eli5} (right), with token length $T=200$. For tournament-based MirrorMark, $m=1$.}
  \label{fig:eer_alignment}
\end{figure}

Fig.~\ref{fig:eer_alignment} reports empirical EER under a fixed token budget of $T=200$. In the high-entropy regime (left plot; $\tau=1.0$ with $\mathcal{H}\approx 1.7$), Gumbel-max and Tour-Wmean achieve comparable EER. When entropy decreases (e.g., $\tau=0.7$ with $\mathcal{H}\approx 0.91$), the advantage of Gumbel-max becomes much more pronounced, and the EER of tournament-based decoding increases more substantially than it does at $\tau=1.0$. Under even stricter entropy conditions (right plot; e.g., $\mathcal{H}\approx 0.54$ at $\tau=1.0$), the impact of the symbol size $m$ in Gumbel-max becomes evident, with EER increasing as $m$ grows, consistent with the theoretical dependence on $m$. In contrast, the tournament variant exhibits a stronger sensitivity to the choice of $L$; in particular, larger $L$ can lead to noticeably worse EER (e.g., $L=30$ under both $\tau=1.0$ and $\tau=0.7$). This behavior is expected in low-entropy regimes. While increasing the tournament depth $L$ intuitively aggregates more watermark evidence, each additional laye further consumes the limited randomness available during decoding, leading to a higher collision probability and diminishing returns from later layers. As a result, the marginal contribution of deeper layers quickly diminishes and may even become detrimental when noise dominates the signal. This effect is pronounced when Uniform-distributed $u$ values are used in tournament sampling, as they tend to induce higher collision probabilities across layers under constrained entropy. 

\begin{figure}[htbp]
  \centering
  \includegraphics[width=0.5\columnwidth]{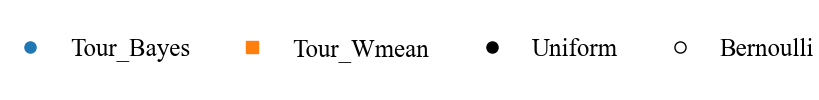}

  \begin{minipage}{0.8\columnwidth}
    \centering
    \includegraphics[width=0.5\linewidth]{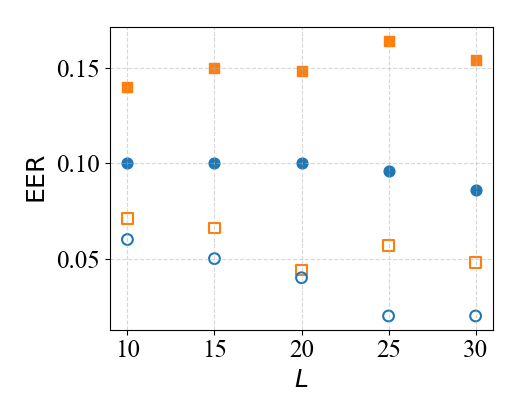}
  \end{minipage}

  \vspace{-0.4em}
  \caption{Empirical EER of tournament-based MirrorMark on Gemma-7B-it with ELI5 prompts, with token length $T=200$ and temperature $\tau=1.0$. $m=1$ and $H=1$.}
  \label{fig:eer_wmean_bayes}
\end{figure}

In Fig.~\ref{fig:eer_wmean_bayes}, we observe that Tour-Bayes with Bernoulli-distributed $u$ values achieves a faster EER reduction as the number of layers increases than with Uniform-distributed $u$ values. The reason is that Bernoulli values maximize the discreteness of the random variables, producing more polarized per-layer evidence. This stronger contrast is less susceptible to collision noise and is especially beneficial to Bayesian detection, which models the likelihood structure of $u$ values across tournament layers.

However, such values are binary and support only one bit per position. Since our focus is on mod-1 mirroring for multi-bit embedding, this advantage does not directly apply.

\subsection{Detectability and text quality on instruction task}\label{instruction-task}

Fig.~\ref{fig:gemma} reports the detectability of MirrorMark on instruction-following generation, where 500 randomly selected ELI5 prompts are evaluated using the Gemma-7B-it model, where the temperature is $1.0$. Here we include Tour-Bayes as a representative tournament-based detector under the main evaluation protocol. We observe that Gumbel-max–based MirrorMark achieves higher detection performance than tournament-based MirrorMark across all metrics. For example, in Fig.~\ref{fig:gemma}(b), the Gumbel-max–based MirrorMark reaches approximately 80\% TPR@1\%FPR at 100 tokens, while in Fig.~\ref{fig:gemma}(e), the tournament-based MirrorMark reaches around 65\% at the same length.

Despite the reduced detectability, MirrorMark maintains text quality comparable to non-watermarked text. Table~\ref{tab:ppl_instruction} reports the perplexity of 200-token instruction-following responses. Across all watermarking configurations, the perplexity of watermarked text remains very close to the non-watermarked baseline, and the 90\% confidence intervals largely overlap. Switching the tournament sampler from $\mathrm{Uniform}(0,1)$ to Bernoulli $u \in \{0,1\}$ slightly increases perplexity to $1.8208$ ($[1.7882, 1.8541]$) for $m{=}1$, but the change remains modest and does not indicate systematic degradation in output quality.

\begin{figure*}[htbp]
  \centering

  \begin{minipage}[t]{0.32\textwidth}
    \centering
    \subcaptionbox{Gumbel-max}{
      \includegraphics[width=\linewidth]{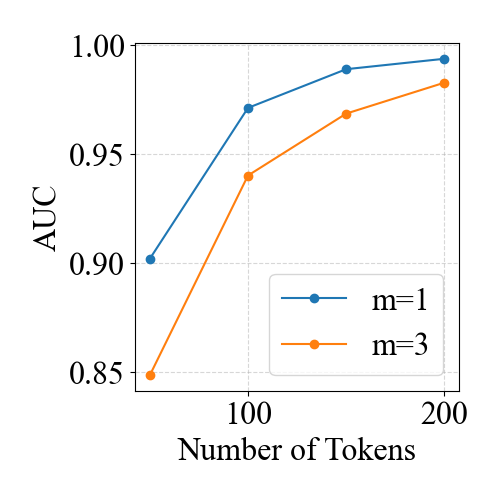}
    }
  \end{minipage}\hfill
  \begin{minipage}[t]{0.32\textwidth}
    \centering
    \subcaptionbox{Gumbel-max}{
      \includegraphics[width=\linewidth]{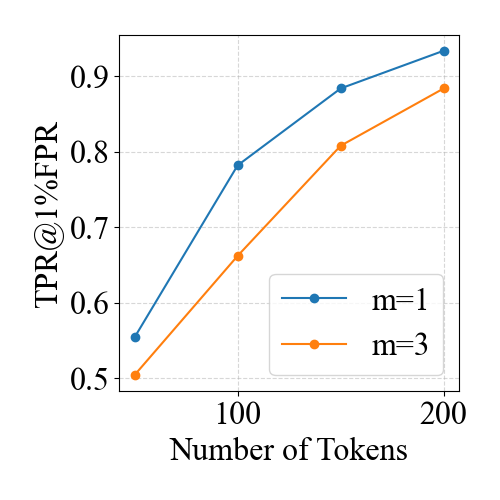}
    }
  \end{minipage}\hfill
  \begin{minipage}[t]{0.32\textwidth}
    \centering
    \subcaptionbox{Gumbel-max}{
      \includegraphics[width=\linewidth]{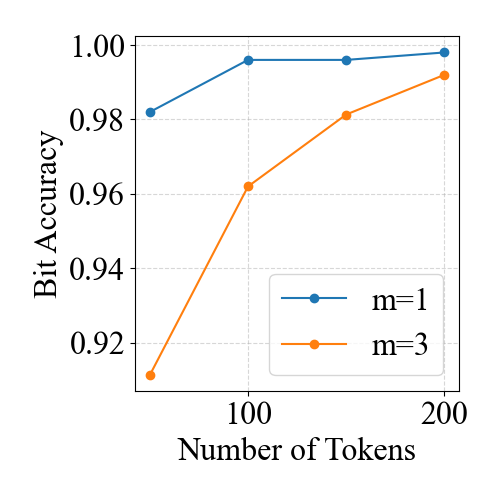}
    }
  \end{minipage}

  \begin{minipage}[t]{0.32\textwidth}
    \centering
    \subcaptionbox{Tour-Bayes, $m=1$, Uniform}{
      \includegraphics[width=\linewidth]{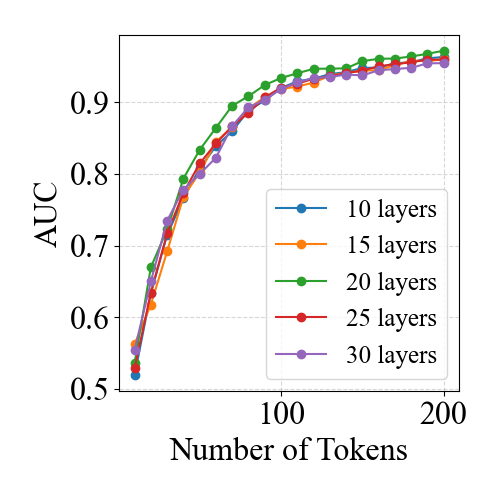}
    }
  \end{minipage}\hfill
  \begin{minipage}[t]{0.32\textwidth}
    \centering
    \subcaptionbox{ Tour-Bayes, $m=1$, Uniform}{
      \includegraphics[width=\linewidth]{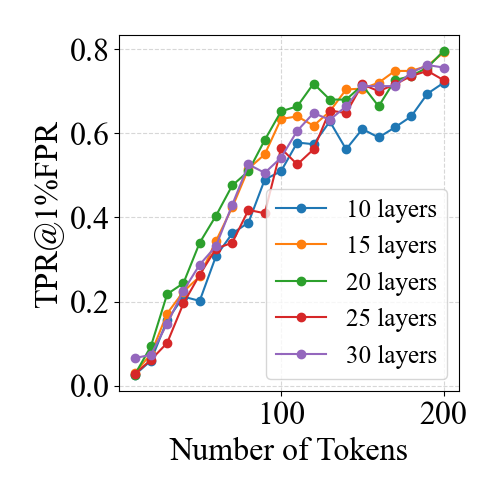}
    }
  \end{minipage}\hfill
  \begin{minipage}[t]{0.32\textwidth}
    \centering
    \subcaptionbox{Tour-Bayes, $m=1$, Uniform}{
      \includegraphics[width=\linewidth]{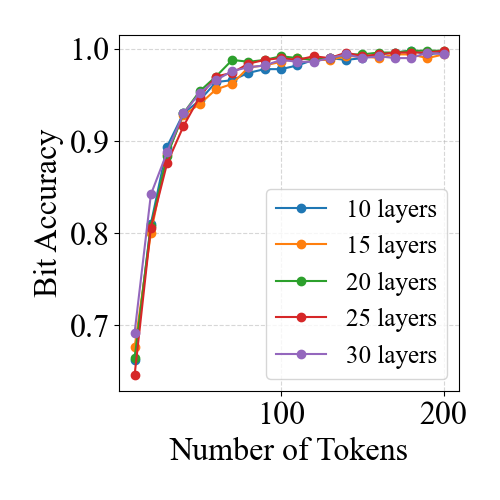}
    }
  \end{minipage}

\begin{minipage}[t]{0.32\textwidth}
    \centering
    \subcaptionbox{Tour-Bayes, $m=3$, Uniform}{
      \includegraphics[width=\linewidth]{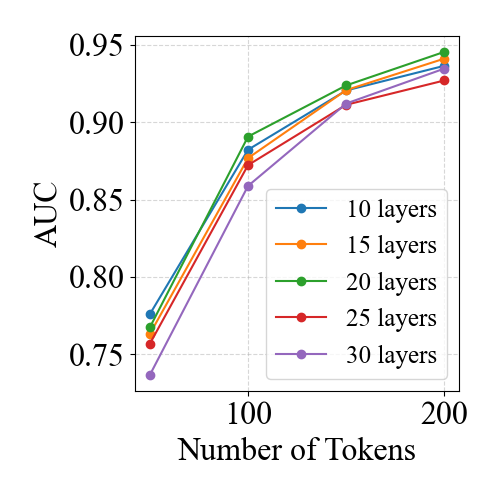}
    }
  \end{minipage}\hfill
  \begin{minipage}[t]{0.32\textwidth}
    \centering
    \subcaptionbox{Tour-Bayes, $m=3$, Uniform}{
      \includegraphics[width=\linewidth]{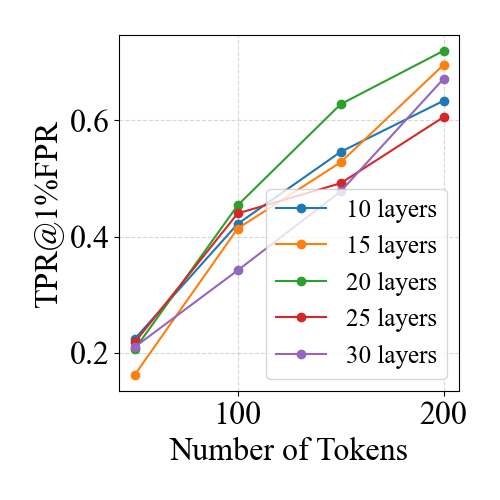}
    }
  \end{minipage}\hfill
  \begin{minipage}[t]{0.32\textwidth}
    \centering
    \subcaptionbox{ Tour-Bayes, $m=3$, Uniform}{
      \includegraphics[width=\linewidth]{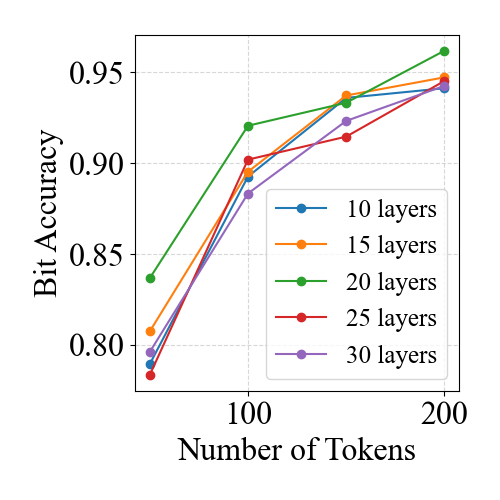}
    }
  \end{minipage}
  
  \begin{minipage}[t]{0.32\textwidth}
    \centering
    \subcaptionbox{Tour-Bayes, $m=1$, Bernoulli}{
      \includegraphics[width=\linewidth]{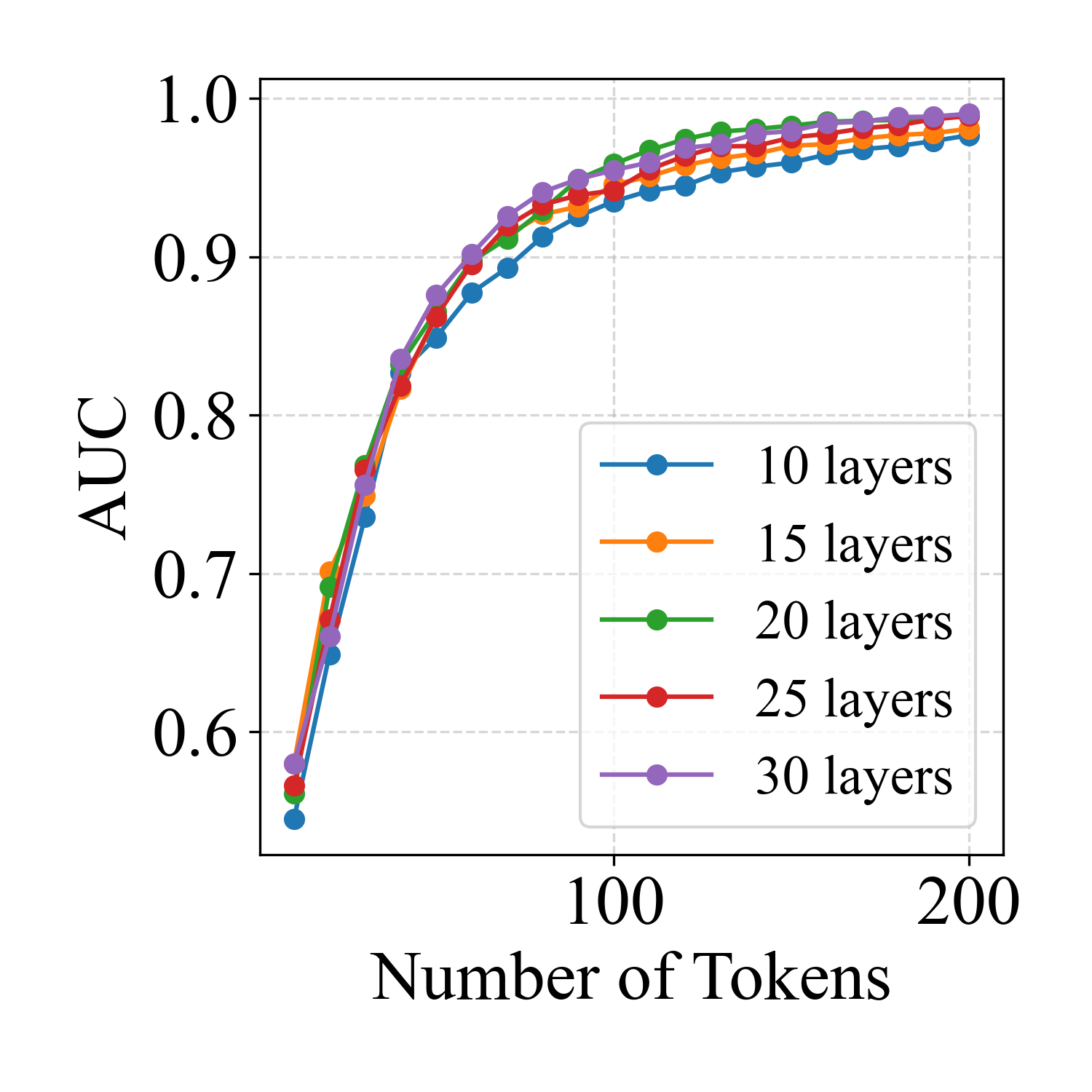}
    }
  \end{minipage}\hfill
  \begin{minipage}[t]{0.32\textwidth}
    \centering
    \subcaptionbox{Tour-Bayes, $m=1$, Bernoulli}{
      \includegraphics[width=\linewidth]{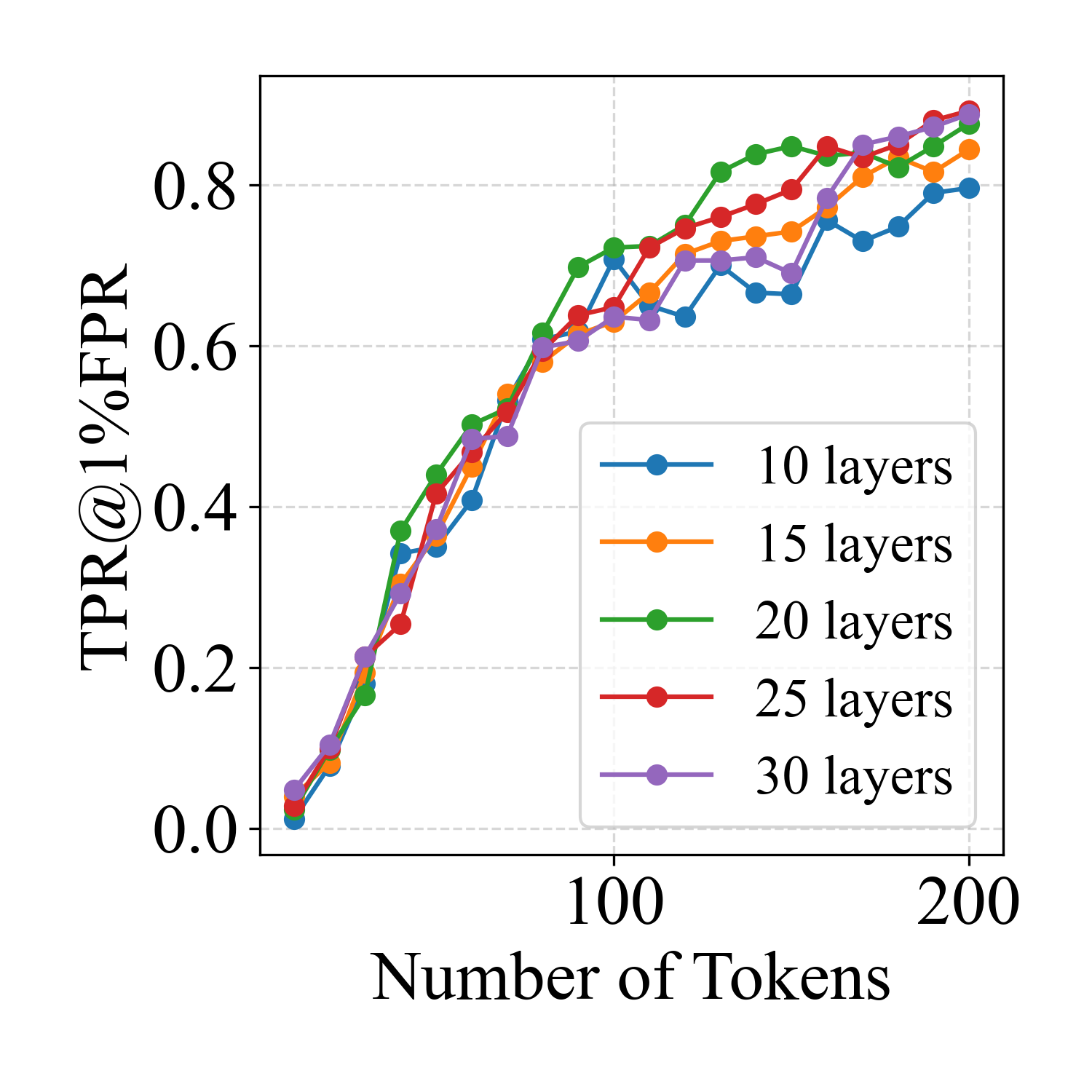}
    }
  \end{minipage}\hfill
  \begin{minipage}[t]{0.32\textwidth}
    \centering
    \subcaptionbox{ Tour-Bayes, $m=1$, Bernoulli}{
      \includegraphics[width=\linewidth]{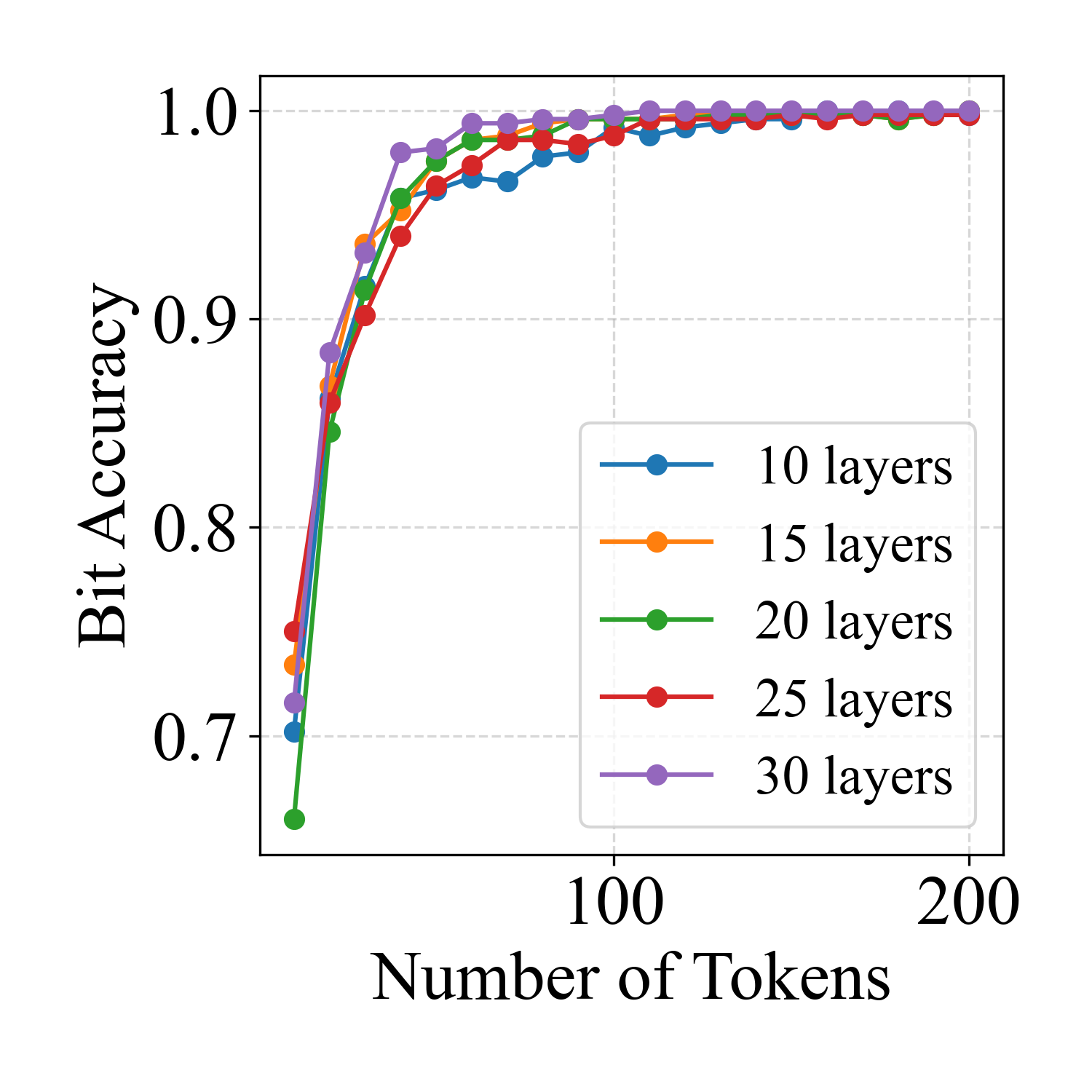}
    }
  \end{minipage}
  \caption{Detectability of MirrorMark on Gemma-7B-it and ELI5 prompts, with watermark of $m\in\{1, 3\}$ and $H=1$ embedded in each response. The temperature is set to be $1.0$}
  \label{fig:gemma}
\end{figure*}

\begin{table}[htbp]
\centering
\footnotesize
\setlength{\tabcolsep}{4pt}        
\renewcommand{\arraystretch}{0.95}  
\caption{Perplexity of 200-token instruction-following responses on Gemma-7B-it with ELI5 prompts. We report the mean perplexity with a 90\% bootstrap confidence interval. Bernoulli $u$ is only applicable to tournament-based MirrorMark and only when $m=1$.}
\label{tab:ppl_instruction}
\begin{tabular}{lccc}
\toprule
& \multicolumn{2}{c}{$m$=1} & $m$=3 \\
\cmidrule(lr){2-3}\cmidrule(lr){4-4}
& Uniform & Bernoulli & Uniform \\
\midrule

Non-watermark
& \multicolumn{3}{c}{\makecell[c]{\meanppl{1.7829}\\ \smallci{1.7569,\,1.8094}}} \\\midrule

Gumbel-max
& \makecell[c]{\meanppl{1.8101}\\ \smallci{1.7785,\,1.8432}}
& --
& \makecell[c]{\meanppl{1.7940}\\ \smallci{1.7621,\,1.8266}} \\

Tour-Bayes
& \makecell[c]{\meanppl{1.7784}\\ \smallci{1.7493,\,1.8079}}
& \makecell[c]{\meanppl{1.8208}\\ \smallci{1.7882,\,1.8541}}
& \makecell[c]{\meanppl{1.7986}\\ \smallci{1.7681,\,1.8300}} \\
\bottomrule
\end{tabular}
\end{table}

\subsection{Detetability Comparison over different $m$ for MirrorMark after copy-paste attack}\label{cp_appendix}

\begin{table*}[htbp]
\centering
\small
\setlength{\tabcolsep}{2.5pt}     
\renewcommand{\arraystretch}{1.06} 
\caption{Detectability for different approaches on 400 tokens with 36 bits embedded after copy-paste attack, where the edit fraction $\epsilon\in\{0.1, 0.3, 0.5\}$.}
\label{tab:detect_400tokens_copy_paste_135}

\begin{tabular}{@{}L
                C C C !{\vrule width 0.5pt}
                C C C !{\vrule width 0.5pt}
                C C C@{}}
\toprule
\multirow{2}{*}{Method}
& \multicolumn{3}{c}{$\epsilon=0.1$}
& \multicolumn{3}{c}{$\epsilon=0.3$}
& \multicolumn{3}{c}{$\epsilon=0.5$} \\
\cmidrule(lr){2-4}\cmidrule(lr){5-7}\cmidrule(lr){8-10}
& {\footnotesize AUC} & {\footnotesize TPR@1\%FPR} & {\footnotesize Bit Acc.}
& {\footnotesize AUC} & {\footnotesize TPR@1\%FPR} & {\footnotesize Bit Acc.}
& {\footnotesize AUC} & {\footnotesize TPR@1\%FPR} & {\footnotesize Bit Acc.} \\
\midrule
MPAC
& 0.9847 & 0.9025 & 0.9263
& 0.9729 & 0.8650 & 0.8725
& 0.9290 & 0.6075 & 0.7959 \\

RSBH
& 0.9840 & 0.4275 & 0.6156
& 0.9386 & 0.0150 & 0.6181
& 0.7243 & 0.01 & 0.5825 \\

StealthInk
& 0.9901 & 0.9100 & 0.8870
& 0.9636 & 0.6675 & 0.8213
& 0.8374 & 0.2575 & 0.7419 \\

Tour-Wmean
& 0.9989 & 0.9980 & 0.9357
& 0.9932 & 0.9680 & 0.8750
& 0.9671 & 0.7560 & 0.7880 \\

Tour-Bayes
& 0.9987 & 0.9960 & 0.9357
& 0.9944 & 0.9760 & 0.8750
& 0.9787 & 0.7920 & 0.7880 \\

Gumbel-max
& \textbf{1.0} & \textbf{1.0} & \textbf{0.9801}
& \textbf{1.0}& \textbf{1.0} & \textbf{0.9549}
& \textbf{1.0} & \textbf{1.0} & \textbf{0.8986} \\
\bottomrule
\end{tabular}
\end{table*}

Table \ref{tab:detect_400tokens_copy_paste_135} compares the detectability of different approaches under copy-paste attacks, where each watermarked text contains 36 embedded bits within a 400-token sequence. Besides, Fig.~\ref{fig:cp_m} and Fig.~\ref{fig:cp_m135} shows the detectability of MirrorMark against copy-paste attacks with 36 bits embedded in 400 tokens, where different $m$ are compared. $m\in\{2, 3, 4, 6\}$ is corresponding to $H\in\{18, 12, 9, 6\}$ respectively.

\subsection{Performance of $m=1$ defined by~\eqref{eq:mod1_m1} and~\eqref{eq:mod1_reflect_compact}}\label{1_bit_justification}

We evaluate the $m=1$ case using 50 tokens generated by LLaMA-2-7B and Gemma-7B-it on C4 and ELI5 prompts, respectively, both with temperature $\tau=1.0$. On LLaMA-2-7B responses, the entropy is higher, and both~\eqref{eq:mod1_m1} and~\eqref{eq:mod1_reflect_compact} achieve identical performance. On Gemma-7B-it where the entropy is lower, we observe mild bit-0/1 gaps for both formulations. Intuitively,~\eqref{eq:mod1_m1} corresponds to a threshold-based decision rule, while~\eqref{eq:mod1_reflect_compact} induces a more symmetric partition. Under low entropy, skewed token distributions make threshold-based decisions more sensitive to finite-sample noise.

\begin{table}[htbp]
\centering
\begin{tabular}{l c c c c}
\toprule
Model & Setting & Bit Acc. & Bit 1 & Bit 0 \\
\midrule
LLaMA & \eqref{eq:mod1_m1} & 1.00 & 1.00 & 1.00 \\
LLaMA & \eqref{eq:mod1_reflect_compact}  & 1.00 & 1.00 & 1.00 \\
Gemma & \eqref{eq:mod1_m1} & 0.94 & 0.92 & 0.96 \\
Gemma & \eqref{eq:mod1_reflect_compact}  & 0.95 & 0.95 & 0.96 \\
\bottomrule
\end{tabular}
\caption{Comparison of 1-bit mod-1 mirroring formulations under different entropy regimes.}
\label{tab:m1_comparison}
\end{table}

\subsection{Ablation study for CABS}

\subsubsection{Sensitivity of Parameters}\label{ablation_cabs_parameters}

To analyze the sensitivity of CABS to its design parameters, we conduct a comprehensive ablation over the frame size $f$, context window $W$, and maximum frame expansion factor $\texttt{max\_factor}$.
Tables~\ref{tab:cabs-insert-final}, \ref{tab:cabs-del-final}, and \ref{tab:cabs-sub-final} report results under insertion, deletion, and substitution attacks, respectively.
For each attack, we consider edit ratios $\epsilon \in \{0, 0.2, 0.4\}$, where $\epsilon=0$ corresponds to the no-attack setting.

Overall, the ablation results reveal clear and consistent trends across attack types.
Setting $f=3$ consistently achieves the highest bit accuracy and strong TPR@1\%FPR under all attacks, indicating an optimal balance between robustness and effective token utilization.
The context window $W=4$ performs best or near-best across all edit ratios, capturing sufficient contextual information without overfitting to local perturbations.
Similarly, $\texttt{max\_factor}=1.5$ yields the strongest robustness across edit rates, balancing frame-size flexibility and stability.
These observations collectively justify the default configuration used in the main paper: $f=3, W=4, \texttt{max\_factor}=1.5$.

We further observe that different attack types affect detectability and bit recovery in distinct ways.
Insertion primarily shifts token positions forward.
Even at $\epsilon=0.4$, MirrorMark maintains high detectability (AUC $=0.999$, TPR@1\%FPR $=0.992$), while bit accuracy drops to $0.790$, indicating that insertion mainly impairs bit recovery rather than WM/Non-WM separation.
Deletion is the most adversarial attack, as it reduces the number of available tokens.
At $\epsilon=0.4$, AUC remains above chance, i.e., $0.939$ and TPR@1\%FPR degrades to $0.604$, which arises not only from desynchronization of the token-to-position mapping but also from reduced detectability due to fewer surviving tokens.
In contrast, substitution preserves sequence length and is the least destructive.
At $\epsilon=0.4$, MirrorMark sustains strong detectability, i.e., AUC=$ 0.998$, TPR@1\%FPR=$ 0.992$, and relatively high bit accuracy, i.e., $0.75$--$0.78$, confirming that CABS effectively absorbs localized perturbations.

\begin{table*}[htbp]
\centering
\small
\begin{tabular}{l l c c c}
\toprule
 & Setting & AUC & TPR@1\%FPR & Bit Accuracy \\
\midrule
\multicolumn{5}{l}{\textbf{Varying $f$ (with $W=4$, max\_factor=1.5)}} \\
$\epsilon=0.0$ & $f=1$ & \textbf{1.000} & 0.998 & 0.939 \\
 & $f=2$ & \textbf{1.000} & 0.998 & 0.952 \\
 & $f=3$ & \textbf{1.000} & \textbf{1.000} & \textbf{0.985} \\
 & $f=4$ & \textbf{1.000} & 0.998 & 0.957 \\
\midrule
$\epsilon=0.2$ & $f=1$ & 0.999 & 0.996 & 0.828 \\
 & $f=2$ & 0.999 & 0.996 & 0.838 \\
 & $f=3$ & \textbf{1.000} & \textbf{0.998} & \textbf{0.852} \\
 & $f=4$ & \textbf{1.000} & 0.996 & 0.847 \\
\midrule
$\epsilon=0.4$ & $f=1$ & 0.998 & 0.984 & 0.772 \\
 & $f=2$ & \textbf{0.999} & 0.988 & 0.766 \\
 & $f=3$ & \textbf{0.999} & \textbf{0.992} & \textbf{0.790} \\
 & $f=4$ & \textbf{0.999} & \textbf{0.992} & 0.785 \\
\midrule
\multicolumn{5}{l}{\textbf{Varying $W$ (with $f=3$, max\_factor=1.5)}} \\
$\epsilon=0.0$ & $W=1$ & \textbf{1.000} & 0.998 & 0.945 \\
 & $W=2$ & \textbf{1.000} & 0.998 & 0.943 \\
 & $W=3$ & \textbf{1.000} & 0.998 & 0.946 \\
 & $W=4$ & \textbf{1.000} & \textbf{1.000} & \textbf{0.985} \\
 & $W=5$ & \textbf{1.000} & \textbf{1.000} & 0.944 \\
\midrule
$\epsilon=0.2$ & $W=1$ & 0.999 & 0.994 & 0.841 \\
 & $W=2$ & 0.999 & \textbf{0.998} & 0.843 \\
 & $W=3$ & \textbf{1.000} & \textbf{0.998} & 0.841 \\
 & $W=4$ & \textbf{1.000} & \textbf{0.998} & \textbf{0.852} \\
 & $W=5$ & \textbf{1.000} & 0.996 & 0.835 \\
\midrule
$\epsilon=0.4$ & $W=1$ & 0.998 & 0.990 & 0.769 \\
 & $W=2$ & 0.999 & 0.992 & 0.770 \\
 & $W=3$ & \textbf{1.000} & \textbf{0.994} & 0.769 \\
 & $W=4$ & 0.999 & 0.992 & \textbf{0.790} \\
 & $W=5$ & 0.998 & 0.982 & 0.763 \\
\midrule
\multicolumn{5}{l}{\textbf{Varying max\_factor (with $f=3$, $W=4$)}} \\
$\epsilon=0.0$ & $\text{max\_factor}=1.25$ & \textbf{1.000} & \textbf{1.000} & 0.948 \\
 & $\text{max\_factor}=1.50$ & \textbf{1.000} & \textbf{1.000} & \textbf{0.985} \\
 & $\text{max\_factor}=2.00$ & \textbf{1.000} & \textbf{1.000} & 0.955 \\
\midrule
$\epsilon=0.2$ & $\text{max\_factor}=1.25$ & \textbf{1.000} & 0.996 & 0.850 \\
 & $\text{max\_factor}=1.50$ & \textbf{1.000} & 0.998 & \textbf{0.852} \\
 & $\text{max\_factor}=2.00$ & \textbf{1.000} & \textbf{1.000} & 0.839 \\
\midrule
$\epsilon=0.4$ & $\text{max\_factor}=1.25$ & \textbf{0.999} & 0.988 & 0.768 \\
 & $\text{max\_factor}=1.50$ & \textbf{0.999} & \textbf{0.992} & \textbf{0.790} \\
 & $\text{max\_factor}=2.00$ & 0.998 & 0.990 & 0.778 \\
\bottomrule
\end{tabular}
\caption{Robustness of MirrorMark under insertion attacks with different CABS parameters, where $m=2$, $H=12$, and the number of tokens is 300.}
\label{tab:cabs-insert-final}
\end{table*}

\begin{table*}[htbp]
\centering
\small
\begin{tabular}{l l c c c}
\toprule
 & Setting & AUC & TPR@1\%FPR & Bit Accuracy \\
\midrule
\multicolumn{5}{l}{\textbf{Varying $f$ (with $W=4$, max\_factor=1.5)}} \\
$\epsilon=0.0$ & $f=1$ & \textbf{1.000} & 0.998 & 0.939 \\
 & $f=2$ & \textbf{1.000} & 0.998 & 0.952 \\
 & $f=3$ & \textbf{1.000} & \textbf{1.000} & \textbf{0.985} \\
 & $f=4$ & \textbf{1.000} & 0.998 & 0.957 \\
\midrule
$\epsilon=0.2$ & $f=1$ & 0.998 & \textbf{0.988} & 0.683 \\
 & $f=2$ & 0.997 & 0.984 & \textbf{0.702} \\
 & $f=3$ & \textbf{0.999} & \textbf{0.988} & 0.700 \\
 & $f=4$ & 0.998 & 0.982 & 0.700 \\
\midrule
$\epsilon=0.4$ & $f=1$ & 0.939 & \textbf{0.604} & 0.464 \\
 & $f=2$ & 0.942 & 0.566 & 0.471 \\
 & $f=3$ & \textbf{0.946} & 0.566 & 0.472 \\
 & $f=4$ & 0.945 & 0.584 & \textbf{0.474} \\
\midrule
\multicolumn{5}{l}{\textbf{Varying $W$ (with $f=3$, max\_factor=1.5)}} \\
$\epsilon=0.0$ & $W=1$ & \textbf{1.000} & 0.998 & 0.945 \\
 & $W=2$ & \textbf{1.000} & 0.998 & 0.943 \\
 & $W=3$ & \textbf{1.000} & 0.998 & 0.946 \\
 & $W=4$ & \textbf{1.000} & \textbf{1.000} & \textbf{0.985} \\
 & $W=5$ & \textbf{1.000} & \textbf{1.000} & 0.944 \\
\midrule
$\epsilon=0.2$ & $W=1$ & 0.998 & 0.980 & 0.686 \\
 & $W=2$ & 0.999 & \textbf{0.990} & 0.689 \\
 & $W=3$ & \textbf{1.000} & 0.986 & \textbf{0.707} \\
 & $W=4$ & 0.999 & 0.988 & 0.700 \\
 & $W=5$ & 0.999 & 0.988 & 0.700 \\
\midrule
$\epsilon=0.4$ & $W=1$ & \textbf{0.956} & 0.580 & 0.476 \\
 & $W=2$ & 0.948 & 0.564 & 0.465 \\
 & $W=3$ & 0.947 & \textbf{0.600} & \textbf{0.481} \\
 & $W=4$ & 0.946 & 0.566 & 0.472 \\
 & $W=5$ & 0.949 & 0.592 & 0.453 \\
\midrule
\multicolumn{5}{l}{\textbf{Varying max\_factor (with $f=3$, $W=4$)}} \\
$\epsilon=0.0$ & $\text{max\_factor}=1.25$ & \textbf{1.000} & \textbf{1.000} & 0.948 \\
 & $\text{max\_factor}=1.50$ & \textbf{1.000} & \textbf{1.000} & \textbf{0.985} \\
 & $\text{max\_factor}=2.00$ & \textbf{1.000} & \textbf{1.000} & 0.955 \\
\midrule
$\epsilon=0.2$ & $\text{max\_factor}=1.25$ & \textbf{0.999} & 0.986 & 0.682 \\
 & $\text{max\_factor}=1.50$ & \textbf{0.999} & 0.988 & \textbf{0.700} \\
 & $\text{max\_factor}=2.00$ & \textbf{0.999} & \textbf{0.990} & 0.693 \\
\midrule
$\epsilon=0.4$ & $\text{max\_factor}=1.25$ & \textbf{0.954} & \textbf{0.606} & 0.464 \\
 & $\text{max\_factor}=1.50$ & 0.946 & 0.566 & \textbf{0.472} \\
 & $\text{max\_factor}=2.00$ & 0.950 & 0.558 & 0.464 \\
\bottomrule
\end{tabular}
\caption{Robustness of MirrorMark under deletion attacks with different CABS parameters, where $m=2$, $H=12$, and the number of tokens is 300.}
\label{tab:cabs-del-final}
\end{table*}

\begin{table*}[htbp]
\centering
\small
\begin{tabular}{l l c c c}
\toprule
 & Setting & AUC & TPR@1\%FPR & Bit Accuracy \\
\midrule
\multicolumn{5}{l}{\textbf{Varying $f$ (with $W=4$, max\_factor=1.5)}} \\
$\epsilon=0.0$ & $f=1$ & \textbf{1.000} & 0.998 & 0.939 \\
 & $f=2$ & \textbf{1.000} & 0.998 & 0.952 \\
 & $f=3$ & \textbf{1.000} & \textbf{1.000} & \textbf{0.985} \\
 & $f=4$ & \textbf{1.000} & 0.998 & 0.957 \\
\midrule
$\epsilon=0.2$ & $f=1$ & \textbf{0.998} & 0.970 & 0.684 \\
 & $f=2$ & 0.997 & \textbf{0.984} & 0.710 \\
 & $f=3$ & \textbf{0.998} & \textbf{0.984} & \textbf{0.723} \\
 & $f=4$ & 0.997 & \textbf{0.984} & 0.709 \\
\midrule
$\epsilon=0.4$ & $f=1$ & \textbf{0.960} & 0.646 & 0.492 \\
 & $f=2$ & 0.945 & 0.644 & \textbf{0.504} \\
 & $f=3$ & 0.948 & \textbf{0.648} & 0.499 \\
 & $f=4$ & 0.957 & 0.622 & 0.486 \\
\midrule
\multicolumn{5}{l}{\textbf{Varying $W$ (with $f=3$, max\_factor=1.5)}} \\
$\epsilon=0.0$ & $W=1$ & \textbf{1.000} & 0.998 & 0.945 \\
 & $W=2$ & \textbf{1.000} & 0.998 & 0.943 \\
 & $W=3$ & \textbf{1.000} & 0.998 & 0.946 \\
 & $W=4$ & \textbf{1.000} & \textbf{1.000} & \textbf{0.985} \\
 & $W=5$ & \textbf{1.000} & \textbf{1.000} & 0.944 \\
\midrule
$\epsilon=0.2$ & $W=1$ & 0.998 & 0.968 & 0.703 \\
 & $W=2$ & \textbf{0.999} & \textbf{0.990} & 0.709 \\
 & $W=3$ & 0.997 & 0.984 & 0.709 \\
 & $W=4$ & 0.998 & 0.984 & \textbf{0.723} \\
 & $W=5$ & 0.997 & 0.984 & 0.690 \\
\midrule
$\epsilon=0.4$ & $W=1$ & 0.942 & 0.638 & \textbf{0.503} \\
 & $W=2$ & 0.933 & 0.574 & 0.494 \\
 & $W=3$ & \textbf{0.951} & 0.632 & 0.490 \\
 & $W=4$ & 0.948 & \textbf{0.648} & 0.499 \\
 & $W=5$ & 0.947 & 0.590 & 0.489 \\
\midrule
\multicolumn{5}{l}{\textbf{Varying max\_factor (with $f=3$, $W=4$)}} \\
$\epsilon=0.0$ & $\text{max\_factor}=1.25$ & \textbf{1.000} & \textbf{1.000} & 0.948 \\
 & $\text{max\_factor}=1.50$ & \textbf{1.000} & \textbf{1.000} & \textbf{0.985} \\
 & $\text{max\_factor}=2.00$ & \textbf{1.000} & \textbf{1.000} & 0.955 \\
\midrule
$\epsilon=0.2$ & $\text{max\_factor}=1.25$ & \textbf{0.998} & 0.978 & 0.708 \\
 & $\text{max\_factor}=1.50$ & \textbf{0.998} & \textbf{0.984} & \textbf{0.723} \\
 & $\text{max\_factor}=2.00$ & \textbf{0.998} & 0.982 & 0.709 \\
\midrule
$\epsilon=0.4$ & $\text{max\_factor}=1.25$ & 0.949 & 0.610 & 0.486 \\
 & $\text{max\_factor}=1.50$ & 0.948 & \textbf{0.648} & \textbf{0.499} \\
 & $\text{max\_factor}=2.00$ & \textbf{0.951} & 0.624 & 0.482 \\
\bottomrule
\end{tabular}
\caption{Robustness of MirrorMark under substitution attacks with different CABS parameters, where $m=2$, $H=12$, and the number of tokens is 300.}
\label{tab:cabs-sub-final}
\end{table*}

\subsubsection{The Effect of Position Allocation Schedulers on Watermarking Schemes}
\label{cabs_ablation_scheduler}

To disentangle the contribution of mod-1 mirroring from that of position allocation, we conduct an ablation that systematically combines different position schedulers with different watermarking schemes.
In particular, we incorporate the position schedulers used in MPAC and RSBH, which we denote as \emph{NaiveHash} and \emph{DPHash}, respectively.
NaiveHash (MPAC, Section~3.2) seeds a PRF using the previous $h$ tokens to randomly select a position, whereas DPHash (RSBH, Section~4.2) constructs a balanced token-to-segment mapping through a secret-key shuffle followed by a dynamic programming procedure.

Because the DPHash table released in the official implementation of RSBH is constructed with $h=1$, we evaluate performance under this setting in Fig.~\ref{fig:allocation2}.
In addition, since our main experiments use $h=4$ by default unless otherwise noted, we also report results under $h=4$ in Fig.~\ref{fig:allocation5}.
For both settings, we report the Gini coefficient\footnote{\url{https://en.wikipedia.org/wiki/Gini_coefficient}} in Fig.~\ref{fig:gini}, which quantifies how balanced the token allocation is across positions, where lower values indicate more balanced allocation.

Across all configurations, CABS consistently achieves significantly lower Gini scores, approaching zero, indicating near-uniform token allocation across positions.
This balanced allocation leads to substantial improvements in detectability for MirrorMark under both Tour-Bayes and Gumbel-max, with CABS outperforming NaiveHash and DPHash across all detection metrics.
In contrast, for MPAC, the AUC and TPR@1\%FPR of CABS are comparable to those obtained with NaiveHash and DPHash, while the bit accuracy of CABS is only slightly higher.
This behavior is expected, as balanced allocation primarily improves the reliability of message decoding.

The difference between MirrorMark and MPAC stems from how positional evidence is aggregated.
MirrorMark aggregates evidence from all positions, making it highly sensitive to positional imbalance.
For example, consider a text of 100 tokens distributed across four positions as 85--5--5--5.
In watermark text, the dominant position provides a strong signal for the correct message, whereas the remaining lightly populated positions contribute mostly noise.
When combined in the final score, this noisy evidence dilutes the strong signal, making watermark and non-watermark score distributions harder to separate.

In contrast, MPAC is robust under the same allocation.
The dominant position overwhelmingly votes for the correct message in watermark text, while non-watermark text remains approximately balanced across message candidates.
Since MPAC retains only the maximum vote per position and aggregates these maxima, lightly populated positions contribute little and do not introduce harmful noise.
As a result, the detectability of MPAC remains stable even under highly uneven token allocation.

\begin{figure*}[htbp]
  \centering
\includegraphics[width=0.7\textwidth]{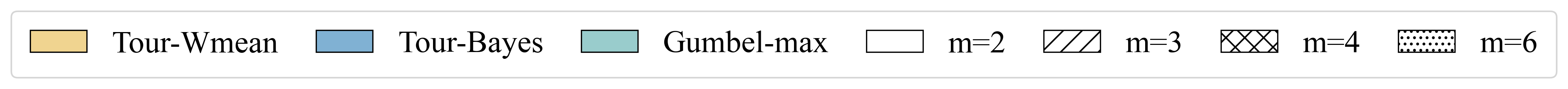}

  \vspace{-0.3em}

  \begin{minipage}[t]{0.32\textwidth}
    \centering
    \includegraphics[width=\linewidth]{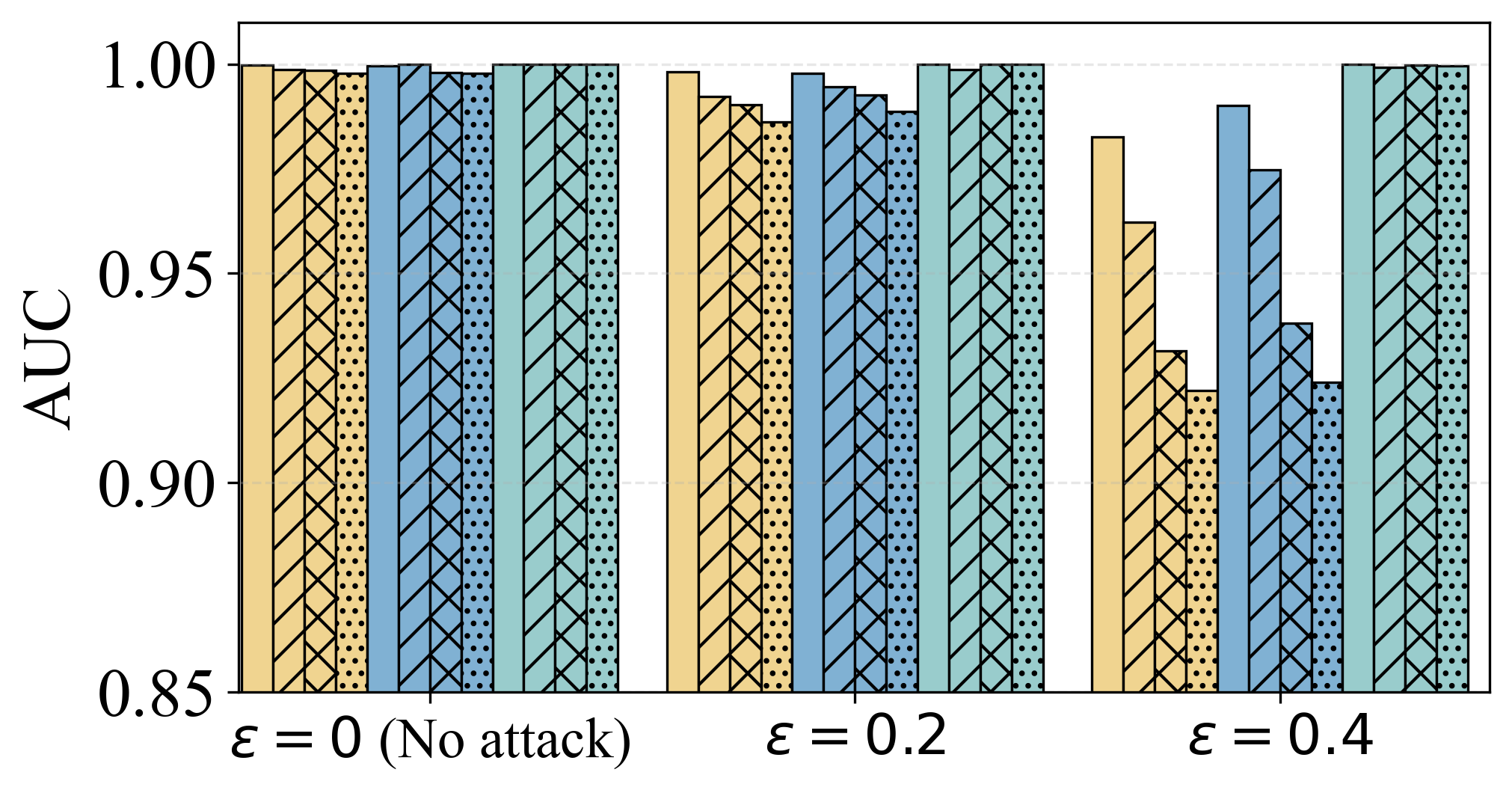}
\end{minipage}\hfill
  \begin{minipage}[t]{0.32\textwidth}
    \centering
    \includegraphics[width=\linewidth]{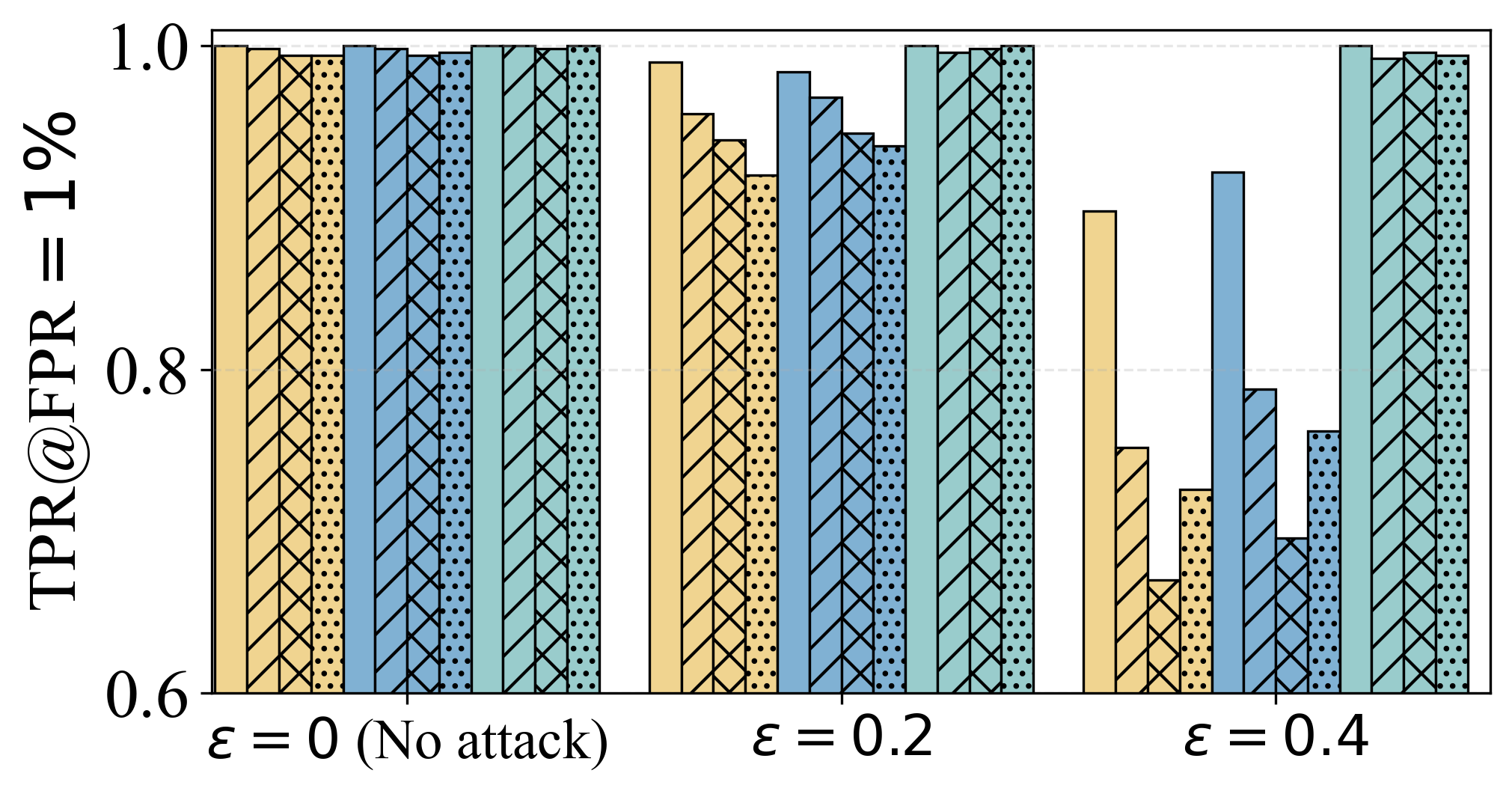}
  \end{minipage}\hfill
  \begin{minipage}[t]{0.32\textwidth}
    \centering
    \includegraphics[width=\linewidth]{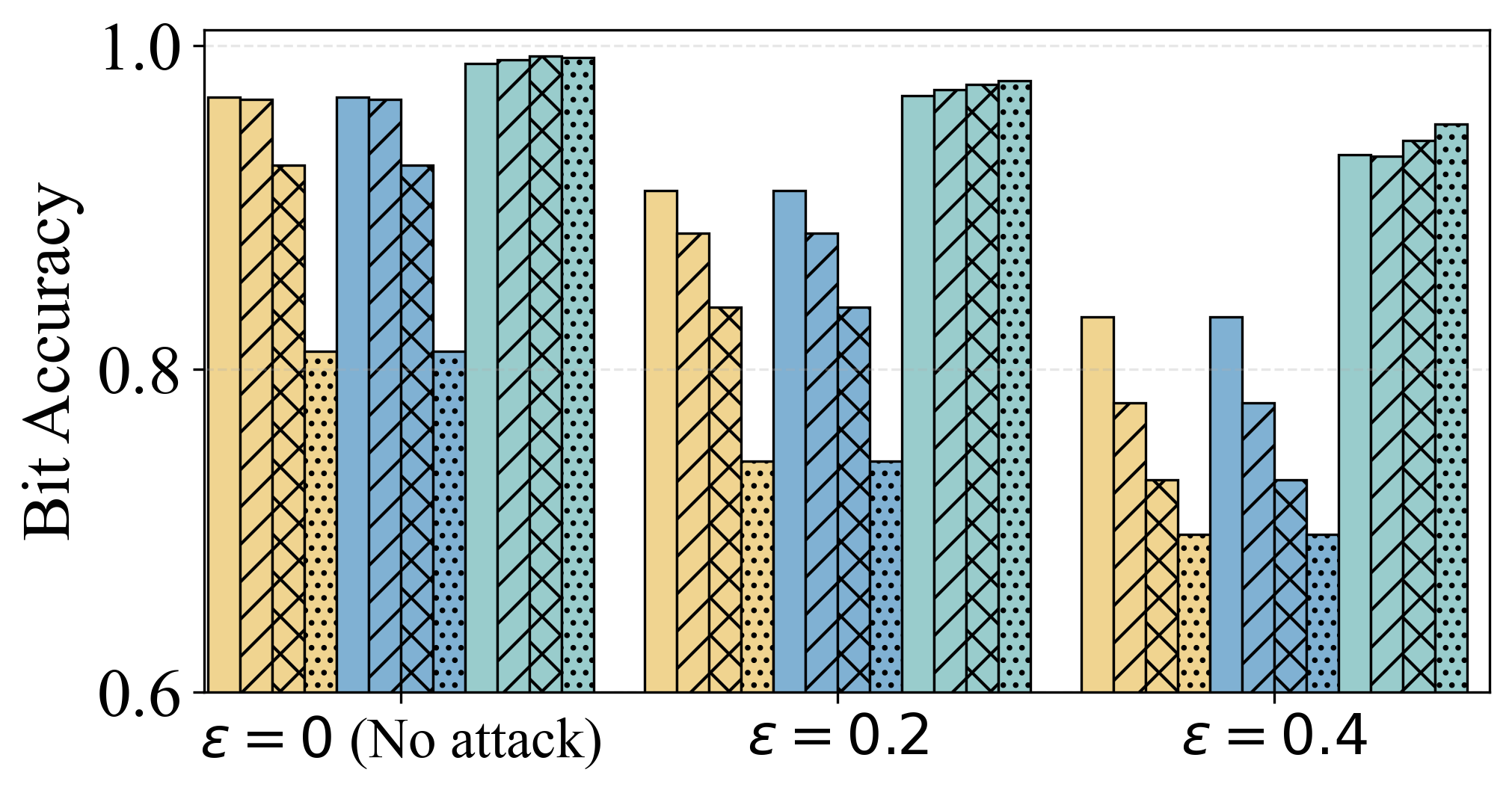}
  \end{minipage}
\caption{Detectability of MirrorMark against copy-paste attacks with 36 bits embedded in 400 tokens, where the edit fraction $\epsilon\in\{0, 0.2, 0.4\}$. To embed 36 bits, different $m$ applies for various number of positions $H$, i.e., $m\in\{2, 3, 4, 6\}$ is respectively corresponding to $H\in\{18, 12, 9, 6\}$.} \label{fig:cp_m}
\end{figure*}

\begin{figure*}[htbp]
  \centering
\includegraphics[width=0.7\textwidth]{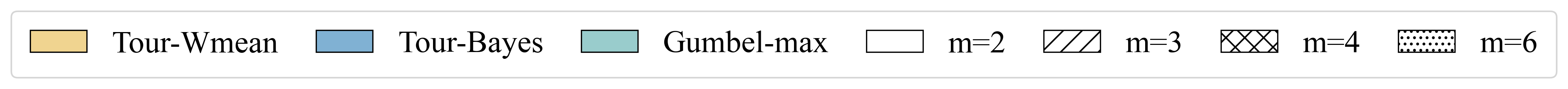}

  \vspace{-0.3em}

  \begin{minipage}[t]{0.32\textwidth}
    \centering
    \includegraphics[width=\linewidth]{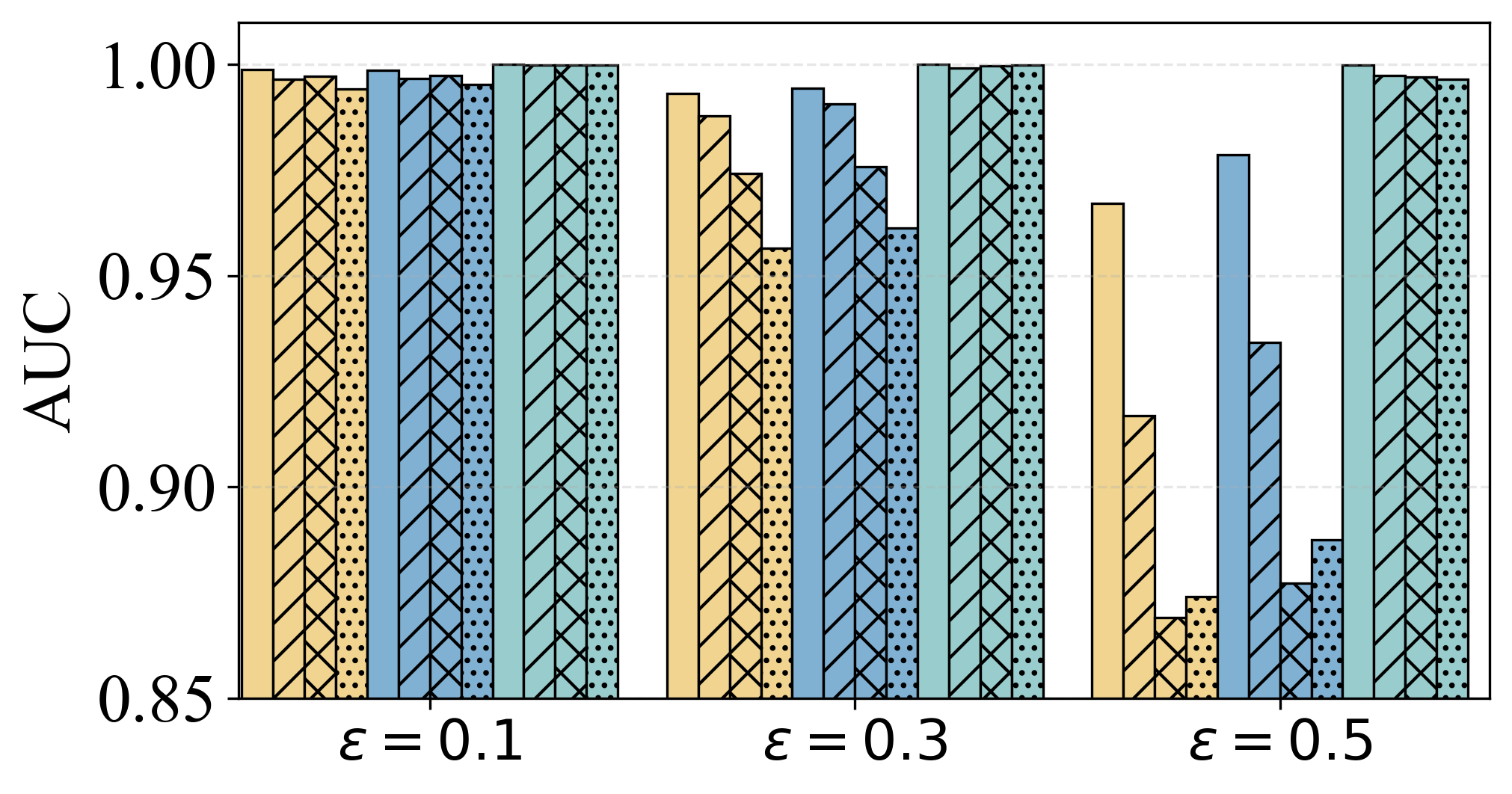}
\end{minipage}\hfill
  \begin{minipage}[t]{0.32\textwidth}
    \centering
    \includegraphics[width=\linewidth]{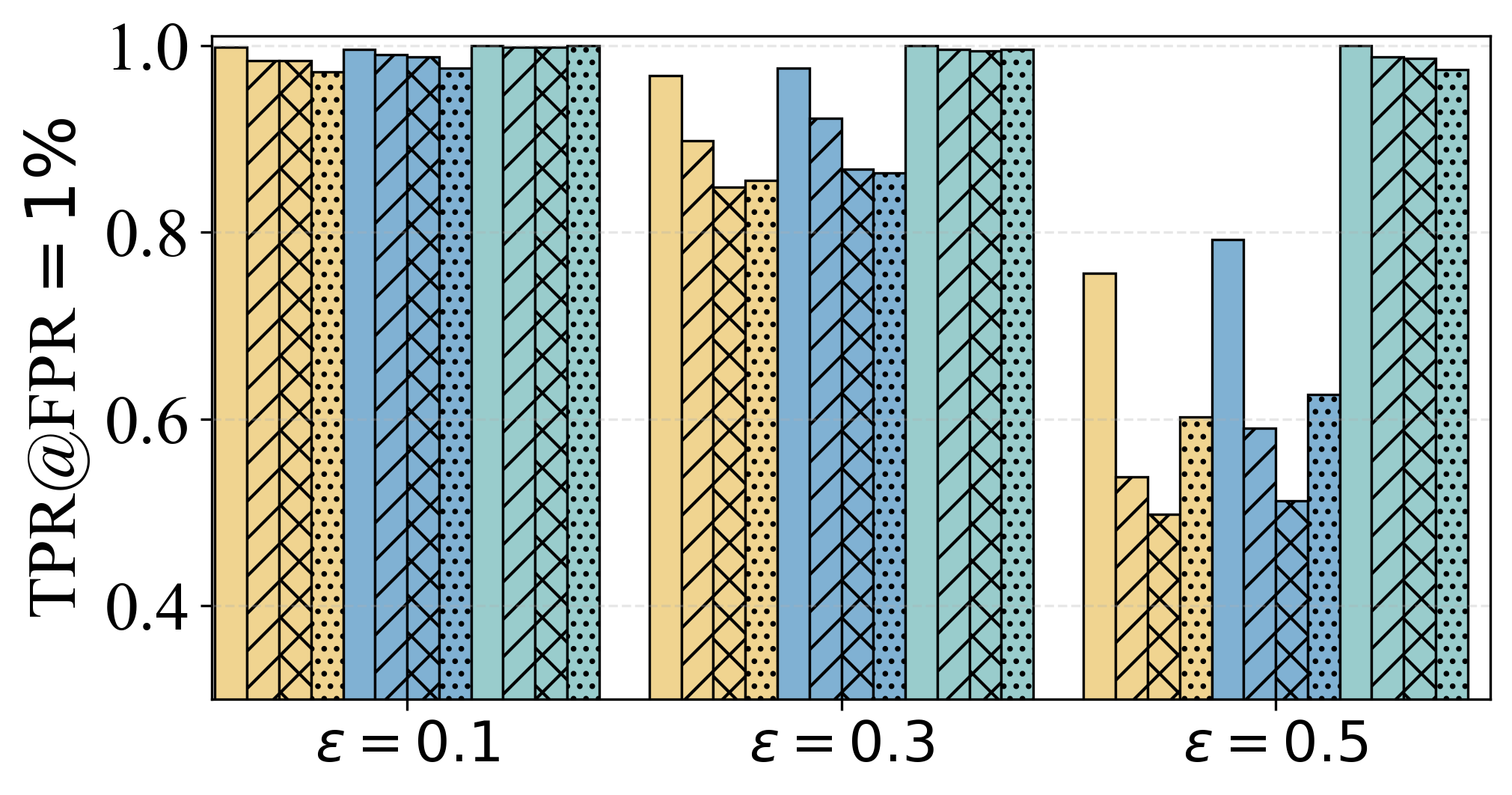}
  \end{minipage}\hfill
  \begin{minipage}[t]{0.32\textwidth}
    \centering
    \includegraphics[width=\linewidth]{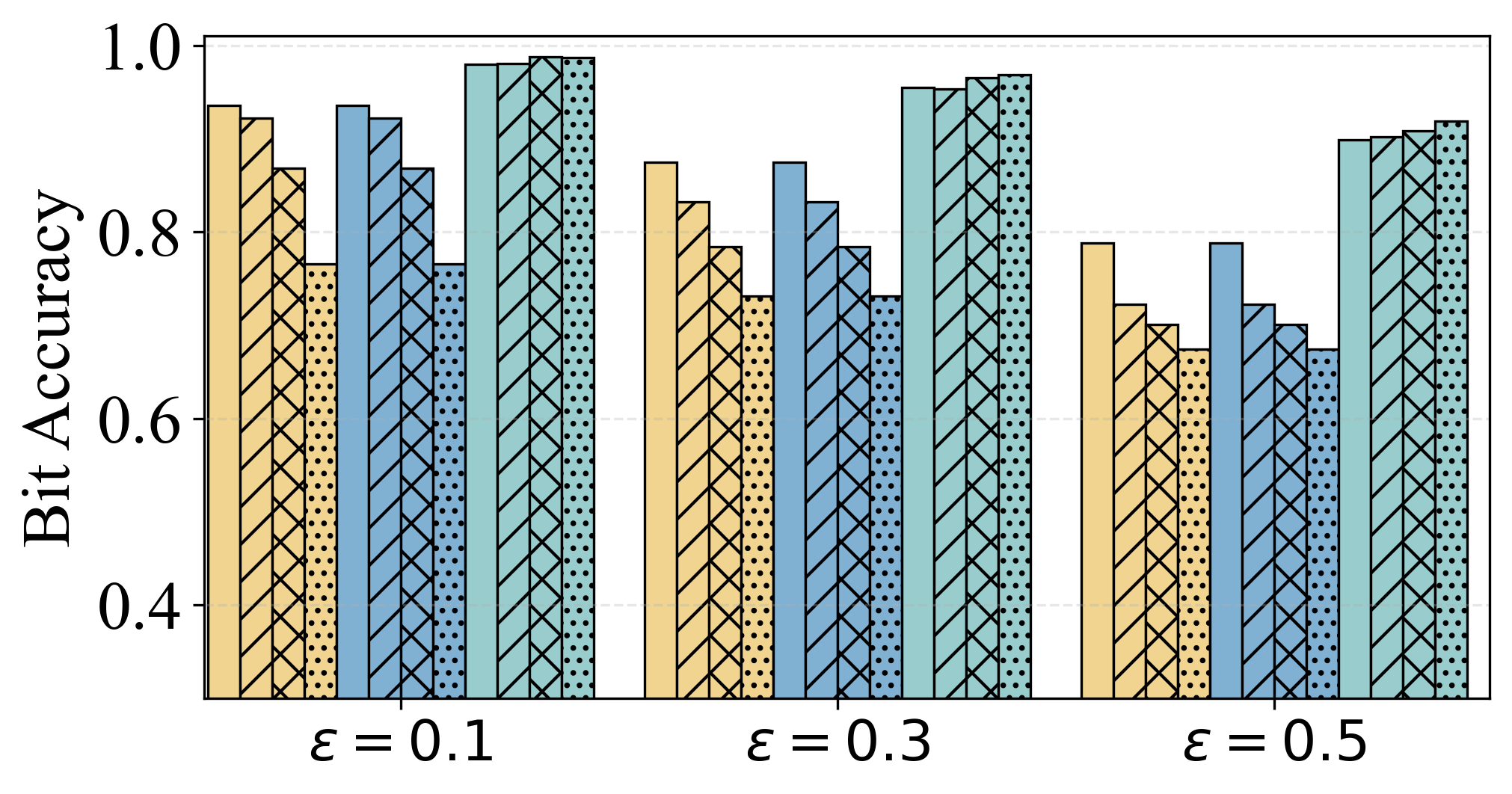}
  \end{minipage}
\caption{Detectability of MirrorMark against copy-paste attacks with 36 bits embedded in 400 tokens, where the edit fraction $\epsilon\in\{0.1, 0.3, 0.5\}$. To embed 36 bits, different $m$ applies for various number of positions, i.e., the number of positions for $m\in\{2, 3, 4, 6\}$ is respectively, 18, 12, 9, 6.} \label{fig:cp_m135}
\end{figure*}

\begin{figure*}[htbp]
  \centering
  \captionsetup[subfigure]{justification=centering, font=small}

  \begin{subfigure}[t]{0.48\textwidth}
    \centering
    \includegraphics[width=0.8\linewidth]{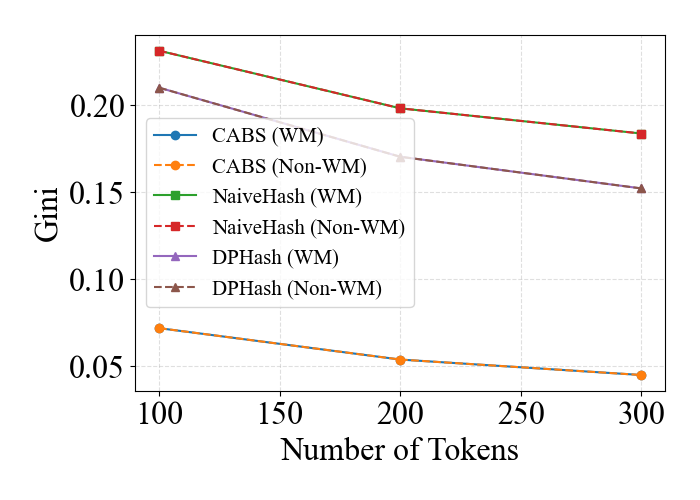}
    \caption{$h$=1}
  \end{subfigure}
  \hfill
  \begin{subfigure}[t]{0.48\textwidth}
    \centering
    \includegraphics[width=0.8\linewidth]{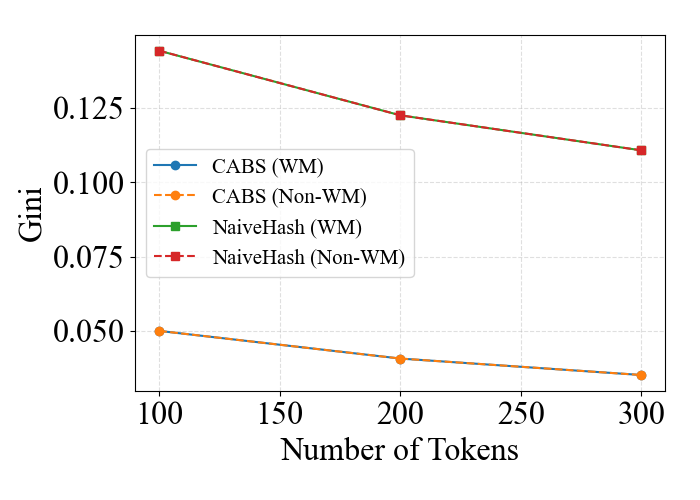}
    \caption{$h$=4}
  \end{subfigure}

  \caption{Comparison of token-allocation balance between different position scheduler. The setting is $m=2$, $H=12$, and the number of tokens is 300. The Gini coefficient is significantly lower (more balanced allocation) when using CABS, showing that CABS reduces position-allocation skew and improves uniformity.}
  \label{fig:gini}
\end{figure*}

\begin{figure*}[htbp]
  \centering

\includegraphics[width=0.7\textwidth]{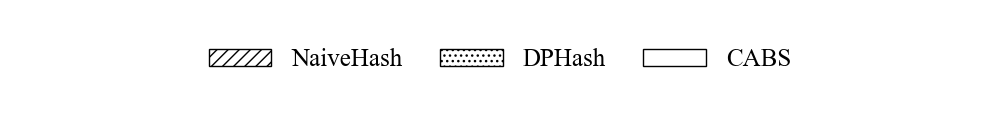}

  \begin{minipage}[t]{0.32\textwidth}
    \centering
    \subcaptionbox{$h$=1, 100 tokens\label{fig:r1c1}}{
      \includegraphics[width=\linewidth]{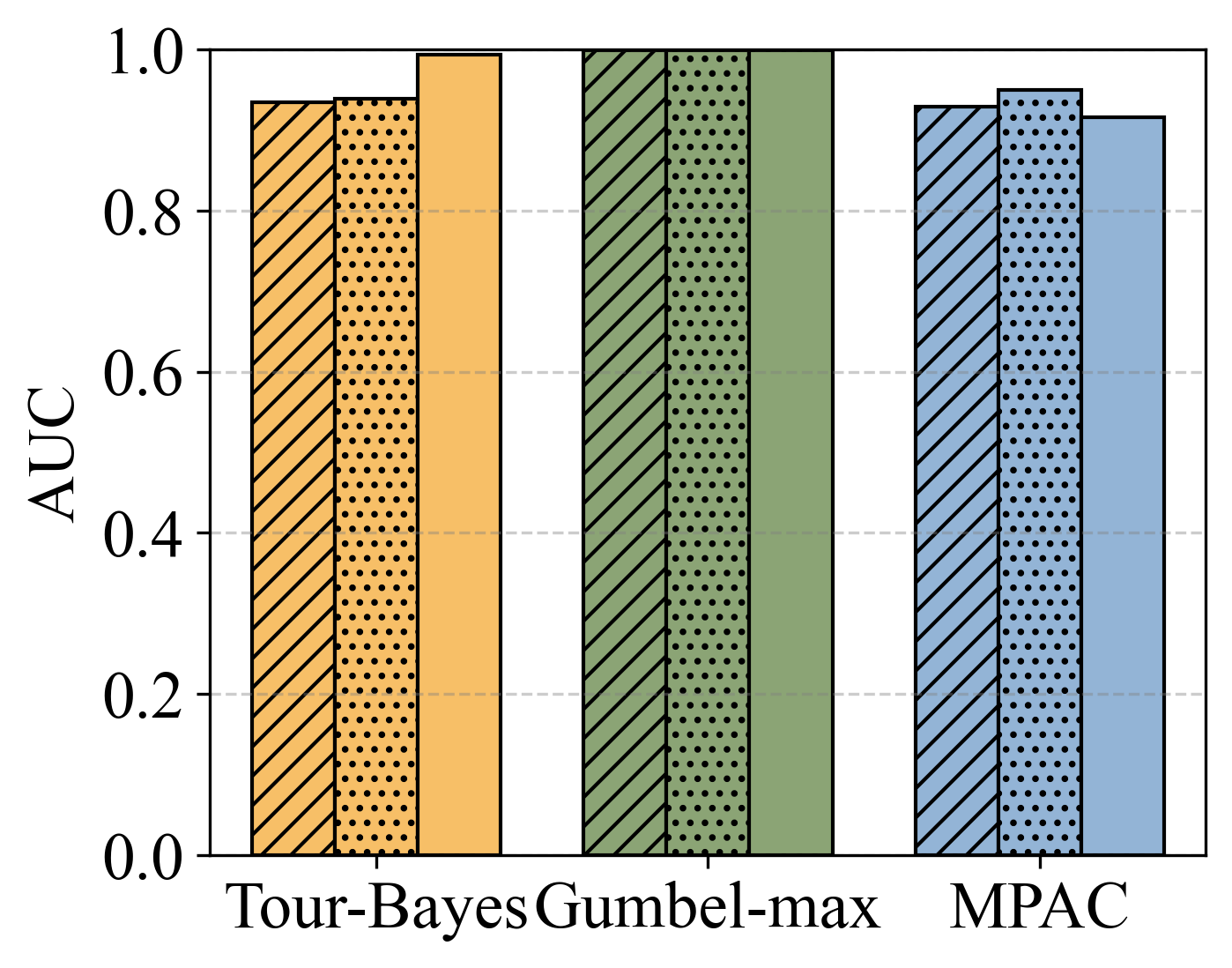}
    }
  \end{minipage}\hfill
  \begin{minipage}[t]{0.32\textwidth}
    \centering
    \subcaptionbox{$h$=1, 100 tokens}{
      \includegraphics[width=\linewidth]{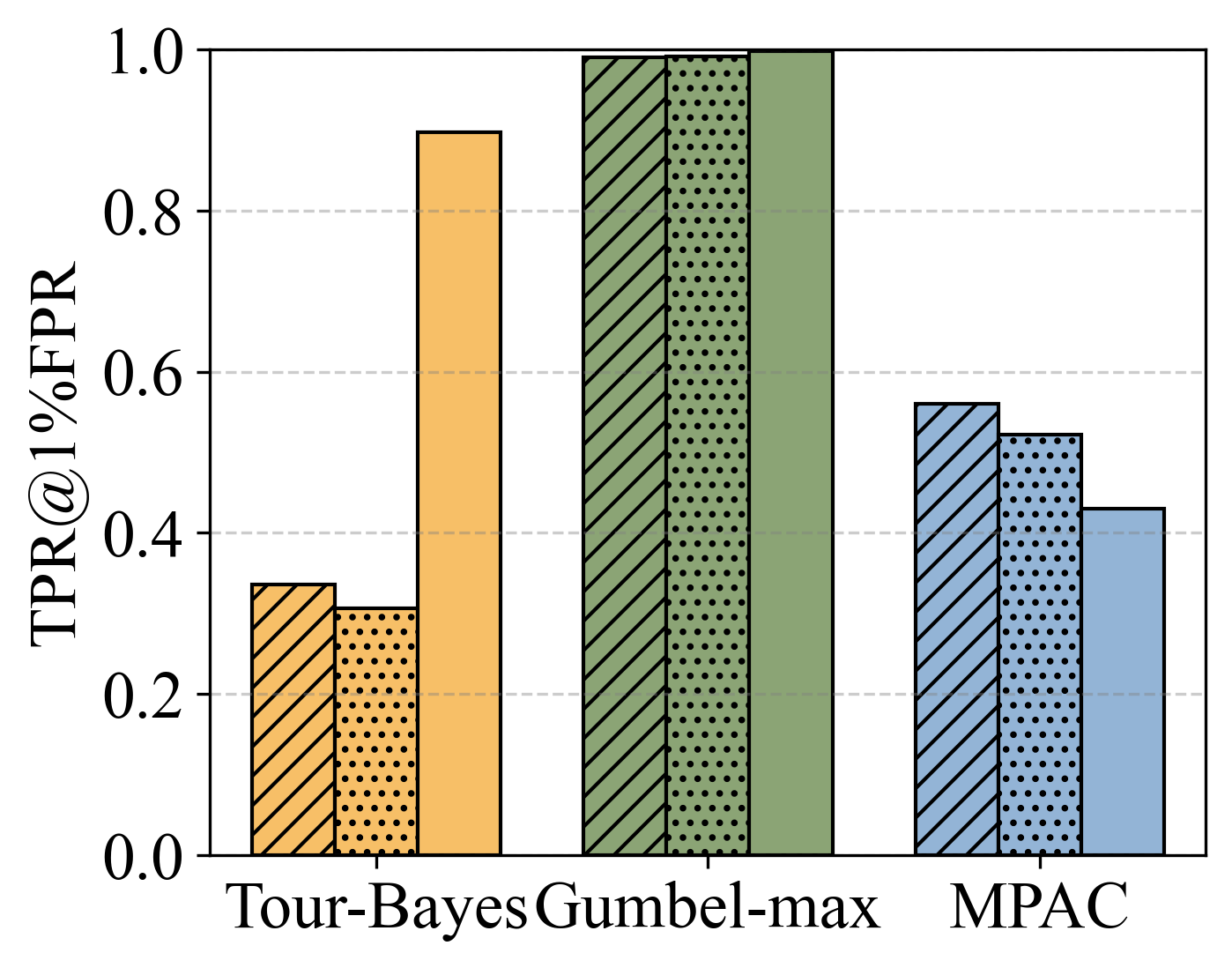}
    }
  \end{minipage}\hfill
  \begin{minipage}[t]{0.32\textwidth}
    \centering
    \subcaptionbox{$h$=1, 100 tokens}{
      \includegraphics[width=\linewidth]{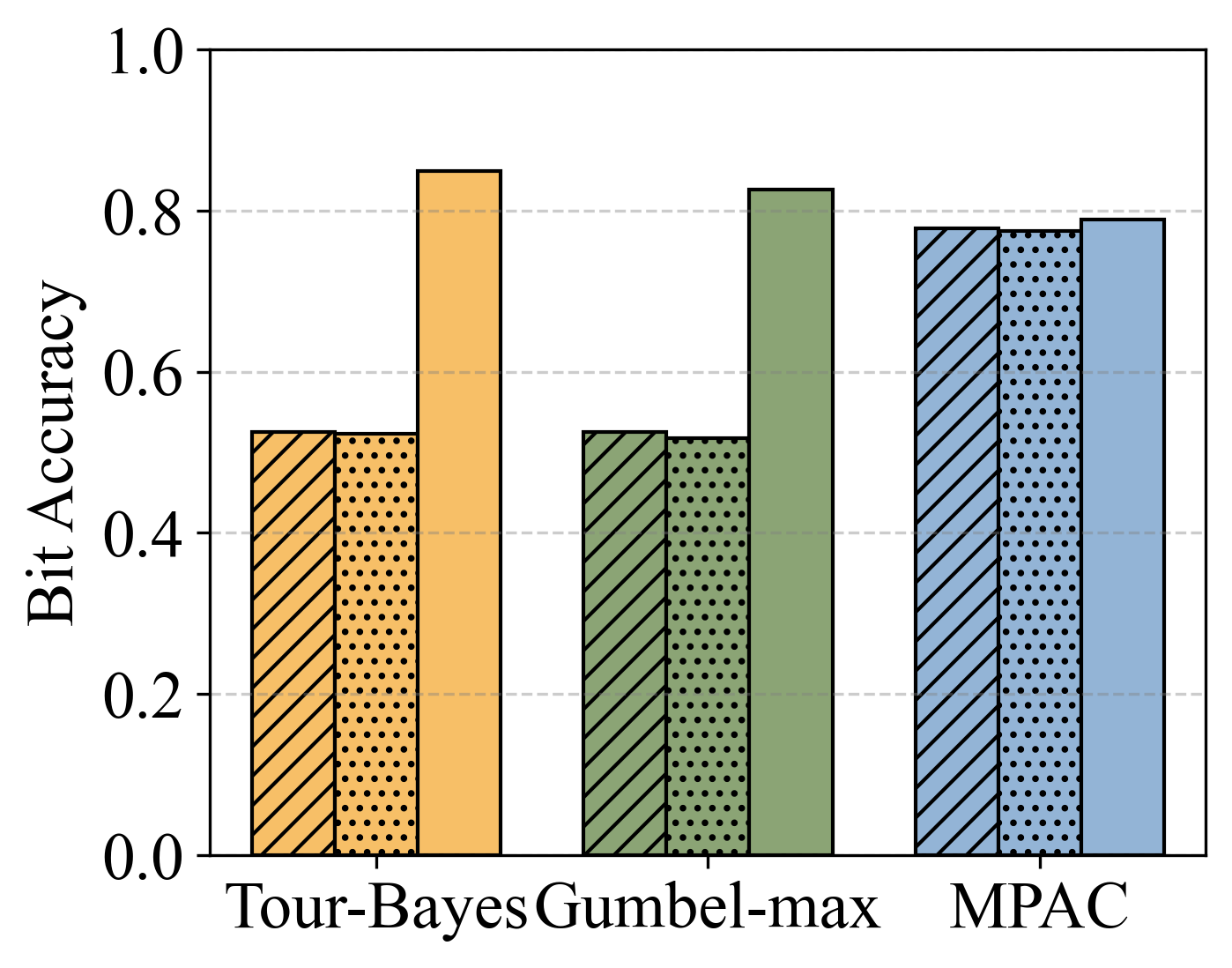}
    }
  \end{minipage}

  \begin{minipage}[t]{0.32\textwidth}
    \centering
    \subcaptionbox{$h$=1, 200 tokens}{
      \includegraphics[width=\linewidth]{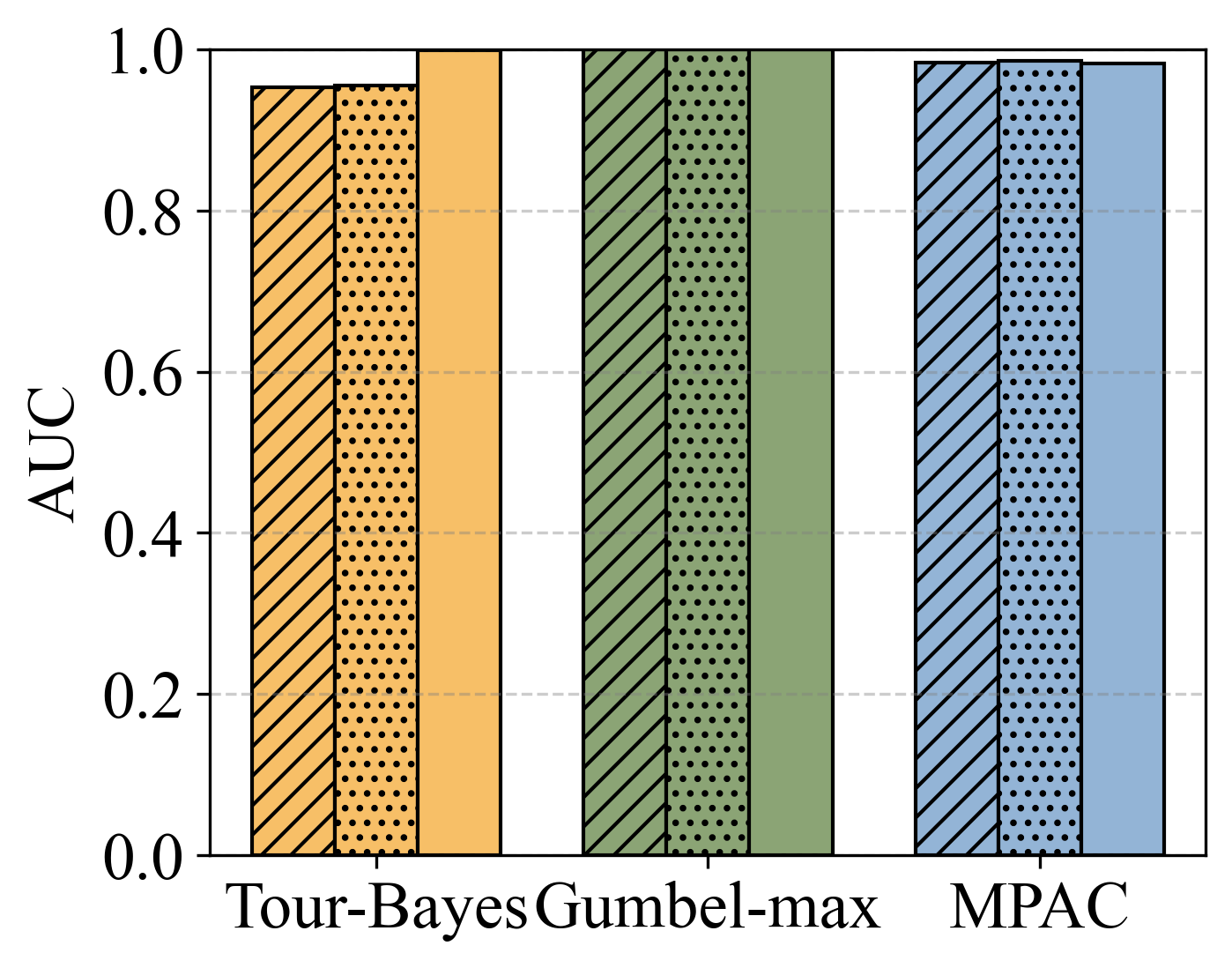}
    }
  \end{minipage}\hfill
  \begin{minipage}[t]{0.32\textwidth}
    \centering
    \subcaptionbox{ $h$=1, 200 tokens}{
      \includegraphics[width=\linewidth]{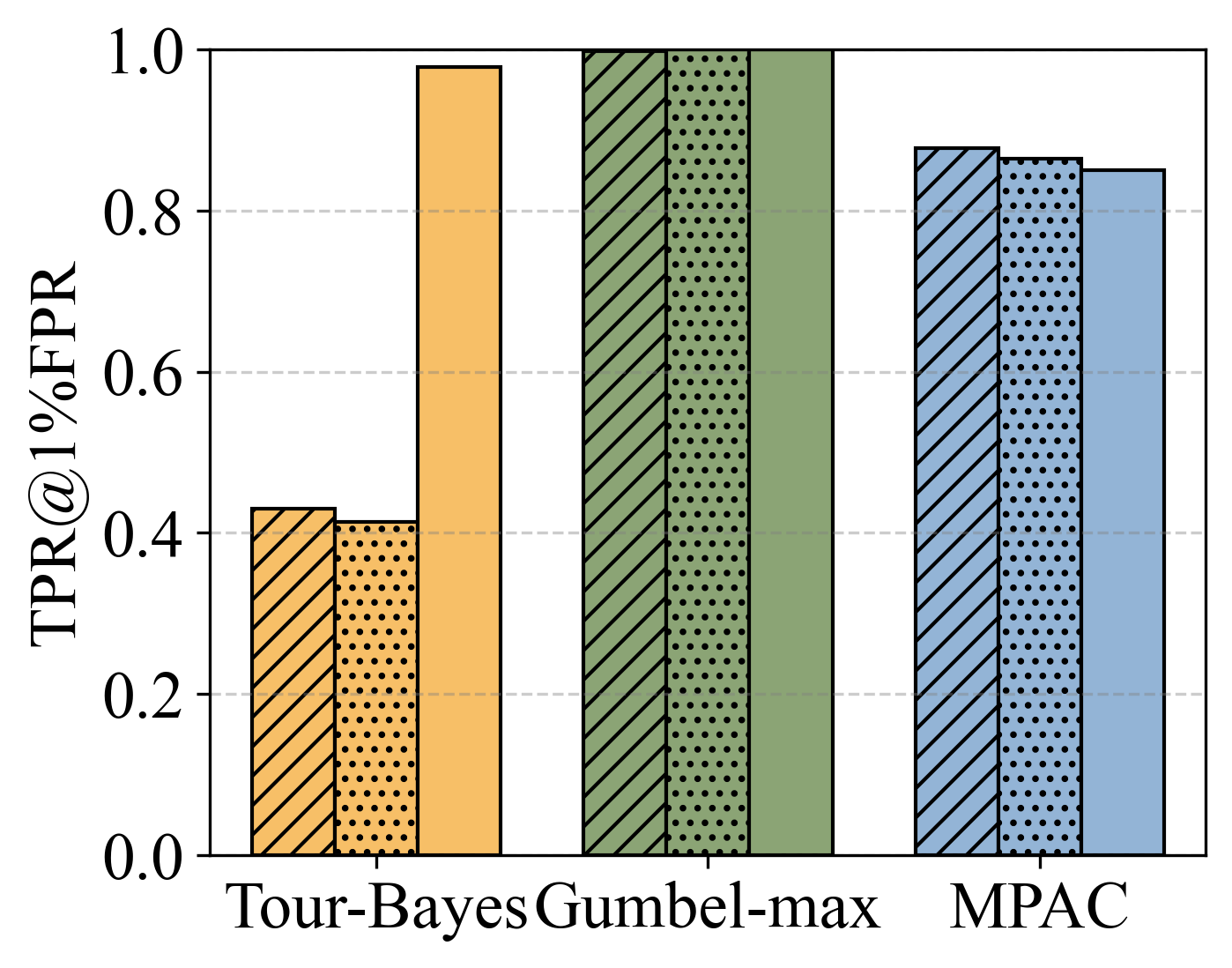}
    }
  \end{minipage}\hfill
  \begin{minipage}[t]{0.32\textwidth}
    \centering
    \subcaptionbox{$h$=1, 200 tokens}{
      \includegraphics[width=\linewidth]{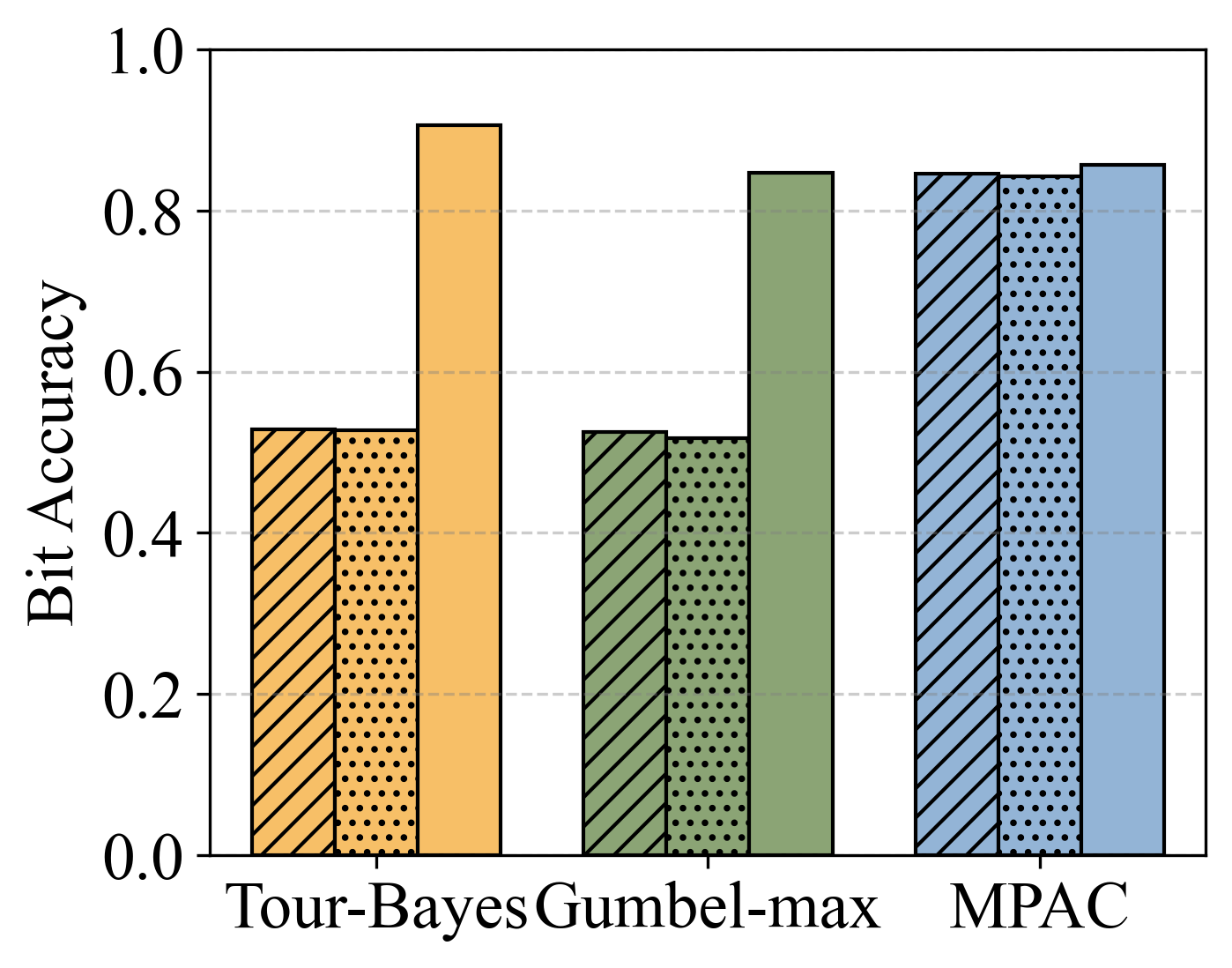}
    }
  \end{minipage}

\begin{minipage}[t]{0.32\textwidth}
    \centering
    \subcaptionbox{ $h$=1, 300 tokens}{
      \includegraphics[width=\linewidth]{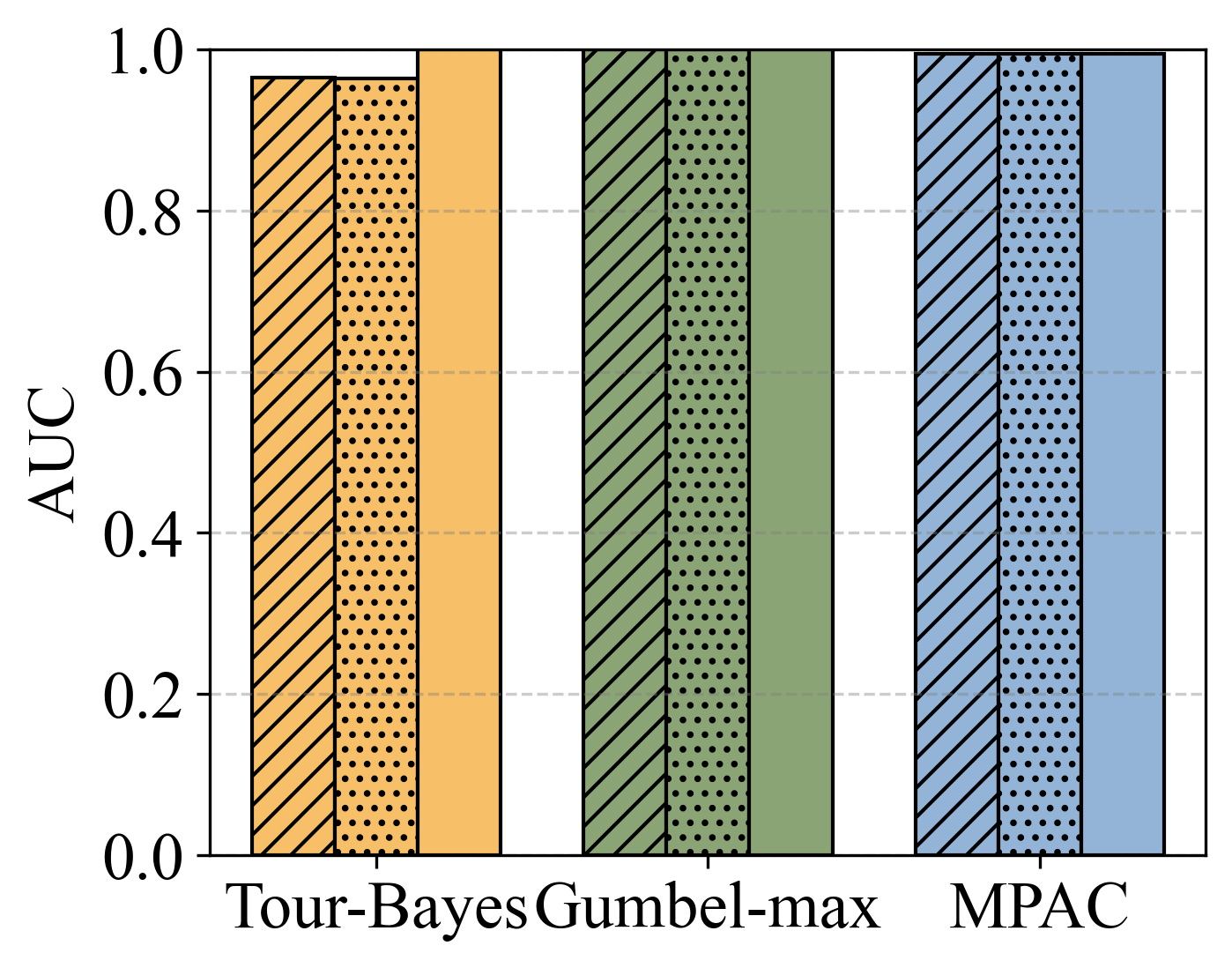}
    }
  \end{minipage}\hfill
  \begin{minipage}[t]{0.32\textwidth}
    \centering
    \subcaptionbox{$h$=1, 300 tokens}{
      \includegraphics[width=\linewidth]{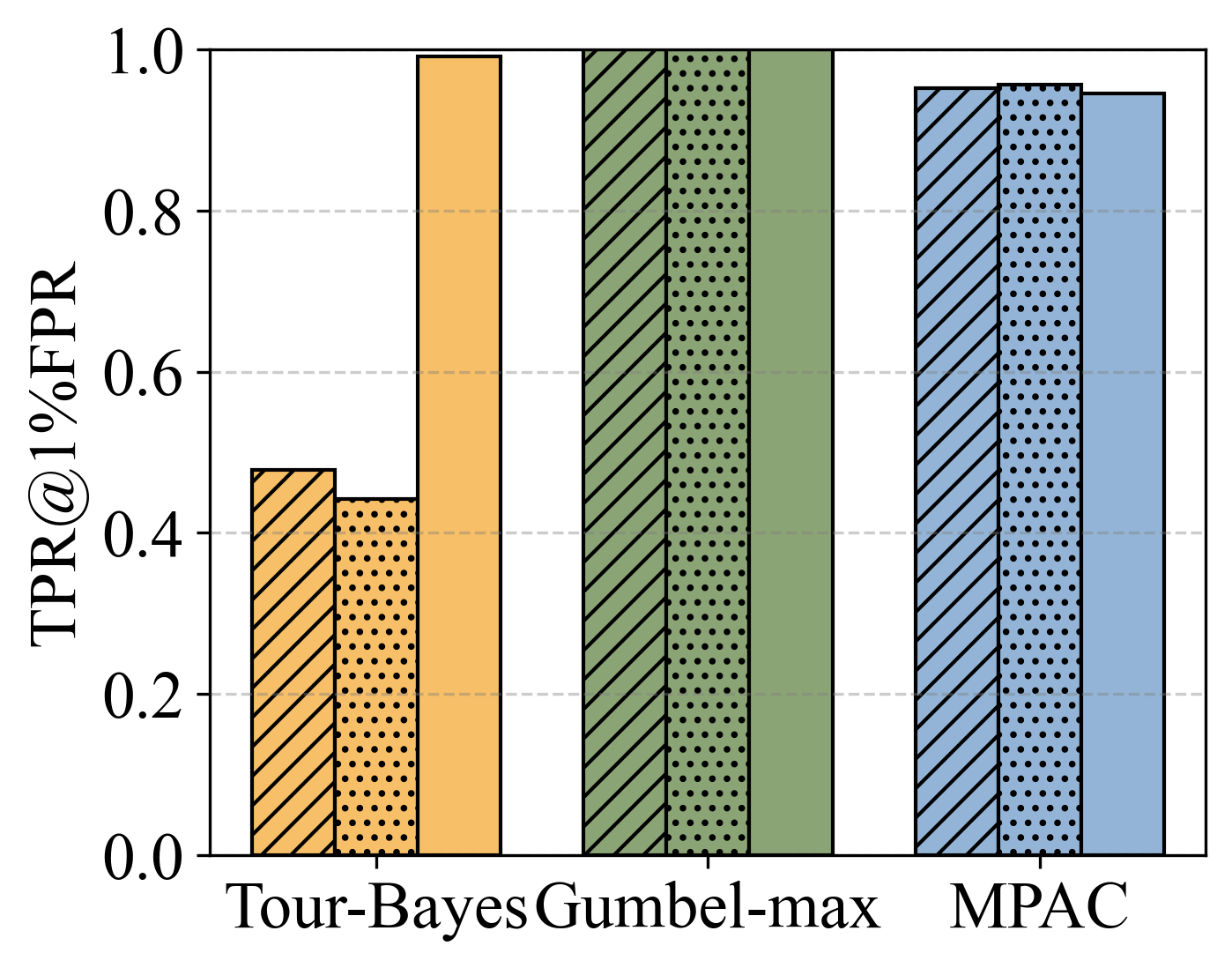}
    }
  \end{minipage}\hfill
  \begin{minipage}[t]{0.32\textwidth}
    \centering
    \subcaptionbox{ $h$=1, 300 tokens}{
      \includegraphics[width=\linewidth]{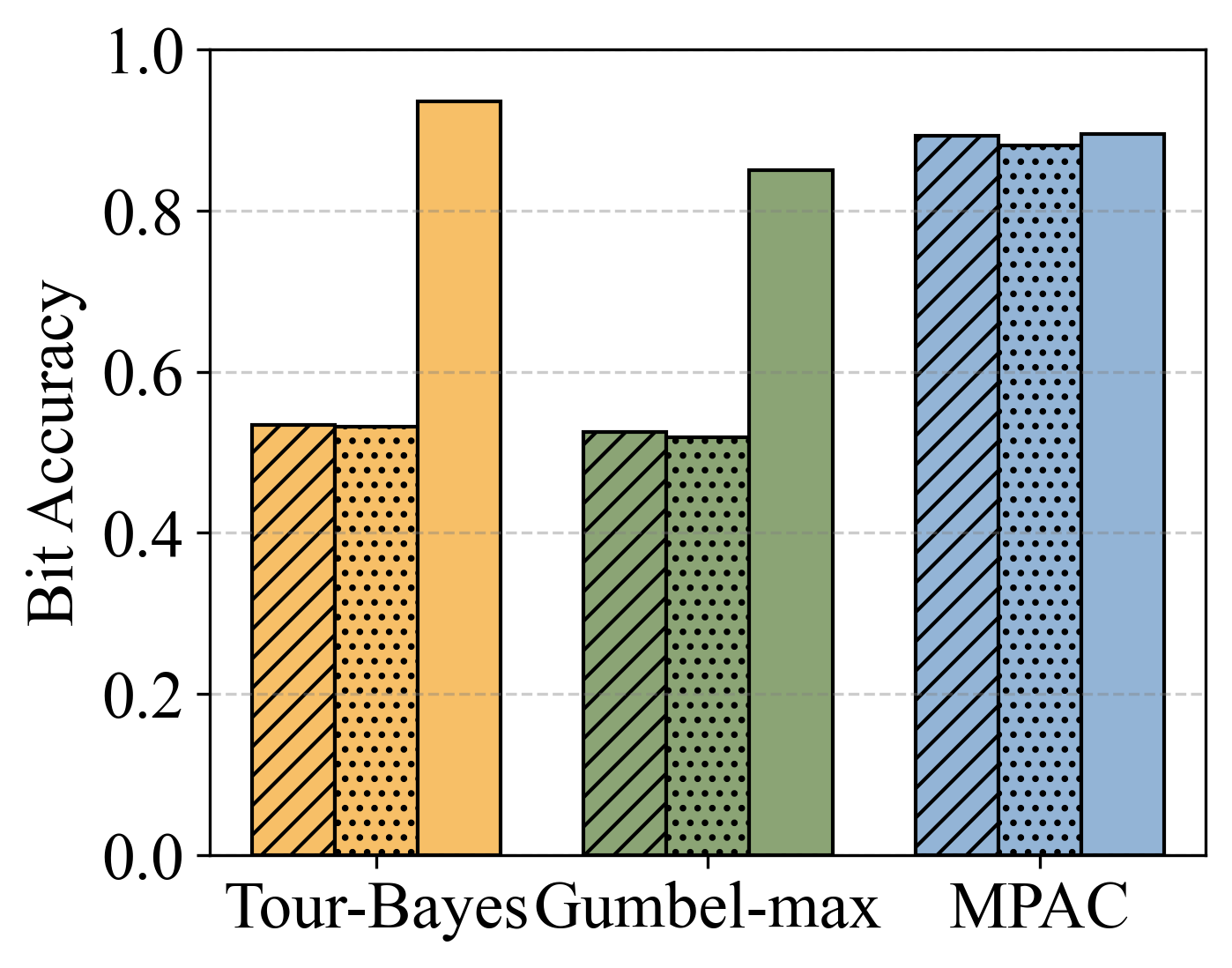}
    }
  \end{minipage}
  \caption{Detectability of MirrorMark with length of n-gram $h$=1. The setting is $m=2$ and $H=12$.}
  \label{fig:allocation2}
\end{figure*}

\begin{figure*}[htbp]
  \centering

\includegraphics[width=0.7\textwidth]{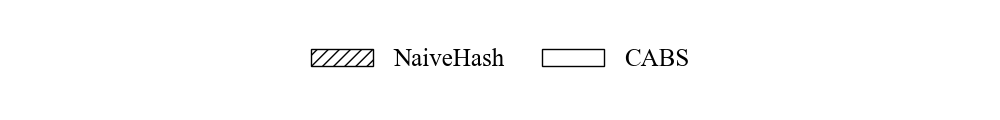}

  \begin{minipage}[t]{0.32\textwidth}
    \centering
    \subcaptionbox{$h$=4, 200 tokens}{
      \includegraphics[width=\linewidth]{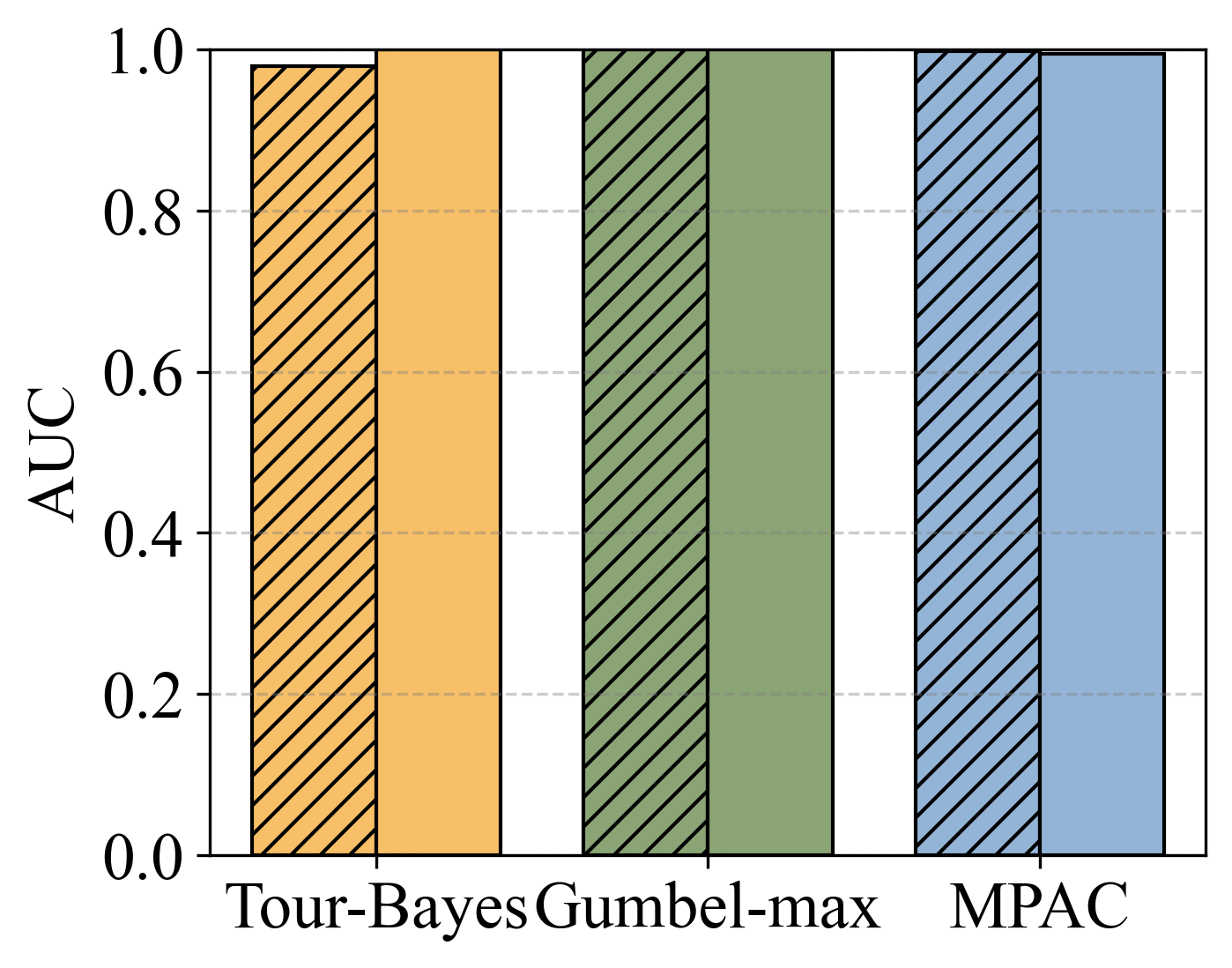}
    }
  \end{minipage}\hfill
  \begin{minipage}[t]{0.32\textwidth}
    \centering
    \subcaptionbox{$h$=4, 200 tokens}{
      \includegraphics[width=\linewidth]{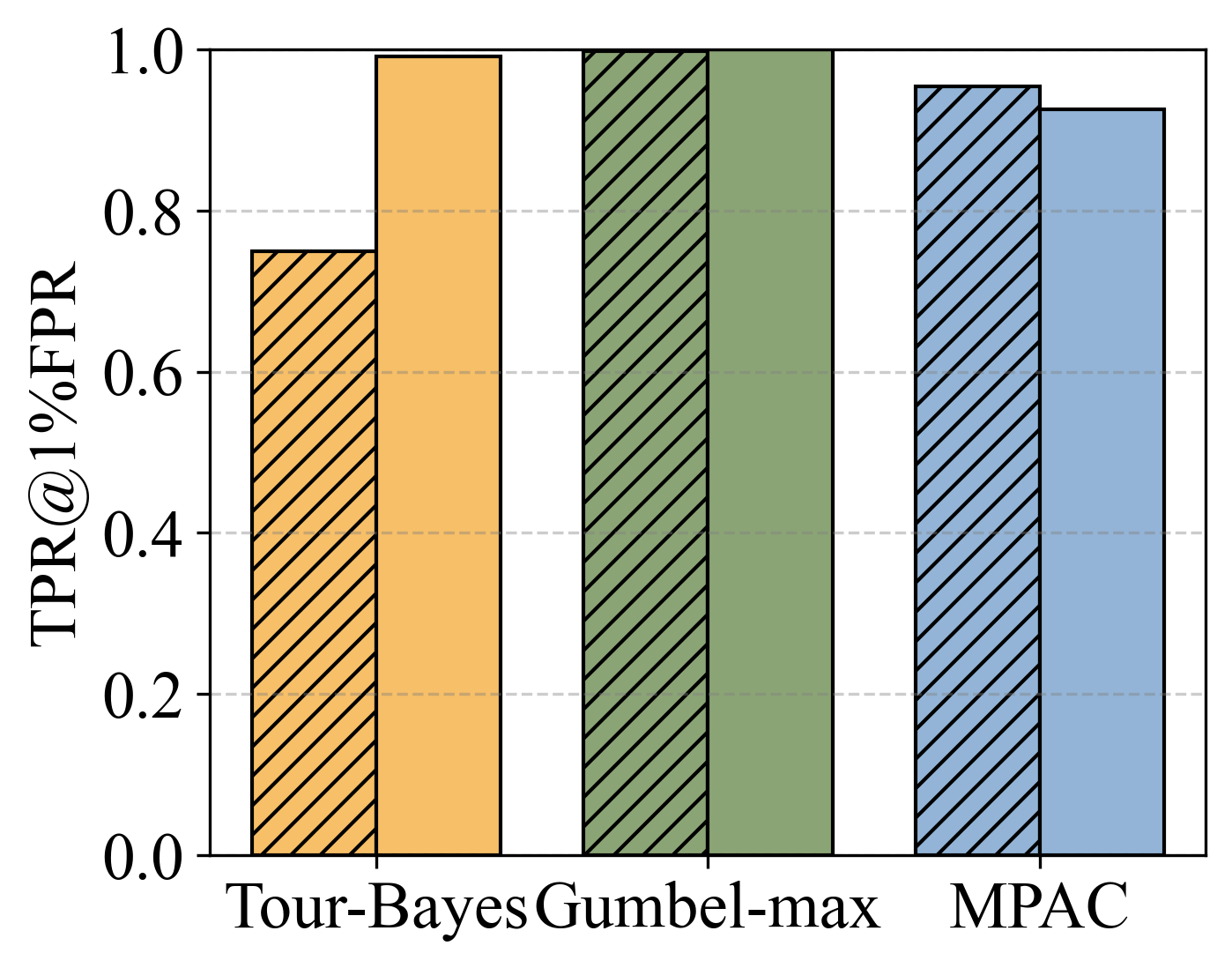}
    }
  \end{minipage}\hfill
  \begin{minipage}[t]{0.32\textwidth}
    \centering
    \subcaptionbox{$h$=4, 200 tokens}{
      \includegraphics[width=\linewidth]{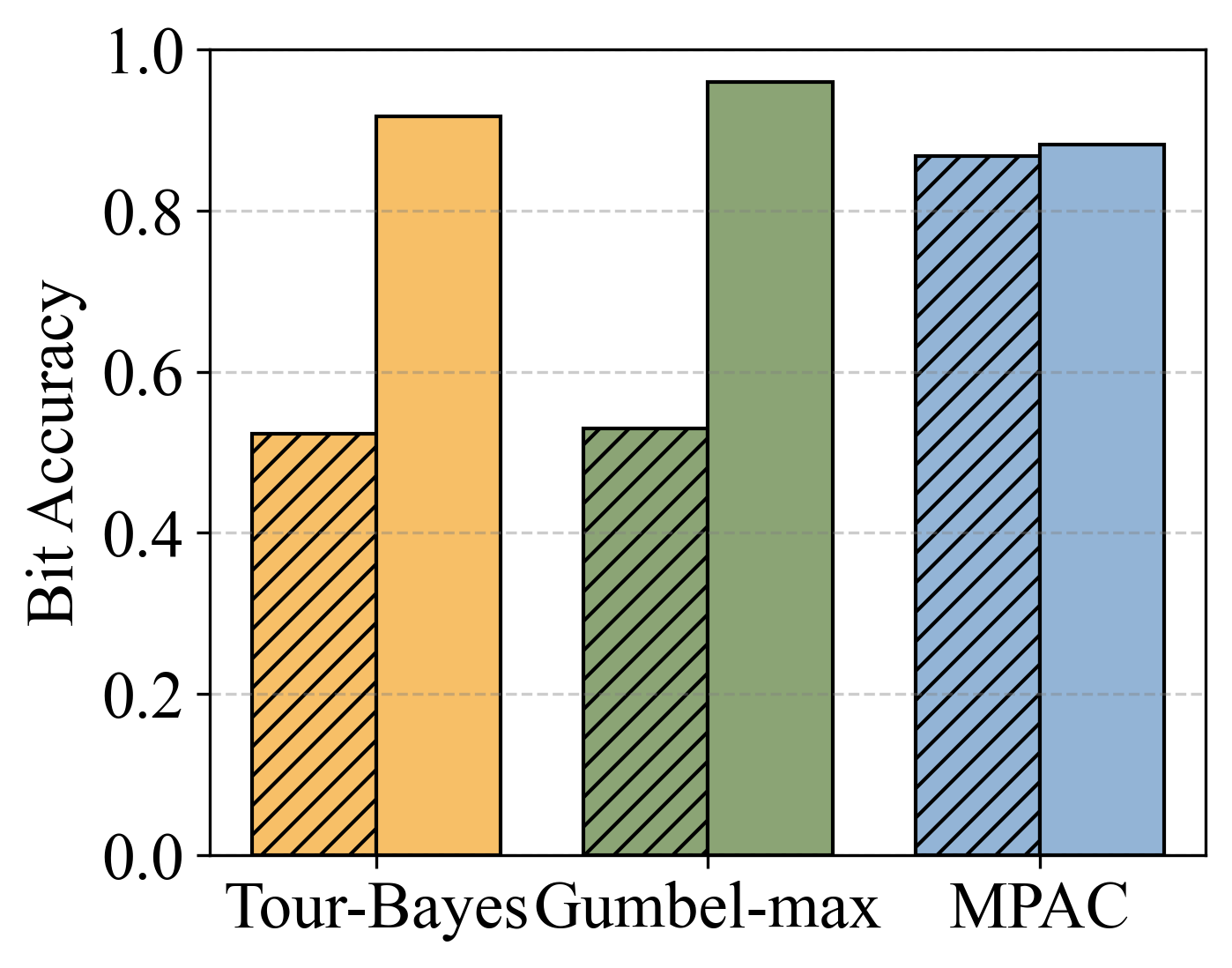}
    }
  \end{minipage}

  \begin{minipage}[t]{0.32\textwidth}
    \centering
    \subcaptionbox{$h$=4, 300 tokens}{
      \includegraphics[width=\linewidth]{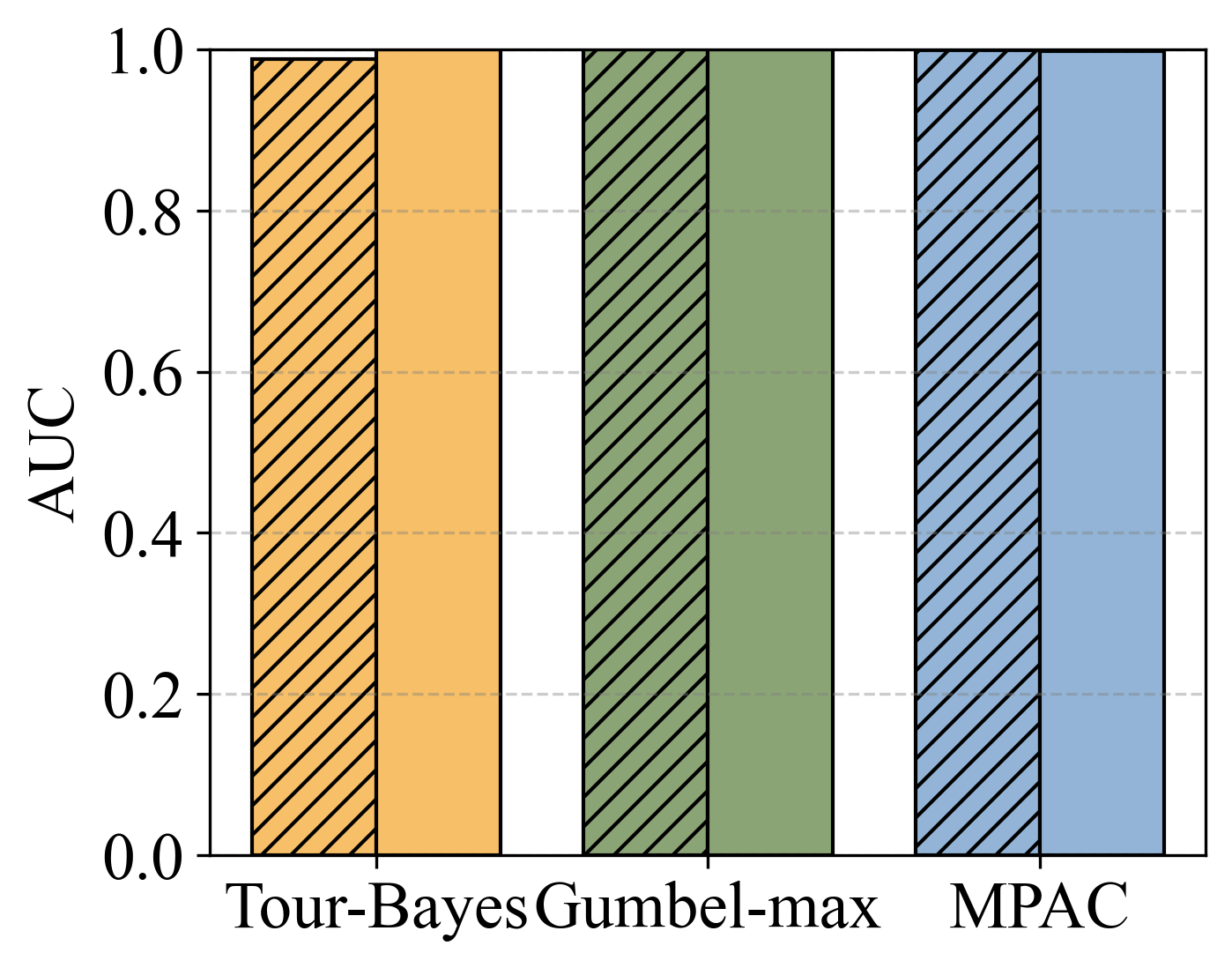}
    }
  \end{minipage}\hfill
  \begin{minipage}[t]{0.32\textwidth}
    \centering
    \subcaptionbox{ $h$=4, 300 tokens}{
      \includegraphics[width=\linewidth]{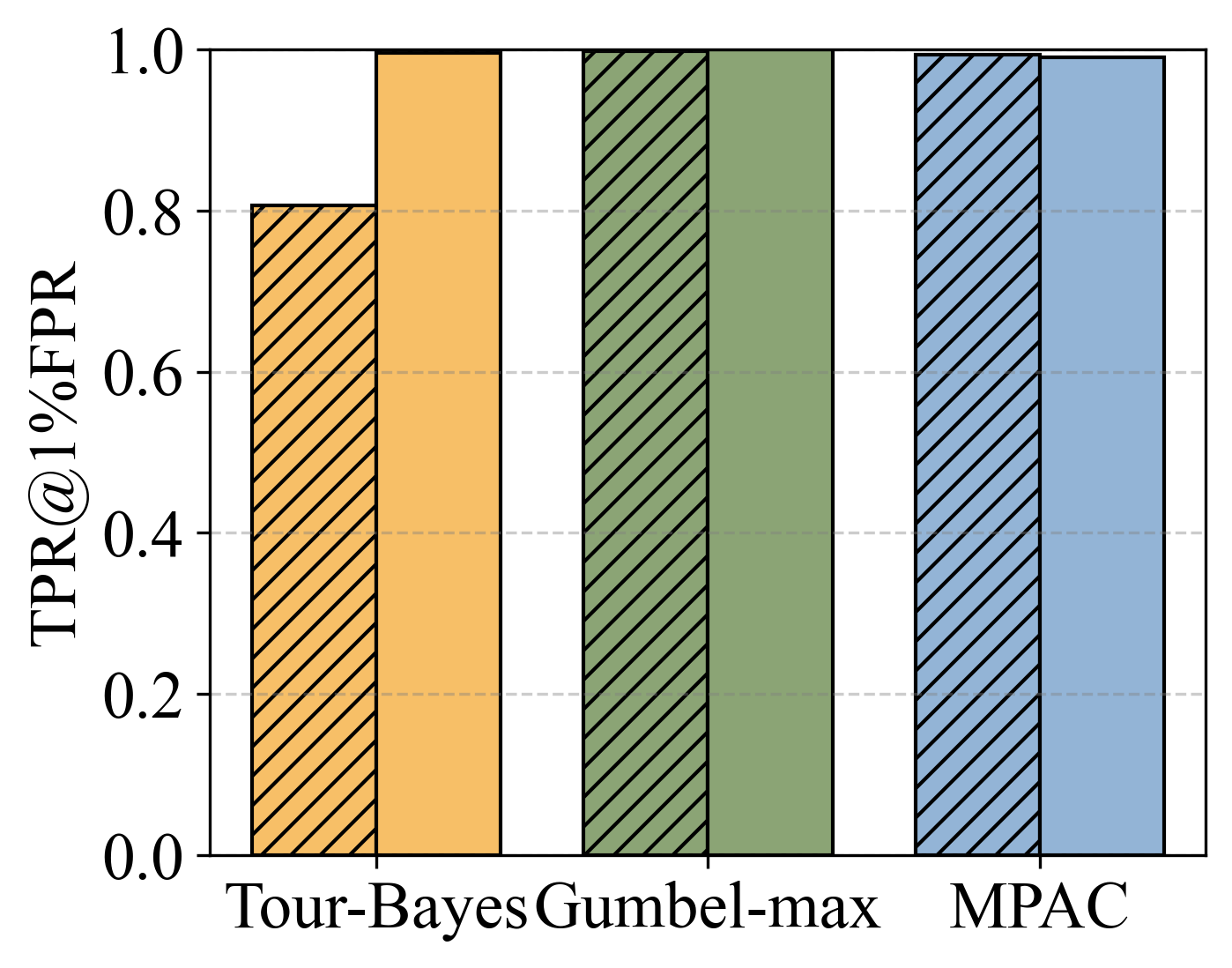}
    }
  \end{minipage}\hfill
  \begin{minipage}[t]{0.32\textwidth}
    \centering
    \subcaptionbox{$h$=4, 300 tokens}{
      \includegraphics[width=\linewidth]{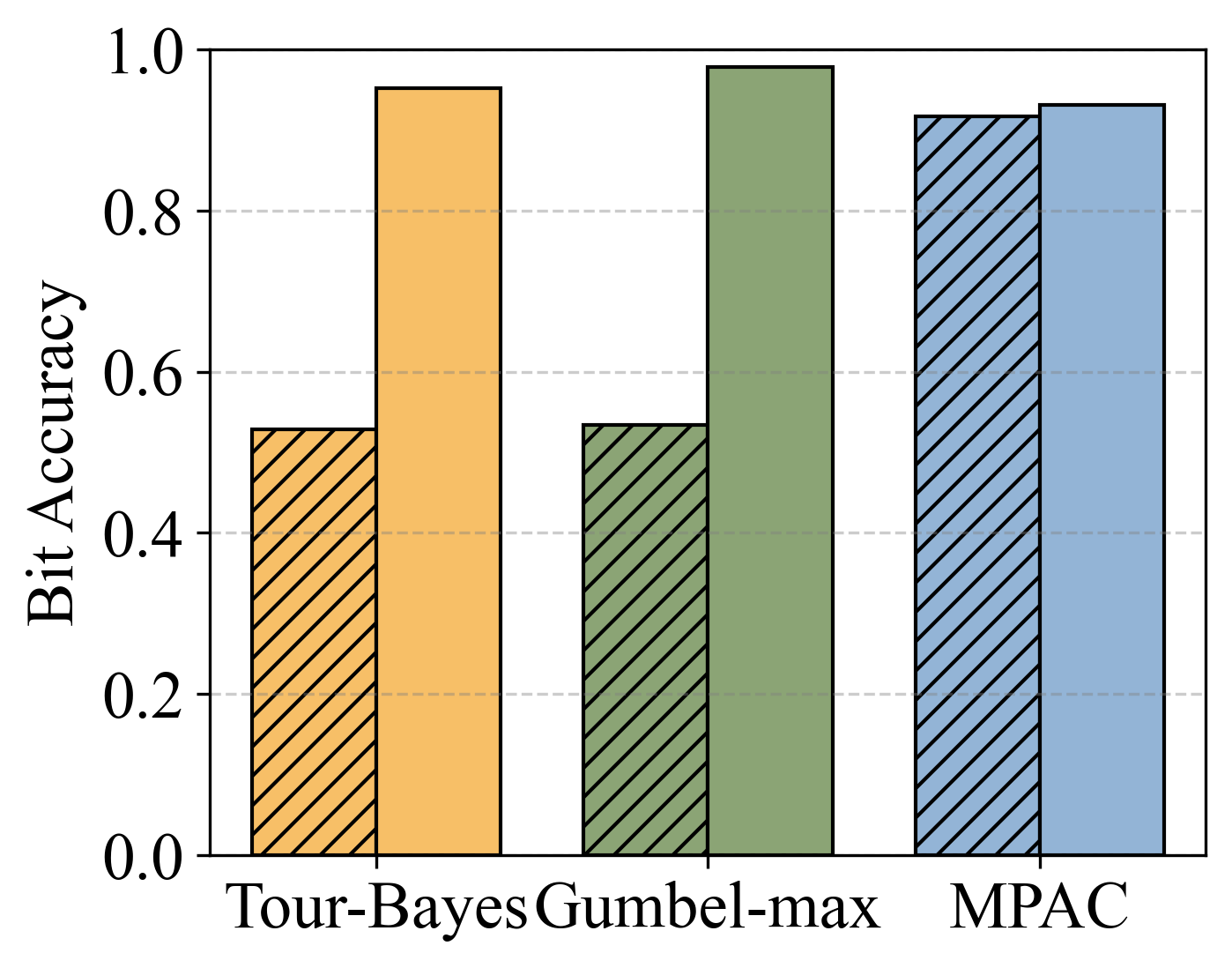}
    }
  \end{minipage}

\begin{minipage}[t]{0.32\textwidth}
    \centering
    \subcaptionbox{ $h$=4, 400 tokens}{
      \includegraphics[width=\linewidth]{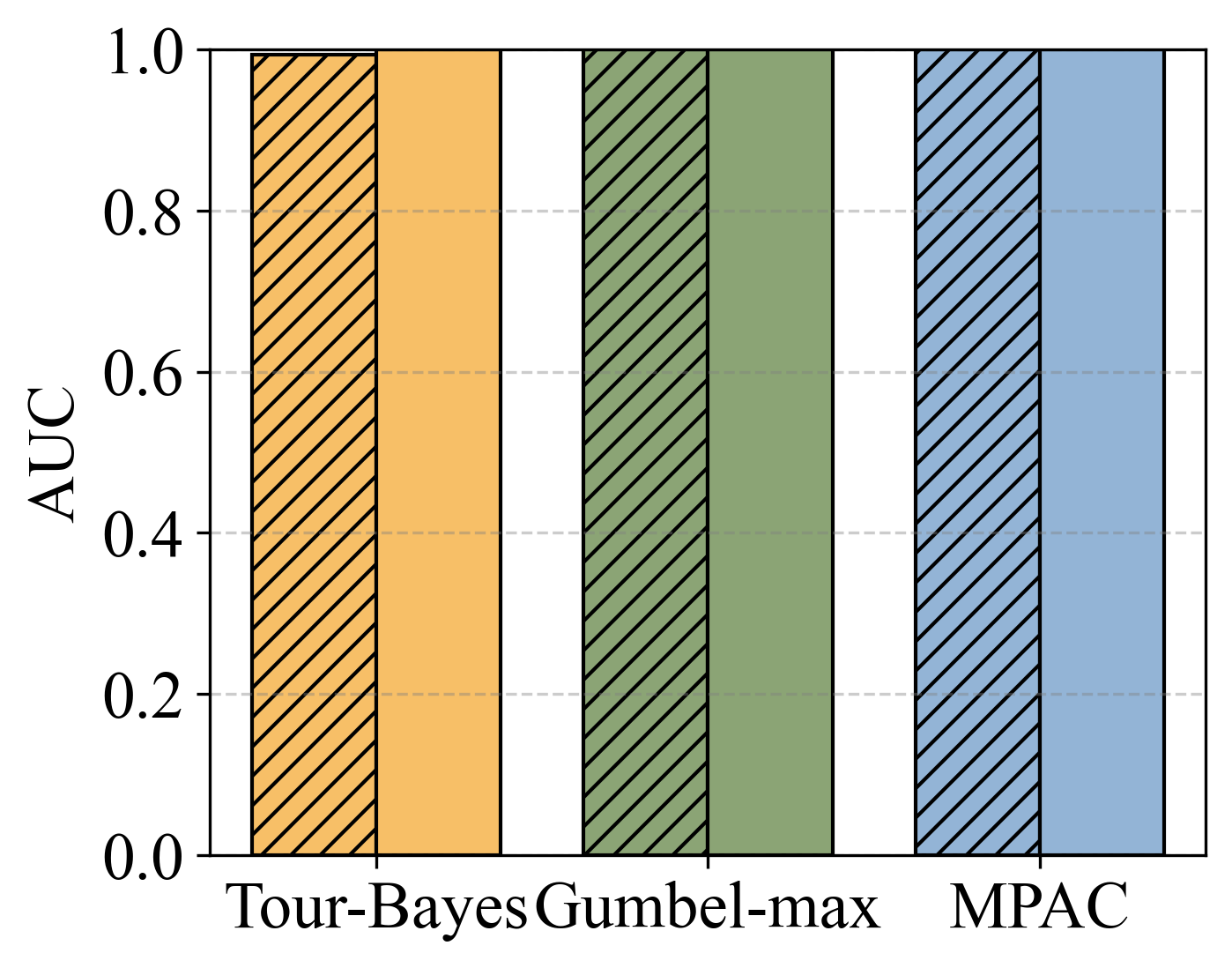}
    }
  \end{minipage}\hfill
  \begin{minipage}[t]{0.32\textwidth}
    \centering
    \subcaptionbox{$h$=4, 400 tokens}{
      \includegraphics[width=\linewidth]{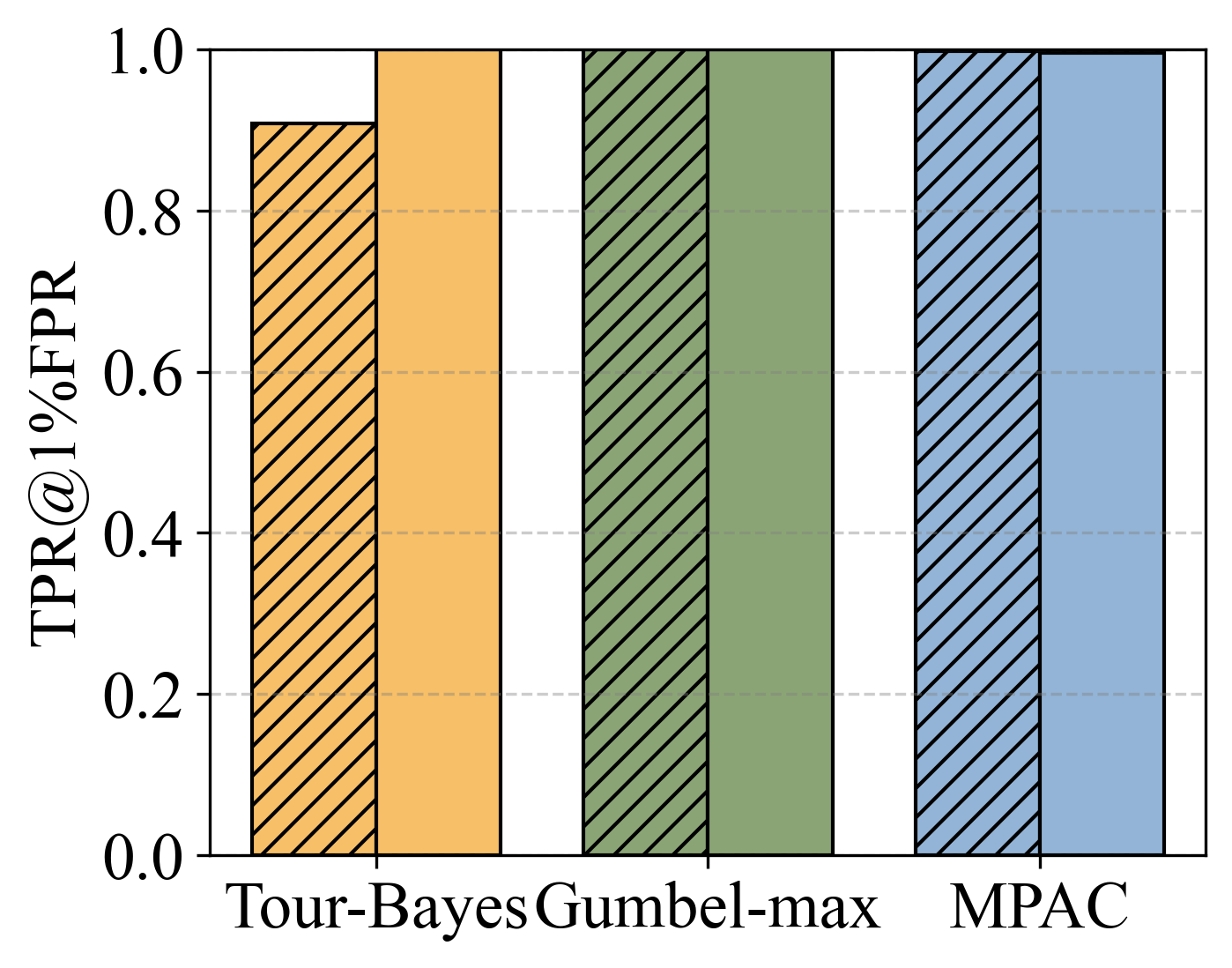}
    }
  \end{minipage}\hfill
  \begin{minipage}[t]{0.32\textwidth}
    \centering
    \subcaptionbox{ $h$=4, 400 tokens}{
      \includegraphics[width=\linewidth]{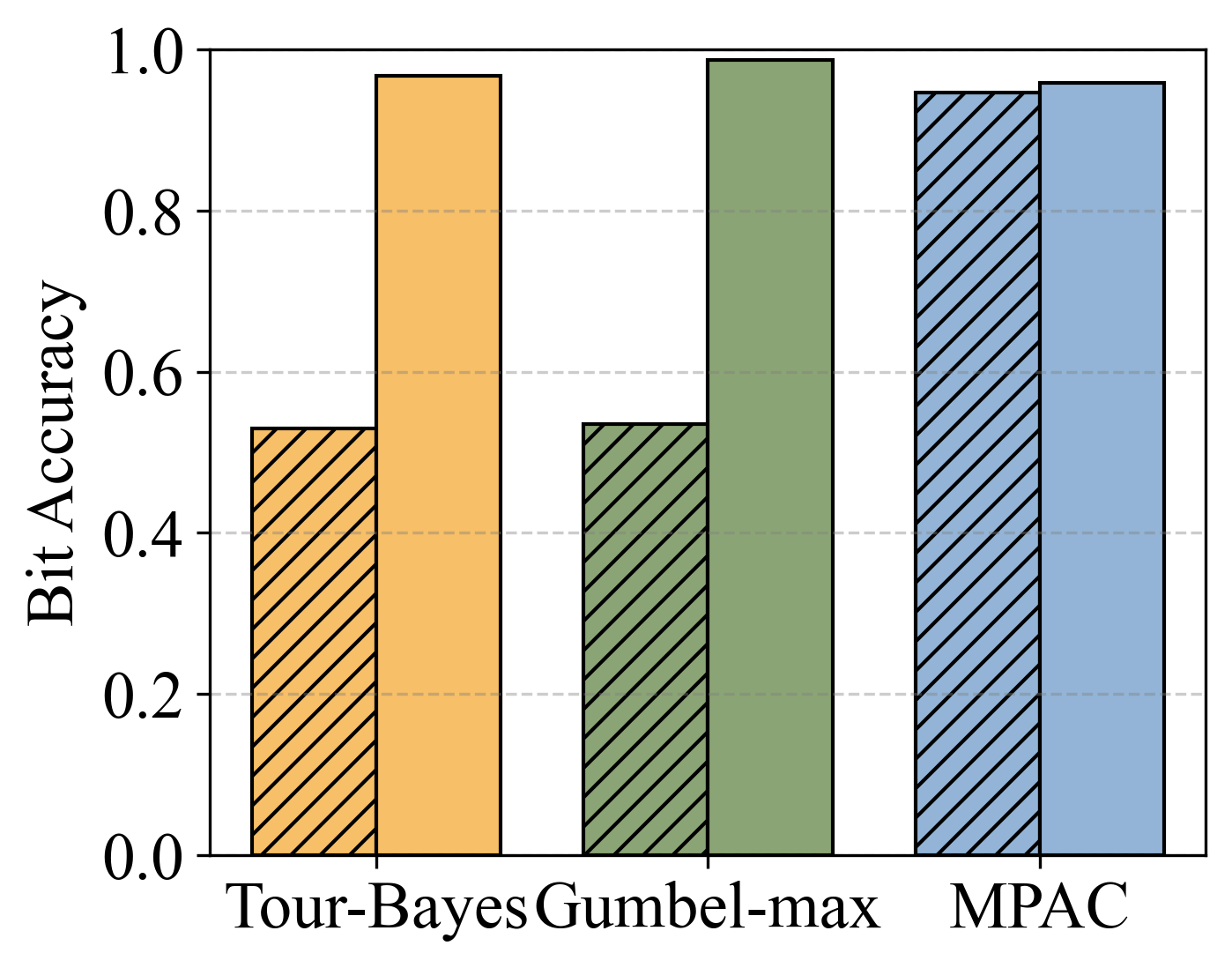}
    }
  \end{minipage}
  \caption{Detectability of MirrorMark across $h$=4. The setting is $m=2$ and $H=12$.}
  \label{fig:allocation5}
\end{figure*}


\end{document}